\theoremstyle{plain}
\newtheorem*{thm*}{Theorem}
\newtheorem{thm}{Theorem}[section]
\newtheorem{lem}[thm]{Lemma}
\newtheorem{cor}[thm]{Corollary}
\newtheorem{prop}[thm]{Proposition}
\newtheorem*{prop*}{Proposition}
\newtheorem*{lem*}{Lemma}
\theoremstyle{definition}
\newtheorem*{eg*}{Example}
\newtheorem*{egs*}{Examples}
\newtheorem*{Q*}{Question}
\theoremstyle{remark}
\newtheorem{rem}[thm]{Remark}
\DeclareMathOperator{\Tr}{Tr}
\numberwithin{equation}{section}
\newcommand{\ii}{\mathrm{i}}
\newcommand{\cH}{\mathcal{H}}
\newcommand{\ud}{\mathrm{d}}
\newcommand{\U}{\mathcal{U}}
\newcommand{\N}{\mathcal{N}}
\newcommand{\cL}{\mathcal{L}}
\newcommand{\cK}{\mathcal{K}}
\newcommand{\cU}{\mathcal{U}}
\newcommand{\cW}{\mathcal{W}}
\newcommand{\cF}{\mathcal{F}}
\newcommand{\cN}{\mathcal{N}}
\newcommand{\tr}{\mbox{Tr}}
\newcommand{\wt}{\widetilde}
\begin{document}
\title{On the characterisation of fragmented Bose-Einstein condensation\\ and its emergent effective evolution}

\author[Jinyeop Lee]{Jinyeop Lee}
\address[J.~Lee]{Mathematics Institute, Ludwig-Maximilian University of Munich, Theresienstrasse 39, Munich, 80333, Germany}
\email{lee@math.lmu.de}

\author[Alessandro Michelangeli]{Alessandro Michelangeli}
\address[A.~Michelangeli]{Institute for Applied Mathematics, and Hausdorff Center for Mathematics, University of Bonn, Endenicher Allee 60, D-53115 Bonn, GERMANY}
\email{michelangeli@iam.uni-bonn.de}

\maketitle

\begin{abstract}
		Fragmented Bose-Einstein condensates are large systems of identical bosons displaying \emph{multiple} macroscopic occupations of one-body states, in a suitable sense. The quest for an effective dynamics of the fragmented condensate at the leading order in the number of particles, in analogy to the much more controlled scenario for complete condensation in one single state, is deceptive both because characterising fragmentation solely in terms of reduced density matrices is unsatisfactory and ambiguous, and because as soon as the time evolution starts the rank of the reduced marginals generically passes from finite to infinite, which is a signature of a transfer of occupations on infinitely many more one-body states.
	In this work we review these difficulties, we refine previous characterisations of fragmented condensates in terms of marginals, and we provide a quantitative rate of convergence to the leading effective dynamics in the double limit of infinitely many particles and infinite energy gap.
\end{abstract}

\section{Introduction and background. BEC with fragmentation.}

 The goal of this work is to clarify and quantify certain emergent behaviours of the time evolution of large systems of identical bosons where Bose-Einstein condensation occurs in fragmented form. In this introductory Section we lay down the background for the analysis that is going to be be discussed in Sections \ref{sec:manybodyfrag} to \ref{sec:effectiveMFdynamics}.

 \subsection{Composite BEC}\label{sec:compositeBEC}~

 Since the very first realisations of Bose-Einstein condensation, a wide spectrum of experiments have been performed and studied where condensation occurs in a \emph{composite} form, as opposite to the ordinary \emph{simple} (i.e., almost complete and one-species) condensation. 
 
 In its simple form, Bose-Einstein condensation \cite{pethick02,pita-stringa-2016} (also BEC henceforth) is that inherently quantum phenomenon occurring in systems of a large number of identical bosonic particles at ultra-low temperature, and consisting of a macroscopic occupation of a single one-body state, thus with all particles displaying a common behaviour as if they were the same one: 
 apart from a (possibly, but not necessarily, very small) fraction of \emph{depletion}, the macroscopic occupation takes place in one single one-body state, with no internal of freedom. This is what ordinarily occurs with highly dilute and weakly interacting mono-atomic samples of alkali atoms, including the first two realisations of BEC with $^{87}$Rb \cite{Wieman-Cornel-Science-1995-BEC} and with $^{23}$Na \cite{Ketterle95} in 1995.

 On the other hand, composite condensation encompasses a variety of settings of non-simple BEC with internal structure of various sort. \emph{Condensate mixtures} \cite[Chapter 21]{pita-stringa-2016} consist of a gas formed by different species of interacting bosons, each of which is brought to condensation, thus with a macroscopic occupation of a one-body orbital for each species, and no inter-particle conversion. \emph{Quasi-spinor condensates} \cite{Ketterle_StamperKurn_SpinorBEC_LesHouches2001, Malomed2008_multicompBECtheory, Hall2008_multicompBEC_experiments,StamperKurn-Ueda_SpinorBose_Gases_2012} are gases of ultra-cold atoms that exhibit BEC and possess internal spin degrees of freedom which are often coupled to an external resonant micro-wave or radio-frequency radiation field, however, with no significant spin-spin internal interaction. In \emph{spinor condensates} (see the above references) the spin is an actual degree of freedom in interacting Bose gases of ultra-cold atoms where the spatial two-body interaction is mediated by a spin-spin coupling, and condensation manifests as a reversible spin-changing collisional coherence between particles. \emph{Fragmented condensates} \cite{Spekkens-Sipe-1998,Mueller-Ho-Ueda-Baym_PRA2006_fragmentation} are characterised by the occurrence of BEC as multiple macroscopic occupations of certain one-body states.

 The physical study of Bose-Einstein has triggered over the last two decades a voluminous corpus of mathematical investigations, through a multitude of different techniques, for the rigorous investigation of the ground state properties of Bose gases and the rigorous derivation of effective dynamical equations for the evolution of the condensate -- in view of the vastness of the subject, we refer to the monographs \cite{LSeSY-ober,Benedikter-Porta-Schlein-2015} and the references therein for a comprehensive discussion of \emph{simple} BEC, as well as to the latest improvements \cite{Pickl-RMP-2015,Benedikter-DeOliveira-Schlein_QuantitativeGP-2012_CPAM2015,Boccato-Cenatiempo-Schlein-2015_AHP2017_fluctuations,Chen-Lee-Lee-JMP2018-rateHartree,Lee-JSP2019-timedepRate,Bossman-Pavlovic-Pickl-Soffer-2020,BSS-2021,NNRT-2020}. Mathematical analyses of various types of composite condensations were recently produced for mixtures \cite{M-Olg-2016_2mixtureMF,AO-GPmixture-2016volume,Anap-Hott-Hundertmark-2017,DeOliveira-Michelangeli-2016,MNO-2017,Michelangeli-Pitton-2018,LeeJ-mixt2020-JMP2021,MS-2021-GrowthCorrMixt}, quasi-spinor condensates \cite{MO-pseudospinors-2017}, spinor condensates \cite{MO-2018-spin-spin}, and fragmented condensates \cite{DimFalcOlg21-fragmented}.

  \subsection{Simple vs fragmented BEC}~
  
  Despite the fact that fragmented BEC is the object of intensive physical study, experimental and theoretical \cite{Spekkens-Sipe-1998,Alon-Streltsov-Cederbaum_PLA2005_fragmentation,Alon-Streltsov-Cederbaum-Lorenz-PLA2006_fragmentation,Mueller-Ho-Ueda-Baym_PRA2006_fragmentation,Sakmann-etAl-2008-fragm,Bader-Fischer-2009,Fischer-Bader-PRA2010,Mullin-Sakhel-2012,Kang-Fischer-PRL2014,Fischer-Lode-Chatterjee-2015,Lode-PRA2016,Tomchenko-2019,Lode-Dutta-Leveque-2021}, one soon realises that its customary definition as a `macroscopic occupation of two or more one-body states' \cite[Section III.A]{legget} is deceptively simple on mathematical grounds. To discuss this (Section \ref{sec:fragmentation-marginals} below), let us preliminary revisit (here and in Section \ref{sec:asymptotics}) the mathematical formalisation of simple BEC and the naive generalisation to fragmented BEC.

 A pure state of a system of $N$ identical bosons in $d$ dimensions is described by a unit vector $\Psi_N$ belonging to a Hilbert space $\cH_N$ that has the form $\cH_N=\cH^{\otimes_{\mathrm{sym}} N}$, the symmetric tensor product of $N$ copies of the same single-particle Hilbert space $\cH$, or more generally, if the state is non-pure, by a density matrix (i.e., a normalised, positive, self-adjoint operator) $\gamma_N$ acting on $\cH_N$ (for a pure state, $\gamma_N=|\Psi_N\rangle\langle\Psi_N|$). When in practice the particle spin does not participate in the inter-particle interaction and hence effectively $\cH=L^2(\mathbb{R}^d)$, the bosonic symmetry manifests as the invariance of $\Psi_N(x_1,\dots,x_N)$ under exchange $x_j\leftrightarrow x_k$ of any pairs of variables $x_j,x_k\in\mathbb{R}^d$. For a generically mixed state, bosonic symmetry is tantamount as the invariance of the density matrix's integral kernel $\gamma_N(x_1,\dots,x_N;y_1,\dots,y_N)$ under simultaneous permutation of the $j$-th and $k$-th variable in each of the two sets of variables of the kernel (which is in fact equivalent, owing to the self-adjointness of $\gamma_N$, to the sole invariance of $\gamma_N(x_1,\dots,x_N;y_1,\dots,y_N)$ under permutation of $x$-variables only, or of $y$-variables only).

 To each $N$-body state one naturally associates the notion of \emph{occupation number}, intuitively speaking the fraction of the $N$ particles occupying the same one-body state, in the usual sense of reduced density matrices.

 To this aim, the operation of $k$-th body \emph{partial trace} is introduced, for a fixed $k\in\{1,\dots,N-1\}$, as the map
 \begin{equation}\label{eq:pt1}
  \begin{split}
   \mathcal{L}^1(\cH^{\otimes_{\mathrm{sym}} N})\;&\longrightarrow \mathcal{L}^1(\cH^{\otimes_{\mathrm{sym}} k})\,, \\
   T\;&\longmapsto\;\mathrm{Tr}_{[N-k]}(T)
  \end{split}
 \end{equation}
 between trace-class operators on $\cH^{\otimes_{\mathrm{sym}} N}$ and trace-class operators on $\cH^{\otimes_{\mathrm{sym}} k}$ defined by
 \begin{equation}\label{eq:PT-def1}
  \big\langle \varphi,\mathrm{Tr}_{[N-k]}(T)\psi\big\rangle_{\cH^{\otimes_{\mathrm{sym}} k}}\;=\;\sum_j \big\langle \varphi\otimes\xi_j,T\,\psi\otimes\xi_j\big\rangle_{\cH^{\otimes_{\mathrm{sym}} N}}\qquad\forall\,\varphi,\psi\in\cH^{\otimes_{\mathrm{sym}} k}\,,
 \end{equation}
 where $(\xi_j)_j$ is an orthonormal basis of $\cH^{\otimes_{\mathrm{sym}}(N-k)}$, having assumed that the one-body Hilbert space is separable, and with the customary notation $\langle\cdot,\cdot\rangle$ for the Hilbert space scalar product (with explicit indication of the underlying Hilbert space, when needed), conventionally anti-linear in the first entry and linear in the second. \emph{Equivalently}, $\mathrm{Tr}_{[N-k]}(T)$ is characterised by
 \begin{equation}\label{eq:PT-def2}
  \mathrm{Tr}\big( A\,\mathrm{Tr}_{[N-k]}(T)\big)\;=\; \mathrm{Tr}\big( (A\otimes\mathbbm{1}_{N-k}) T \big)\qquad\forall A\in\mathcal{B}(\cH^{\otimes_{\mathrm{sym}} k})\,,
 \end{equation}
 $\mathbbm{1}_{N-k}$ being the identity operator on $\cH^{\otimes_{\mathrm{sym}}(N-k)}$, $\mathcal{B}(\cdot)$ denoting the everywhere defined and bounded operators on the considered Hilbert space, and $\mathrm{Tr}(\cdot)$ denoting the trace of the considered operator, omitting for shortness the declaration of the Hilbert space that operator acts on. In fact, the partial trace preserves the trace and the positivity, hence it maps $N$-body into $k$-body density matrices: the operator
 \begin{equation}\label{eq:pt4}
  \gamma_N^{(k)}\;:=\;\mathrm{Tr}_{[N-k]}(\gamma_N)
 \end{equation}
 is called \emph{$k$-body reduced density matrices} (or \emph{$k$-marginal}) associated to $\gamma_N$.  
 One interprets \eqref{eq:PT-def1} or \eqref{eq:PT-def2} by saying that the partial trace of $\gamma_N$ amounts to tracing out all but $k$ degrees of freedom: in order to evaluate the expectation in the state $\gamma_N$ of an observable that acts non-trivially as $A$ on $k$ particles only, it suffices to know $\gamma_N^{(k)}$, as the desired expectation is equal to the expectation of $A$ on the $k$-body state $\gamma_N^{(k)}$. When $\cH=L^2(\mathbb{R}^d)$, the $k$-marginal's kernel takes the familiar form
 \begin{equation}\label{eq:tracing-out}
   \gamma_N^{(k)}(x_1,\dots,x_k;y_1,\dots,y_k)\;=\;\int_{\mathbb{R}^{3(N-k)}} \gamma_N(x_1,\dots,x_k,Z;y_1,\dots,y_k,Z)\,\ud Z
 \end{equation}
 for variables $x_1,\dots,x_k,y_1,\dots,y_k\in\mathbb{R}^d$.

 The one-body marginal $\gamma_N^{(1)}$ thus encodes the necessary information for evaluating expectations of one-body observables in the considered $N$-body state $\gamma_N$. As such, the canonical singular value decomposition of $\gamma_N^{(1)}$, namely
 \begin{equation}\label{eq:gammaN1exp}
  \gamma_N^{(1)}\;=\;\sum_{j=0}^\infty \,n_j^{(N)}|\varphi_j^{(N)}\rangle\langle\varphi_j^{(N)}|\,,\qquad 1\geqslant n_0^{(N)}\geqslant n_1^{(N)} \geqslant\cdots\geqslant 0\,,\quad \sum_{j=0}^\infty \,n_j^{(N)}=1
 \end{equation}
 for some surely existing orthonormal basis $(\varphi_j^{(N)})_{j=0}^\infty$ of $\cH$, leads to interpret each coefficient $n_j^{(N)}$ as the fraction, or occupation, of the total number of particles occupying the one-body states $\varphi_j^{(N)}$, clearly in the sense of reduced marginals, since the actual many-body state does not display in general a rigorous factorisation into one-body states.

 According to the standard Landau-Penrose-Onsager definition \cite{Landau-Lifshitz-5,penrose-1951,PO-1956} (see also \cite[Section III.A]{legget}) in terms of occupation numbers, physicists refer to \emph{simple BEC} as the occurrence where $n_0^{(N)}=O(1)$ and $n_j^{(N)}=o(1)$ for $j\neq 0$ in the large parameter $N$: only one single one-body state is macroscopically occupied, the occupation numbers of all other one-body states being negligible. They speak instead of \emph{fragmented BEC} when there are $\ell\geqslant 2$ one-body states that are macroscopically occupied, namely $n_0^{(N)},\dots,n_{\ell-1}^{(N)}=O(1)$ and $n_j^{(N)}=o(1)$ for $j\in\{\ell,\ell+1,\ell+2,\dots\}$. The fraction $1-n_0$ (for simple BEC) or $1-\sum_{j=0}^{\ell-1}n_j$ (for fragmented BEC)  of non-condensed particles expresses the depletion of the system. Depletion-less BEC is said to be \emph{complete} (or 100\%).

 Mathematically this is customarily monitored in some kind of rigorous limit of infinitely many particles, which replaces the physical notion of `large' $N$: thus, (asymptotically) simple BEC would correspond to
 \begin{equation}\label{eq:defSimpleBEC}
  n_0^{(N)}\xrightarrow[]{\,N\to\infty\,}\,n_0\,\in\,(0,1)\,,\qquad n_j^{(N)}\xrightarrow[]{\,N\to\infty\,}\,0\,,\; j\neq 0\,\qquad (\textsc{simple bec})\,,
 \end{equation}
 and (asymptotically) fragmented BEC would correspond to
  \begin{equation}\label{eq:defFragmBEC}
  \begin{split}
   n_j^{(N)}&\xrightarrow[]{\,N\to\infty\,}\,n_j\in(0,1)\qquad \textrm{for }j\in\{0,\dots,\ell-1\}\,, \\
   n_j^{(N)}&\xrightarrow[]{\,N\to\infty\,}\,0\qquad\qquad\qquad \textrm{for }j\in\{\ell,\ell+1,\ell+2,\dots\}
  \end{split}\qquad\qquad (\textsc{fragmented bec})\,.
 \end{equation}

 \subsection{Asymptotic definitions}\label{sec:asymptotics}~
 
 The usage of the limit $N\to\infty$ in \eqref{eq:defSimpleBEC}-\eqref{eq:defFragmBEC} encompasses a slightly excessive generality that one customarily restricts by adopting an equally familiar but stronger definition of BEC than the above physical jargon of $O(1)$-size of one or more occupation numbers. Indeed, whereas at large, but \emph{fixed} $N$ (the actual number of particles in an experiment with Bose-Einstein condensates), having all the $n_j^{(N)}$'s negligible for $j\neq 0$ and the sole $n_0^{(N)}\sim n_0$ of finite (non-zero) size, say, $n_0=1$, is immediately interpreted by saying that the sole one-body state $\varphi_0^{(N)}$ is macroscopically occupied, and $\gamma_N^{(1)}\approx |\varphi_0^{(N)}\rangle\langle\varphi_0^{(N)}|$, instead as $N\to\infty$ \eqref{eq:defSimpleBEC} does not necessarily imply that $\gamma_N^{(1)}\to|\varphi\rangle\langle\varphi|$ for some $\varphi\in\cH$. (Example: $\cH=\ell^2(\mathbb{N})$ and $\gamma_N^{(1)}=(1-\frac{1}{N})|e_N\rangle\langle e_N|+\frac{1}{N}|e_1\rangle\langle e_1|$, where $(e_j)_{j\in\mathbb{N}}$ is the canonical basis of $\ell^2(\mathbb{N})$.)

 For this reason the mathematically convenient and physically meaningful definition of 100\% simple BEC onto the one-body state $\varphi$, asymptotically in $N$, is rather
  \begin{equation}\label{eq:defSimpleBEC-2}
   \gamma_N^{(1)}\xrightarrow[]{\,N\to\infty\,}\,|\varphi\rangle\langle\varphi|\,,\qquad\textrm{equivalently,}\qquad \langle\varphi,\gamma_N^{(1)}\varphi\rangle_\cH\xrightarrow[]{\,N\to\infty\,}\,1 \qquad (\textsc{100\% simple bec}).
  \end{equation}
 This definition, stronger than \eqref{eq:defSimpleBEC}, retains the physical interpretation: $\varphi$ is a normalised vector in $\cH$ that represents the condensate state, namely the state of complete occupation of the assembly of bosons, and is customarily referred to as the \emph{order parameter} or \emph{one-body orbital} of the condensate. The above asymptotic vicinity of the one-body marginal to the rank-one projection onto $\varphi$ can be equivalently formulated in \emph{any} topology ranging from the weak operator to the trace norm convergence: indeed \cite{am_equivalentBEC,kp-2009-cmp2010},
 \begin{equation}\label{eq:equivalent-BEC-control-1compontent}
1-\langle\varphi,\gamma_N^{(1)}\varphi\rangle\;\leqslant\;\mathrm{Tr}\big|\,\gamma_{N}^{(1)}-|\varphi\rangle\langle\varphi|\,\big|\;\leqslant\;2\sqrt{1-\langle\varphi,\gamma_N^{(1)}\varphi\rangle}\,.
\end{equation}
 Moreover, asymptotic 100\% simple BEC can be equivalently characterised in the above asymptotic sense at the level of any $k$-body marginal: indeed \cite{LSe-2002,am_equivalentBEC,kp-2009-cmp2010},
 \begin{equation}\label{eq:1klevel}
  1-\langle\varphi,\gamma_N^{(1)}\varphi\rangle_{\cH}\;\leqslant\;1-\langle\varphi^{\otimes k},\gamma_N^{(k)}\varphi^{\otimes k}\rangle_{\cH^{\otimes k}}\;\leqslant\;k\big(1-\langle\varphi,\gamma_N^{(1)}\varphi\rangle_{\cH}\big)\,,\qquad k\in\{1,\dots,N\}\,.
 \end{equation}
 Thus, 100\% simple BEC is equivalent to the asymptotic factorisation $\gamma_N^{(k)}\approx|\varphi^{\otimes k}\rangle\langle\varphi^{\otimes k}|$ as $N\to\infty$, at any fixed level $k$. All this encodes the informal idea that the many-body state of complete simple condensation is a vector $\Psi_N\in\cH_N$ essentially of the form $\varphi^{\otimes N}$, but such a factorisation only makes sense at the level of reduced density matrices and in general the difference $\Psi_N-\varphi^{\otimes N}$ does not vanish at all in $\cH_N$ as $N\to\infty$ (a possibly complicated pattern of correlations is present in $\Psi_N$ that is not detected at the level of marginals \cite{Lewin-Nam-Serfaty-Solovej-2012_Bogolubov_Spectrum_Interacting_Bose}).

 When one comes to formalise \emph{fragmented condensation} asymptotically in $N$, say, for simplicity and with no essential loss of generality, with zero depletion, the first natural analogue to \eqref{eq:defSimpleBEC-2}, in view of \eqref{eq:gammaN1exp} and \eqref{eq:defFragmBEC}, is
 \begin{equation}\label{eq:defFragmBEC-2}
  \gamma_N^{(1)}\xrightarrow[]{\,N\to\infty\,}\,\sum_{j=0}^{\ell-1}n_j|\varphi_j\rangle\langle\varphi_j|\qquad (\textsc{100\% $\ell$-level fragmented bec})
 \end{equation}
 for some orthonormal vectors (multiple one-body orbitals) $\varphi_0,\dots\varphi_{\ell-1}\in\cH$ and non-zero weights (occupation numbers) $n_0,\dots,n_{\ell-1}\in(0,1)$ summing up to 1. Again, \eqref{eq:defFragmBEC-2} is stronger than \eqref{eq:defFragmBEC} precisely as \eqref{eq:defSimpleBEC-2} is stronger than \eqref{eq:defSimpleBEC}. Fragmentation thus emerges as the asymptotic rank-$\ell$ one-body marginal for $\ell\geqslant 2$.

 \subsection{Fragmentation at the many-body and at the marginals level}\label{sec:fragmentation-marginals}~

 Meaning fragmentation as in \eqref{eq:defFragmBEC-2} (or \eqref{eq:defFragmBEC}) is directly inspired from physical reasoning, but it is common sense that it encompasses a too broad variety of states. Consider, for instance, the two many-body bosonic states
 \begin{equation}\label{eq:twoexamples}
  \Xi_N\,:=\,\varphi_{1}^{\otimes {N/2}}\vee\varphi_{2}^{\otimes {N/2}}\,,\qquad \Upsilon_N\,:=\,\frac{1}{\sqrt{2}}\big(\varphi_1^{\otimes N}+\varphi_2^{\otimes N}\big)\,,
 \end{equation}
 where  $\varphi_1,\varphi_2\in\cH$, $\langle\varphi_m,\varphi_n\rangle_{\cH}=\delta_{mn}$, $N\in 2\mathbb{N}$, and where `$\vee$' is the customary notation for the overall symmetric tensor product between the two factors. 
 The two $N$-body vectors above have the same rank-2 one-body marginal
 \begin{equation}\label{eq:stupmarg}
  \frac{1}{2}\,|\varphi_1\rangle\langle\varphi_1|+\frac{1}{2}\,|\varphi_2\rangle\langle\varphi_2|\,,
 \end{equation}
 yet $\Xi_N$ and $\Upsilon_N$ are significantly different in nature: $\Upsilon_N$ is a mere superposition of two uncorrelated $N$-body condensates with 100\% simple BEC respectively on $\varphi_1$ and $\varphi_2$, whereas $\Xi_N$ correlates an equal number $N/2$ of identical $\varphi_1$-bosons and identical $\varphi_2$-bosons with the precise amount of correlations dictated by the overall bosonic statistic. Despite both satisfying \eqref{eq:defFragmBEC-2}, only $\Xi_N$ supports a meaningful interpretation of fragmented BEC. \emph{This is consistent with the customary preparation of fragmented condensates in experiments}, where the two one-body states $\varphi_1,\varphi_2$ macroscopically occupied by the $N$ identical bosons may be ground states of spatially well separated traps, or energetically well separated hyperfine levels, or the like.

 A related signature of the excessive broadness of definition \eqref{eq:defFragmBEC-2} is the lack of a counterpart for higher order $k$-body marginals, like a control such as \eqref{eq:1klevel}. Asymptotically rank-$1$ one-body marginals necessarily imply asymptotically rank-$1$ $k$-body marginals, but this ceases to be true for rank-$\ell$ one-body marginals when $\ell\geqslant 2$. In \eqref{eq:twoexamples}, $\Upsilon_N$ has rank-2 $k$-body marginals
 \[
  \gamma_{\Upsilon_N}^{(k)}\;=\;\frac{1}{2}\big( |\varphi_1^{\otimes k}\rangle\langle \varphi_1^{\otimes k}| + |\varphi_2^{\otimes k}\rangle\langle \varphi_2^{\otimes k}| \big)
 \]
 for every $k\in\{1,\dots,N-1\}$, as is immediate to see by applying \eqref{eq:tracing-out} to
 \[
  \gamma_{\Upsilon_N}\;=\;|\Upsilon_N\rangle\langle\Upsilon_N|\;=\;\frac{1}{2}\Big(\big|\varphi_1^{\otimes N}\big\rangle\big\langle \varphi_1^{\otimes N}\big|+\big|\varphi_1^{\otimes N}\big\rangle\big\langle \varphi_2^{\otimes N}\big|+\big|\varphi_2^{\otimes N}\big\rangle\big\langle \varphi_1^{\otimes N}\big|+\big|\varphi_2^{\otimes N}\big\rangle\big\langle \varphi_2^{\otimes N}\big|\Big)
 \]
 and exploiting the orthogonality $\varphi_1\perp\varphi_2$. Instead $\Xi_N$ above has $k$-body marginals whose rank is finite and increases with $k$ (see \eqref{eq:gammak} and \eqref{eq:rankorig} below).

 The excessive scope of the elementary definition \eqref{eq:defFragmBEC-2} (or \eqref{eq:defFragmBEC}) has been known since long \cite{Pethick-Pitaevskii-2000}, and attempts of various sort have been made to characterise fragmentation in BEC in terms of the reduced density matrices of the system, aware of the above-mentioned difficulties. A typical idea is to restrict such a notion to states that, in the above language, display a finite and $\geqslant 2$ rank at the level of each $k$-body marginal. 
 This is implicitly built in in the context of approximate numerical methods for many-body time evolution within the MCTDHB scheme (multi-configurational time-dependent Hartree for bosons) \cite{Alon-Streltsov-Cederbaum_PLA2005_fragmentation,Alon-Streltsov-Cederbaum-Lorenz-PLA2006_fragmentation,Alon-Streltsov-Cederbaum-PRA2008fragm,Lode-PRA2016,Lode-Dutta-Leveque-2021}, and yet it does not factor out a state like $\Upsilon_N$ in \eqref{eq:twoexamples}. A recent refinement was proposed in \cite{DimFalcOlg21-fragmented} to mathematically characterise fragmented BEC as the occurrence of $k$-body marginals all with finite and non-constant-in-$k$ (in practice: increasing) rank, with also $\mathrm{rank}(\gamma_N^{(1)})\geqslant 2$. Upon inspection (Proposition \ref{prop:reduced-marginals-time-zero}), this definition too leaves room to an excess of generality, encompassing many-body states that do \emph{not} correspond to feasible preparations of fragmented condensates, even if it succeeds in including the fragmented state $\Xi_N$ and excluding the manifestly non-fragmented state $\Upsilon_N$ considered in \eqref{eq:twoexamples} (as well as excluding other similar, non-fragmented states like \cite[Eq.~(2.6)]{DimFalcOlg21-fragmented}).

 This discussion shows that, apart from the intuitive idea of multiple macroscopic occupations, \emph{the notion of fragmented BEC is not so naturally and consistently definable at the level of marginals}, as opposite to the notion of simple BEC. This has ultimately to do with the precise amount of inter-particle correlations that one actually admits as \emph{fragmentation}. Out of the states \eqref{eq:twoexamples},  $\Upsilon_N$ has no correlations, at any $k$-particle level, 
 whereas $\Xi_N$ has many, yet only those imposed by the overall bosonic symmetry.

 \subsection{Many-body state fragmentation characterised in terms of its preparation}\label{sec:fragmented-manybody}~
 
 It is therefore safer, for the purposes of the present analysis, to mathematically formalise fragmented condensates with reference to the actual physical preparation of certain many-body states. This brings our focus on a class of vectors of $\cH_N$ that are \emph{close in norm} to the prototypical, exactly fragmented states of the form
 \begin{equation}\label{eq:genfrag}
  \begin{split}
     & \Psi^{(\varphi_1,N_1;\dots;\varphi_\ell,N_\ell)}_N\;\equiv\;|\varphi_1,N_1;\dots;\varphi_\ell,N_\ell\rangle\;:=\;\varphi_1^{\otimes N_1}\vee\cdots\vee\varphi_\ell^{\otimes N_\ell}\,, \\
     & \qquad N_1,\dots,N_\ell\in\mathbb{N}\,,\quad N_1+\cdots+N_\ell=N\,, \quad \frac{N_j}{N}\,\xrightarrow[]{\,N\to\infty\,}\,n_j\in(0,1)\,,\;j\in\{1,\dots,\ell\}\,,\\
     & \qquad \varphi_1,\dots,\varphi_\ell\in\cH\,,\quad \langle\varphi_m,\varphi_n\rangle_\cH\,=\,\delta_{nm}
  \end{split}
 \end{equation}
  for some $\ell\in\mathbb{N}$, $\ell\geqslant 2$. A many-body state of type \eqref{eq:genfrag} unambiguously describes fragmented condensation of an assembly of indistinguishable bosons with asymptotic occupation numbers $n_1,\dots,n_\ell$ respectively onto the one-body states $\varphi_1,\dots,\varphi_\ell$. 
  The computation of the corresponding marginals (see, e.g., \cite[Proposition 2.6]{DimFalcOlg21-fragmented}, as well as Proposition \ref{prop:reduced-marginals-time-zero} and Remark \ref{rem:general-ell} below) shows that the $k$-marginal of \eqref{eq:genfrag} has the form
  \begin{equation}\label{eq:combkmarg}
   \sum_{\substack{ \mathbf{a}\equiv(a_1,\dots,a_\ell) \\ a_j\in\{1,\dots,N_j\} \\ a_1+\cdots+a_\ell= k \\  }}\binom{N}{k}^{-1}\binom{N_1}{a_1}\cdots\binom{N_\ell}{a_\ell}\,\big|\varphi_1^{\otimes a_1}\vee\cdots\vee\varphi_\ell^{\otimes a_\ell}\big\rangle\big\langle\varphi_1^{\otimes a_1}\vee\cdots\vee\varphi_\ell^{\otimes a_\ell}\big|
  \end{equation}
  (hence with \emph{finite} rank increasing with $k$), and in particular the $1$-marginal of \eqref{eq:genfrag} is
  \begin{equation}\label{eq:g1frag}
   \sum_{j=1}^\ell \frac{N_j}{N}|\varphi_j\rangle\langle\varphi_j|\;\xrightarrow[]{\,N\to\infty\,}\;\sum_{j=1}^\ell n_j|\varphi_j\rangle\langle\varphi_j|\,,
  \end{equation}
  that is, analogously to \eqref{eq:stupmarg}, 
  the same rank-$\ell$ density matrix obtained as $1$-marginal of the (clearly non-fragmented) many-body state $\sum_{j=1}^\ell\sqrt{\frac{N_j}{N}}\,\varphi_j^{\otimes N}$.

  In \eqref{eq:genfrag} condensation is exactly depletion-less and the sole amount of inter-particle correlations is the one dictated by the bosonic symmetry. In a state close in norm to a state of type \eqref{eq:genfrag} a small amount of depletion is possible, as well as a pattern of additional correlations. Recall indeed that for any two vectors $\Theta_N,\Lambda_N\in\cH_N$
  \begin{equation}\label{eq:vicinities}
   \mathrm{Tr}\big|\gamma_{\Theta_N}^{(k)}-\gamma_{\Lambda_N}^{(k)}\big|\;\leqslant\;2\,\|\Theta_N-\Lambda_N\|_{\cH_N}\qquad \forall k\in\{1,\dots,N\}\,.
  \end{equation}
 Thus, any many-body state which is $\varepsilon$-close in norm to \eqref{eq:genfrag}, having $k$-body marginals that are corrections to \eqref{eq:combkmarg}
 in trace norm, has still a natural interpretation of fragmentation onto the levels $\varphi_1,\dots,\varphi_\ell$, even if the marginals have possibly \emph{infinite} rank.

 The latter consideration also clarifies that, unlike previous attempts \cite{DimFalcOlg21-fragmented}, there is nothing conceptually stringent in requiring that fragmented BEC be `finite' (meaning: of finite rank at the level of each marginal): finite, $\varepsilon$-small (hence irrelevant) occupations of an infinite amount of one-particle states produce an infinite rank in $\gamma_N^{(1)}$, and yet $\gamma_N^{(1)}$ may be $\varepsilon$-close to \eqref{eq:g1frag}.
 
 The preparation of a bosonic system into a state of fragmented condensation approximately close to \eqref{eq:genfrag} involves a confinement of the sample so as to make the bosons to macroscopically occupy two or more single-particle states (wells of a spatial trap, hyperfine levels, and the like) which are mutually orthogonal and separated energetically. Intuitively speaking, the larger the separation, the less correlated the many-body state, namely the closer it is to \eqref{eq:genfrag}. When the level separation is too small, and the other experimental conditions for condensation are still matched, simple BEC (typically: in the ground state of the trap) becomes more favourable than fragmentation.

 \subsection{Fragmentation with finite or with infinite gap}\label{sec:frag-finInfinGap}~
 
 The latter consideration suggests that in certain aspects of the rigorous analysis of fragmentation, it is convenient to introduce one further asymptotic parameter, beside the number $N$ of particles, namely the gap $\nu\geqslant 0$ (an energy gap $\hbar\nu$, in units $\hbar=1$) at the bottom of the spectrum of the one-particle Hamiltonian $\mathfrak{h}_\nu$ acting on $\cH$, with the assumption that 
 \begin{equation}\label{eq:gapfirst}
  \nu\;=\;\inf\big(\sigma(\mathfrak{h}_\nu)\setminus\inf\sigma(\mathfrak{h}_\nu)\big)-\inf\sigma(\mathfrak{h}_\nu)
 \end{equation}
 ($\sigma(\mathfrak{h}_\nu)$ denoting the spectrum of $h_\nu$),
 assuming tacitly that $\inf\sigma(\mathfrak{h}_\nu)\geqslant C\in\mathbb{R}$ uniformly in $\nu$ (e.g.,  $\inf\sigma(\mathfrak{h}_\nu)=0$ $\forall\nu\geqslant 0$). In the standard setting $\cH=L^2(\mathbb{R}^d)$,  $\mathfrak{h}_\nu$ has typically the form $-\Delta+U_\nu$ (in units $\hbar=2m=1$), or variants of it, for some real-valued confining potential $U_\nu$. A typical many-body Hamiltonian is then
 \begin{equation}\label{eq:genericHam}
  \sum_{j=1}^N \mathfrak{h}_{\nu,j}\,+\!\!\sum_{1\leqslant j<r\leqslant N }\!\!V(x_j-x_r)\,,
 \end{equation}
 where $x_1,\dots,x_N\in\mathbb{R}^d$, $V$ is a real-valued two-body interaction potential, and $\mathfrak{h}_{\nu,j}$ acts non-trivially (i.e., other than the identity) as $\mathfrak{h}_\nu$ on the $j$-th variable (for instance, $\mathfrak{h}_{\nu,j}=-\Delta_{x_j}+U_\nu(x_j)$). The preparation of fragmented BEC in a Bose system governed by a Hamiltonian of the type \eqref{eq:genericHam} requires, as said, a sufficiently large gap $\nu$. This allows for the investigation of both the meaningful regimes $N\to\infty$, $\nu>0$ (finite and large gap), and $N\to\infty$, $\nu\to\infty$ (infinite gap).

 Interacting bosons in a deep, suitable scaled double-well potential were recently studied in \cite{Rougerie-Spehner-2017}, and in several references therein, where in the limit of asymptotically large spatial separation of the two wells the 50\%-50\% occupation of a left-localised one-particle state $\varphi_L$ and a right-localised one-particle state $\varphi_R$, corresponding to a many-body state $\varphi_L^{\otimes N/2}\vee\varphi_R^{\otimes N/2}$ as $N\to\infty$ was proved to be energetically favoured as compared to the regime of simple occupation of the delocalised one-body state $\frac{1}{\sqrt{2}}(\varphi_L+\varphi_R)$, expected when the two wells are spatially close, and corresponding to a many-body state $(\frac{\varphi_L+\varphi_R}{\sqrt{2}})^{\otimes N}$.

  \subsection{Evolution of fragmentation in a regime of finite or of infinite gap}\label{sec:Evolutiongeneral}~

 One further natural and fundamental question concerns the \emph{time evolution} of a fragmented state like (or close to) \eqref{eq:genfrag}, governed by a Hamiltonian of type \eqref{eq:genericHam}, in either alternatives of finite or infinite gap $\nu$, as $N\to\infty$. This mirrors the much more deeply understood counterpart problem of the persistence in time of simple BEC, namely from $\gamma_N^{(1)}\approx|\varphi\rangle\langle\varphi|$ to $\gamma_{N,t}^{(1)}\approx|\varphi_t\rangle\langle\varphi_t|$ at later times $t\geqslant 0$ and large $N$, and the rigorous derivation of the effective (non-linear) evolution equation for $\varphi_t$ (see \cite{BEGMY-2002,EY-2001,ESY-2006,ESYinvent,RS-2007,ESY-2008,kp-2009-cmp2010,Pickl-JSP-2010,Pickl-LMP-2011,chen-lee-2010-JMP2011,Chen-Lee-Schlein-2011,Pickl-RMP-2015,Benedikter-DeOliveira-Schlein_QuantitativeGP-2012_CPAM2015,Benedikter-Porta-Schlein-2015,Boccato-Cenatiempo-Schlein-2015_AHP2017_fluctuations,Chen-Lee-Lee-JMP2018-rateHartree,Lee-JSP2019-timedepRate,Bossman-Pavlovic-Pickl-Soffer-2020,NNRT-2020} and the references therein), and similarly the persistence in time of mixture/quasi-spinor/spinor BEC \cite{M-Olg-2016_2mixtureMF,AO-GPmixture-2016volume,Anap-Hott-Hundertmark-2017,DeOliveira-Michelangeli-2016,MO-pseudospinors-2017,MNO-2017,MO-2018-spin-spin,Michelangeli-Pitton-2018,LeeJ-mixt2020-JMP2021,MS-2021-GrowthCorrMixt}. Fragmentation brings in an amount of difficulties and additional levels for such a question.

 Concretely speaking, one investigates the Schr\"{o}dinger-evoluted state $\Psi_{N,\nu,t}=e^{-\ii t H_{N,\nu}}\Psi_{N,0}$, at times $t\geqslant 0$, of an initial state $\Psi_{N,0}$ of fragmented BEC, approximately of the form \eqref{eq:genfrag}, governed by a many-body Hamiltonian $H_{N,\nu}$ of the type \eqref{eq:genericHam}, suitably re-scaled so as to control the size $\nu$ of the one-body gap and the $N$-dependence of the inter-particle interaction. The re-scaling is an actual caricature of a genuine, yet non-controllable at present, thermodynamic limit, and is needed to keep physical meaningfulness of the picture as $N\to\infty$ or $\nu\to\infty$: in particular \cite{am_GPlim,Benedikter-Porta-Schlein-2015}, the interaction potential $V$ is re-scaled as $V_N$ in order to formally re-size the formal $O(N^2)$-contribution of the potential terms of $H_{N,\nu}$ to the actual formal $O(N)$-contribution of the kinetic terms, and further, possibly, in order to mimic various physically realistic regimes of high dilution, weak interaction, short effective range, etc.

 Based on physical heuristics one informally expects that in a large-but-finite gap regime fragmentation in (or close to) the form \eqref{eq:genfrag} undergoes a degree of deterioration due to the dynamical emergence of further inter-particle correlations -- e.g., tunnelling between wells in the spatial confinement of the sample -- which yields at later times a non-zero, although typically still very small, occupation of an infinity of further one-body states, hence infinite-rank reduced $k$-marginals at every order. Persistence of finite-rank fragmentation is instead expected in the opposite asymptotic regime of infinite gap.

 Identifying \emph{effective evolution equations} for the multiple order parameters of the fragmentation is problematic, and so too is a \emph{quantitative} control of the rate of convergence in $N$ and $\nu$ of the evolved marginals $\gamma_{N,t}^{(k)}$ to their asymptotic version. Both such difficulties are ultimately related to the large amount of statistical correlations present in a many-body state with fragmented BEC.

 Within the above-mentioned MCTDHB scheme one imposes the persistence of fragmented BEC at later times with a fixed number of one-body orbitals, thus prescribing $\mathrm{rank}(\gamma_{N,\nu,t}^{(1)})=\mathrm{rank}(\gamma_{N}^{(1)})$, thereby formally deriving a self-consistent system of coupled non-linear Schr\"{o}dinger equations, each for one of the considered orbitals \cite{Alon-Streltsov-Cederbaum-Lorenz-PLA2006_fragmentation,Alon-Streltsov-Cederbaum-PRA2008fragm,Lode-PRA2016}. Such an Ansatz is somewhat arbitrary, and in \cite{DimFalcOlg21-fragmented} it was objected that the error made by imposing an evolved fragmentation with constant number of orbitals at any time is in general not vanishing as $N\to\infty$.

 At \emph{finite gap} $\nu$, and in the mean-field limit in $N$, the emergence of \emph{infinite-rank} asymptotic marginals 
 \begin{equation}\label{eq:compactnesslimit}
  \gamma_{\infty,\nu,t}^{(k)}\;=\;\lim_{N\to\infty}\gamma_{N,\nu,t}^{(k)}
 \end{equation}
 in the course of the evolution of an initially fragmented state such as \eqref{eq:genfrag}, and the explicit form of $\gamma_{\infty,\nu,t}^{(k)}$ as a suitable average of rank-one projections $|\psi_t^{\otimes k}\rangle\langle\psi_t^{\otimes k}|$ for one-body orbitals $\psi_t$ all evolving according to the same Hartree (cubic-convolutive, non-linear Schr\"{o}dinger) equation 
 \begin{equation}
  \ii\partial_t\psi_t\;=\;\mathfrak{h}_\nu\psi_t+(V*|\psi_t|^2)\psi_t\,,
 \end{equation}
 was argued in \cite{DimFalcOlg21-fragmented} to follow \emph{directly} from the preceding analysis \cite{Ammari-Nier-2015_Wigner}, where it had been shown that in the mean-field quantum dynamics of a many-bosons system Wigner measures propagate along the nonlinear Hartree flow. In such works, \eqref{eq:compactnesslimit} is controlled by measure-theoretic compactness arguments, thereby inherently \emph{without} quantitative rate of convergence.

 Based on the latter result, it was further argued in \cite{DimFalcOlg21-fragmented} that in the \emph{infinite gap} limit $\nu\to\infty$, indifferently taken before or after the $N\to\infty$ limit, the one-body marginal $\gamma_{N,t}^{(1)}$ at any time $t>0$ of the mean-field evolution of \eqref{eq:genfrag} asymptotically attains  the same rank $\ell\geqslant 2$ it had at time zero, and with an explicit characterisation of the asymptotic matrix elements of $\gamma_{N,t}^{(1)}$ by means of an underlying Hartree dynamics, thus giving a first indication of persistence of finite-rank fragmented BEC self-consistently evolving according to the Hartree flow.

 \section{Many-body fragmented BEC and behaviour at the level of marginals}\label{sec:manybodyfrag}

  In view of the background outlined so far, we discuss a first result (Proposition \ref{prop:reduced-marginals-time-zero}, Corollary \ref{cor:rankofourstate}, and Proposition \ref{prop:asymptoticvicinity} below) that shows that monitoring a many-body bosonic state at the level of \emph{any} marginal is not sufficient to identify the presence of fragmentation.

  Henceforth, for clarity of presentation, we shall keep the \emph{two-level fragmentation} as a case study: straightforward generalisations to generic $\ell$-level fragmentation are possible.
  
 We consider the typical case where the one-body Hilbert space is
  \begin{equation}\label{eq:Honebody}
   \cH\;:=\;L^2(\mathbb{R}^3)
  \end{equation}
 and the $N$-body Hilbert space is therefore 
  \begin{equation}\label{eq:Hmanybody}
  \cH_N\;=\;L^2_{\mathrm{sym}}(\mathbb{R}^{3N})\;\equiv\;L^2_{\mathrm{sym}}(\mathbb{R}^{3N},\ud x_1,\dots,\ud x_N)
 \end{equation}
 (the space of Lebesgue square-integrable functions in the $N$ variables $x_1,\dots,x_N\in\mathbb{R}^3$ which are also symmetric under any permutation of variables).

 At the one-body level, we single out two states
  \begin{equation}\label{eq:f1f2}
  \varphi_1,\varphi_2\in L^2(\mathbb{R}^3)\qquad\textrm{such that}\qquad \|\varphi_1\|_{L^2}=\|\varphi_2\|_{L^2}=1\,,\qquad\langle\varphi_1,\varphi_2\rangle_{L^2}=0\,,
 \end{equation}
 namely the two one-body orbitals of possible fragmentation.

 Modelled on these data, we consider three $N$-body states -- a pure state $\Psi_N$ (with density matrix $\gamma_N:=|\Psi_N\rangle\langle\Psi_N|$) and two mixed states $\widetilde{\gamma}_N$ and $\rho_N$ -- defined, respectively, as
   \begin{equation}\label{eq:defPsiN}
  \Psi_N\;:=\;\varphi_{1}^{\otimes N_{1}}\vee\varphi_{2}^{\otimes N_{2}}\,,
 \end{equation}
  \begin{equation}\label{eq:ourauxiliary}
 \begin{split}
  \widetilde{\gamma}_{N}\;&:=\;\frac{1}{(2\pi)^{2}}  \iint_{[0,2\pi]^2}\ud\theta_1\,\ud\theta_2\:\big|(\psi_{\varphi_1,\varphi_2,N}^{\theta_1,\theta_2})^{\otimes N}\big\rangle\big\langle (\psi_{\varphi_1,\varphi_2,N}^{\theta_1,\theta_2})^{\otimes N}\big|\,, \\
  \psi_{\varphi_1,\varphi_2,N}^{\theta_1,\theta_2}\;&:=\;\sqrt{\frac{N_{1}}{N}}e^{-\mathrm{i}\theta_{1}}\varphi_{1}+\sqrt{\frac{N_{2}}{N}}e^{-\mathrm{i}\theta_{2}}\varphi_{2}\,,
  \end{split}
 \end{equation}
 and 
   \begin{equation}\label{eq:rho-auxiliary}
   \rho_N\;:=\;\frac{N_1}{N}\,|\varphi_1^{\otimes N}\rangle\langle\varphi_1^{\otimes N}|+\frac{N_2}{N}\,|\varphi_2^{\otimes N}\rangle\langle\varphi_2^{\otimes N}|\,,
  \end{equation}
 In \eqref{eq:defPsiN}-\eqref{eq:rho-auxiliary} above, $N_1,N_2\in\mathbb{N}$, $N_1+N_2=N$, and we assume a whole sequence of such states is built with $N_1\equiv N_1(N)$ and $N_2\equiv N_2(N)$ satisfying
 \begin{equation}
  \lim_{N\to\infty}\frac{N_1}{N}\:=:\:n_1\,\in\,(0,1)\,,\qquad \lim_{N\to\infty}\frac{N_2}{N}\:=:\:n_2\,\in\,(0,1)\,.
 \end{equation}
 Again, `$\vee$' denotes the overall symmetric tensor product between the two factors. 
 
  The $k$-body reduced density matrices associated with such states are controlled explicitly.

  \begin{prop}\label{prop:reduced-marginals-time-zero}
  For fixed $k\in\{1,\dots, N\}$ the $k$-marginals $\gamma_N^{(k)}$, $\widetilde{\gamma}_N^{(k)}$, and $\rho_N^{(k)}$ associated, respectively, to $\gamma_N$, $\widetilde{\gamma}_N$, and $\rho_N$ are given by
  \begin{eqnarray}
   \gamma_N^{(k)} &=& \sum_{j=0}^{k}\,c_{k,j}\,
    \big|\varphi_{1}^{\otimes(k-j)}\vee\varphi_{2}^{\otimes j}\big\rangle\big\langle\varphi_{1}^{\otimes(k-j)}\vee\varphi_{2}^{\otimes j}\big|\,, \qquad c_{k,j}\,:=\,\frac{\,\binom{N_1}{k-j} \binom{N_2}{j}\,}{\binom{N}{k}}\,,    \label{eq:gammak}\\
   \widetilde{\gamma}_N^{(k)}&=&\sum_{j=0}^{k}\,\widetilde{c}_{k,j}\,\big|\varphi_{1}^{\otimes(k-j)}\vee\varphi_{2}^{\otimes j}\big\rangle\big\langle\varphi_{1}^{\otimes(k-j)}\vee\varphi_{2}^{\otimes j}\big|\,,\qquad \widetilde{c}_{k,j}\,:=\,\binom{k}{j} \frac{N_{1}^{k-j}N_{2}^{j}}{N^{k}}\,, \label{eq:gammaktilde} \\
   \rho_N^{(k)}&=&\frac{N_1}{N}\,|\varphi_1^{\otimes k}\rangle\langle\varphi_1^{\otimes k}|+\frac{N_2}{N}\,|\varphi_2^{\otimes k}\rangle\langle\varphi_2^{\otimes k}|\,. \label{eq:rhok}
  \end{eqnarray}
  In particular, 
  \begin{equation}\label{eq:same1marginal}
   \gamma_N^{(1)}\;=\;\widetilde{\gamma}_N^{(1)}\;=\;\rho_N^{(1)}\;=\;\frac{N_1}{N}\,|\varphi_1\rangle\langle\varphi_1|+\frac{N_2}{N}\,|\varphi_2\rangle\langle\varphi_2|\,.
  \end{equation}
 \end{prop}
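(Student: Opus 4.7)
The plan is to compute each of the three $k$-marginals by a direct expansion that exploits the orthonormality $\langle\varphi_1,\varphi_2\rangle=0$, and then to read off \eqref{eq:same1marginal} by specialising to $k=1$. I would handle $\rho_N^{(k)}$ first (trivial), then $\widetilde{\gamma}_N^{(k)}$ (an angular-integration computation), and finally $\gamma_N^{(k)}$, which carries the combinatorial heart of the proposition.

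Formula \eqref{eq:rhok} for $\rho_N^{(k)}$ is immediate: by linearity of the partial trace and the elementary fact, clear from \eqref{eq:tracing-out}, that $\mathrm{Tr}_{[N-k]}(|\varphi^{\otimes N}\rangle\langle\varphi^{\otimes N}|)=|\varphi^{\otimes k}\rangle\langle\varphi^{\otimes k}|$ for any unit $\varphi\in\cH$, the result follows at once from \eqref{eq:rho-auxiliary}. For $\widetilde{\gamma}_N^{(k)}$ the same observation reduces each summand inside \eqref{eq:ourauxiliary} to $|(\psi_{\varphi_1,\varphi_2,N}^{\theta_1,\theta_2})^{\otimes k}\rangle\langle(\psi_{\varphi_1,\varphi_2,N}^{\theta_1,\theta_2})^{\otimes k}|$. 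Setting $\alpha:=\sqrt{N_1/N}\,e^{-\ii\theta_1}$ and $\beta:=\sqrt{N_2/N}\,e^{-\ii\theta_2}$ and using the bosonic binomial expansion
\[
(\alpha\varphi_1+\beta\varphi_2)^{\otimes k}\;=\;\sum_{j=0}^{k}\binom{k}{j}^{\!1/2}\!\alpha^{k-j}\beta^{j}\,\varphi_1^{\otimes(k-j)}\vee\varphi_2^{\otimes j}
\]
(in the normalised symmetric-product convention tacitly imposed by the trace-normalisation of the outcome), the rank-one projector becomes a double $(j_1,j_2)$-sum whose generic term carries the phase $e^{\ii(j_1-j_2)(\theta_1-\theta_2)}$. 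The angular integrals $(2\pi)^{-2}\iint\ud\theta_1\,\ud\theta_2$ collapse this onto the diagonal $j_1=j_2=j$ with amplitude $\binom{k}{j}(N_1/N)^{k-j}(N_2/N)^{j}$, yielding \eqref{eq:gammaktilde}.

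For $\gamma_N^{(k)}$ I would represent the fully symmetric state as the normalised symmetric sum
\[
\Psi_N\;=\;\binom{N}{N_1}^{\!-1/2}\!\!\sum_{\substack{I\subset\{1,\dots,N\}\\ |I|=N_1}}\bigotimes_{i\in I}\varphi_1\,\bigotimes_{j\in I^{c}}\varphi_2,
\]
apply the kernel formula \eqref{eq:tracing-out} to $\gamma_N=|\Psi_N\rangle\langle\Psi_N|$ expanded as a double sum over pairs $(I,J)$, and integrate out the last $N-k$ variables. The orthogonality $\langle\varphi_1,\varphi_2\rangle=0$ forces the $\varphi_1/\varphi_2$-occupation patterns in the $N-k$ traced-out slots of the bra and the ket to coincide, and a combinatorial enumeration of the surviving pairs $(I,J)$ producing a prescribed number $j$ of $\varphi_2$-slots among the first $k$ arguments, combined with the $\binom{k}{j}$-multiplicity that rebuilds the normalised vector $\varphi_1^{\otimes(k-j)}\vee\varphi_2^{\otimes j}$, yields the coefficient $c_{k,j}=\binom{N_1}{k-j}\binom{N_2}{j}/\binom{N}{k}$ (consistency-checkable via Vandermonde's identity $\sum_j c_{k,j}=1$). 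An alternative route is to compute the matrix elements of $\gamma_N^{(k)}$ as normalised expectations of products of creation/annihilation operators on the occupation-number state $|\varphi_1,N_1;\varphi_2,N_2\rangle$, which by the canonical commutation relations returns the same binomial quotient.

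Finally, \eqref{eq:same1marginal} is the $k=1$ specialisation of \eqref{eq:gammak}--\eqref{eq:rhok}: all three give the coefficients $N_1/N$ and $N_2/N$ in front of $|\varphi_1\rangle\langle\varphi_1|$ and $|\varphi_2\rangle\langle\varphi_2|$ respectively. The main obstacle in the whole argument is the combinatorial bookkeeping behind \eqref{eq:gammak}: simultaneously tracking the overall normalisation $\binom{N}{N_1}^{-1/2}$, the orthogonality-driven matching of traced-out $\varphi_1/\varphi_2$-placements, and the $\binom{k}{j}$-multiplicity of symmetric placements among the first $k$ variables.
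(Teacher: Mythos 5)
Your proposal is correct and follows essentially the same route as the paper: the triviality of \eqref{eq:rhok}, the angular-integration argument killing the off-diagonal terms for \eqref{eq:gammaktilde} (the paper writes the same computation at the level of kernels with sums over $\sigma\in\{1,2\}^k$ rather than via the symmetric-product binomial expansion), and the normalised-symmetric-sum decomposition with orthogonality-forced matching of the traced-out slots for \eqref{eq:gammak}, which is exactly the combinatorial argument the paper carries out in its proof of Proposition \ref{prop:spinor-k-marginal}.
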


  \begin{cor}\label{cor:rankofourstate} One has 
  \begin{eqnarray}
   \mathrm{rank}(\gamma_N^{(k)}) &=& k+1 \,, \label{eq:rankorig}\\
   \mathrm{rank}(\widetilde{\gamma}_N^{(k)}) &=& k+1\,, \label{eq:rankauxiliary} \\
   \mathrm{rank}(\rho_N^{(k)})&=&2\,.
  \end{eqnarray}
 \end{cor}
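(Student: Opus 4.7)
The plan is to read off the rank of each marginal directly from the spectral-type decomposition furnished by Proposition \ref{prop:reduced-marginals-time-zero}. Once one knows that the vectors appearing in each decomposition are pairwise orthogonal and that the scalar coefficients are all strictly positive, the decomposition is automatically a bona fide eigendecomposition and the rank equals the number of surviving terms. The argument therefore reduces to two ingredients: an orthogonality check on the symmetrized product vectors, and a positivity check on the coefficients.

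For the orthogonality, the key input is $\langle\varphi_1,\varphi_2\rangle_{L^2}=0$. I would invoke the standard fact that if $\varphi_1\perp\varphi_2$, then the symmetrized products $\varphi_1^{\otimes(k-j_1)}\vee\varphi_2^{\otimes j_1}$ and $\varphi_1^{\otimes(k-j_2)}\vee\varphi_2^{\otimes j_2}$ are orthogonal in $\cH^{\otimes_{\mathrm{sym}}k}$ whenever $j_1\neq j_2$. Concretely, unfolding the symmetrizers into a sum over permutations of $\{1,\dots,k\}$ and evaluating the inner product produces a sum of products of $k$ scalar factors, each of the form $\langle\varphi_\alpha,\varphi_\beta\rangle_\cH$; since the two vectors contain different numbers of $\varphi_2$-factors, every term of this sum must feature at least one cross-pairing $\langle\varphi_1,\varphi_2\rangle=0$, and hence vanishes. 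This is the classical observation that distinct occupation-number vectors in a bosonic Fock basis built from an orthonormal family are mutually orthogonal.

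For the positivity, the coefficients $\widetilde{c}_{k,j}=\binom{k}{j}N_1^{k-j}N_2^j/N^k$ are manifestly strictly positive for every $j\in\{0,\dots,k\}$, so all $k+1$ terms in \eqref{eq:gammaktilde} survive and \eqref{eq:rankauxiliary} follows. The coefficients $c_{k,j}=\binom{N_1}{k-j}\binom{N_2}{j}/\binom{N}{k}$ are strictly positive provided $k-j\leqslant N_1$ and $j\leqslant N_2$, which for fixed $k$ holds as soon as $k\leqslant\min\{N_1,N_2\}$ -- an automatic condition in the asymptotic regime of interest here ($N\to\infty$ with $N_j/N\to n_j\in(0,1)$); under this condition all $k+1$ terms in \eqref{eq:gammak} survive and \eqref{eq:rankorig} follows. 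Finally, for $\rho_N^{(k)}$ one only needs orthogonality of the two vectors $\varphi_1^{\otimes k}$ and $\varphi_2^{\otimes k}$, immediate from $\langle\varphi_1^{\otimes k},\varphi_2^{\otimes k}\rangle_{\cH^{\otimes k}}=\langle\varphi_1,\varphi_2\rangle_\cH^k=0$; combined with strict positivity of the weights $N_1/N,N_2/N\in(0,1)$, this yields rank $2$.

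No step presents a genuine obstacle: the only mildly delicate point is the Fock-basis orthogonality invoked above, but it is entirely routine and could, if desired, be replaced by a direct reference to the standard construction of the symmetric Fock space over $\mathrm{span}\{\varphi_1,\varphi_2\}$.
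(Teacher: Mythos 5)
Your proof is correct and coincides with the (implicit) argument of the paper, which treats the Corollary as an immediate consequence of Proposition \ref{prop:reduced-marginals-time-zero}: each marginal is exhibited there as a sum of rank-one projections onto mutually orthogonal symmetrised vectors, so the rank is the number of strictly positive coefficients. Your observation that \eqref{eq:rankorig} actually requires $k\leqslant\min\{N_1,N_2\}$ (otherwise some $c_{k,j}$ vanish, e.g.\ $\gamma_N^{(N)}$ has rank $1$) is a correct refinement of a condition the paper leaves tacit in its fixed-$k$, large-$N$ regime.
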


  In addition, $\Psi_N$ and $\widetilde{\gamma}_N$ display asymptotic-in-$N$ closeness at the level of the reduce marginals. (Instead, $\rho_N$ does \emph{not} have the same feature.)

   \begin{prop}\label{prop:asymptoticvicinity}
   For each $k\in\{1,\dots,N\}$ there exists $c_k\geqslant 0$ such that
   \begin{equation}\label{eq:asymptoticvicinity}
    \mathrm{Tr}\big| \gamma_N^{(k)}- \widetilde{\gamma}_N^{(k)}\big|\;\leqslant\;\frac{c_k}{N}\,.
   \end{equation}
   In particular, $c_1=0$. Thus, at fixed $k\in\mathbb{N}$, $ \mathrm{Tr}| \gamma_N^{(k)}- \widetilde{\gamma}_N^{(k)}|=O(N^{-1})$ as $N\to\infty$.
  \end{prop}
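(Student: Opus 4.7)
The plan is to leverage the explicit spectral forms \eqref{eq:gammak}--\eqref{eq:gammaktilde} from Proposition \ref{prop:reduced-marginals-time-zero} so as to reduce the trace-norm bound to a scalar discrepancy of coefficients. First I would observe that the orthogonality $\langle\varphi_1,\varphi_2\rangle_\cH=0$ propagates to the symmetric tensor factors: the family $\{\,\varphi_1^{\otimes(k-j)}\vee\varphi_2^{\otimes j}\,\}_{j=0}^{k}$ is an orthonormal system in $\cH^{\otimes_{\mathrm{sym}}k}$ (normalisation is consistent with what one reads in \eqref{eq:same1marginal}). Therefore both $\gamma_N^{(k)}$ and $\widetilde\gamma_N^{(k)}$ are simultaneously diagonalised in this common eigenbasis, and
\[
 \mathrm{Tr}\big|\gamma_N^{(k)}-\widetilde\gamma_N^{(k)}\big|\;=\;\sum_{j=0}^{k}\big|c_{k,j}-\widetilde c_{k,j}\big|.
\]

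Next, I would read the two coefficient families probabilistically: $\{c_{k,j}\}_{j=0}^{k}$ is the hypergeometric p.m.f.\ for drawing $k$ balls without replacement from the population of $N_1$ ``type $1$'' and $N_2$ ``type $2$'' items, whereas $\{\widetilde c_{k,j}\}_{j=0}^{k}$ is the binomial $\mathrm{Bin}(k,N_2/N)$ corresponding to the \emph{with-replacement} sampling. The quantity above is then twice the total-variation distance between these two laws, and a direct analytic bound can be extracted by rewriting
\[
 c_{k,j}\;=\;\binom{k}{j}\frac{N_1^{\underline{k-j}}\,N_2^{\underline{j}}}{N^{\underline{k}}},\qquad
 \widetilde c_{k,j}\;=\;\binom{k}{j}\frac{N_1^{\,k-j}\,N_2^{\,j}}{N^{\,k}},
\]
where $n^{\underline{r}}:=n(n-1)\cdots(n-r+1)$ is the falling factorial, and applying the elementary pointwise estimate $1-\tfrac{r(r-1)}{2n}\leqslant n^{\underline{r}}/n^{r}\leqslant 1$ at $(n,r)\in\{(N_1,k-j),(N_2,j),(N,k)\}$. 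Since by hypothesis $N_1/N\to n_1\in(0,1)$ and $N_2/N\to n_2\in(0,1)$, there exists $N_k\in\mathbb{N}$ such that $N_1,N_2\geqslant \tfrac12 \min\{n_1,n_2\}\,N$ for $N\geqslant N_k$, whence each of the three ratios $N_1^{\underline{k-j}}/N_1^{\,k-j}$, $N_2^{\underline{j}}/N_2^{\,j}$, $N^{\underline{k}}/N^{\,k}$ equals $1+O_k(1/N)$ uniformly in $j\in\{0,\dots,k\}$. Combining gives $c_{k,j}/\widetilde c_{k,j}=1+O_k(1/N)$, and since $\widetilde c_{k,j}\leqslant 1$ we obtain $|c_{k,j}-\widetilde c_{k,j}|\leqslant C_k/N$; summing the $k+1$ terms proves \eqref{eq:asymptoticvicinity} with a constant of order $O(k^2)$, after absorbing the finitely many small-$N$ terms into $c_k$.

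Finally, the sharper claim $c_1=0$ comes from direct substitution at $k=1$: $c_{1,0}=\widetilde c_{1,0}=N_1/N$ and $c_{1,1}=\widetilde c_{1,1}=N_2/N$, consistently with the identity \eqref{eq:same1marginal}. The main (rather minor) obstacle is pure bookkeeping -- ensuring the uniformity in $j$ of the falling-factorial expansion, and handling the degenerate cases $j=0$ and $j=k$ where one of the factorials is empty -- but no structural difficulty arises, thanks to the diagonal form inherited from Proposition~\ref{prop:reduced-marginals-time-zero}.
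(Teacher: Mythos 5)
Your proposal is correct and follows essentially the same route as the paper: both reduce the trace norm to the sum $\sum_{j=0}^{k}|c_{k,j}-\widetilde c_{k,j}|$ via the common orthonormal eigenbasis $\{\varphi_1^{\otimes(k-j)}\vee\varphi_2^{\otimes j}\}_j$, and both rest on the same combinatorial identity $c_{k,j}=\binom{k}{j}\,N_1^{\underline{k-j}}N_2^{\underline{j}}/N^{\underline{k}}$ followed by an $O_k(1/N)$ estimate of the coefficient discrepancy (the paper isolates this as Lemma \ref{lem:combinatorial}, proved by a Taylor expansion of the product $\prod(1-m/N_i)$, which is the same computation as your falling-factorial bound). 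The hypergeometric-versus-binomial reading is a pleasant gloss but does not change the argument.
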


 As argued in Sections \ref{sec:fragmentation-marginals}-\ref{sec:fragmented-manybody}, $\Psi_N$ given in \eqref{eq:defPsiN} is the prototype of a many-body state preparable with two-level fragmented BEC onto the two one-body orbitals $\varphi_1$ and $\varphi_2$, with (asymptotic) occupation numbers $n_1$ and $n_2$.

 Instead, neither $\widetilde{\gamma}_{N}$ nor $\rho_N$ correspond to an actual preparation of fragmented BEC, so none of them can be interpreted as a state of fragmentation, as opposite to $\Psi_N$. Both $\widetilde{\gamma}_{N}$ and $\rho_N$ are statistical superpositions of states of complete (100\%) simple BEC onto one-body orbitals that in the former case are the $\psi_{\varphi_1,\varphi_2,N}^{\theta_1,\theta_2}$'s, and in the latter are $\varphi_1$ and $\varphi_2$.

 This shows that the mere control of the occupation numbers at the level of the one-body marginals does not characterise fragmentation unambiguously.

 Besides, Corollary \ref{cor:rankofourstate} clarifies that even requiring (as in \cite{DimFalcOlg21-fragmented}) that the rank of all $k$-marginals is finite and not constant in $k$ (a constraint that clearly factors $\rho_N$ out), may still fail to select many-body states with an appropriate interpretation of fragmented BEC.
 
 We complete this Section with the proofs of the above statements.  
  Let us first discuss the proof of Propositions \ref{prop:reduced-marginals-time-zero} and \ref{prop:asymptoticvicinity}.
    
     \begin{proof}[Proof of Proposition \ref{prop:reduced-marginals-time-zero}]
    Formula \eqref{eq:gammak} is proved already in \cite[Proposition 2.6]{DimFalcOlg21-fragmented} when $\ell=2$. For completeness of presentation, we develop our own proof by demonstrating Proposition \ref{prop:spinor-k-marginal} below: indeed, as argued therein, Propositions \ref{prop:reduced-marginals-time-zero} and \ref{prop:spinor-k-marginal} are in fact the same statements, up to Hilbert space isomorphism.    
    For $\rho_N^{(k)}$, the proof of \eqref{eq:rhok} is straightforward from \eqref{eq:tracing-out} and \eqref{eq:rho-auxiliary}. Concerning \eqref{eq:gammaktilde}, we apply again \eqref{eq:tracing-out}, now to \eqref{eq:ourauxiliary}, and find, for $k\in\{1,\dots,N-1\}$,
        \begin{align*}
    	& \widetilde{\gamma}_{N}^{(k)}(x_{1},\dots,x_{k};y_{1},\dots,y_{k})\\
    	&=\;\frac{1}{(2\pi)^{2}}\iint_{[0,2\pi]^{2}}\mathrm{d}\theta_{1}\mathrm{d}\theta_{2}\,\prod_{j=1}^k \left(\sqrt{\frac{N_{1}}{N}}e^{-\mathrm{i}\theta_{1}}\varphi_{1}(x_j)+\sqrt{\frac{N_{2}}{N}}e^{-\mathrm{i}\theta_{2}}\varphi_{2}(x_j)\right)\left(\sqrt{\frac{N_{1}}{N}}e^{\mathrm{i}\theta_{1}}\overline{\varphi_{1}(y_j)}+\sqrt{\frac{N_{2}}{N}}e^{\mathrm{i}\theta_{2}}\overline{\varphi_{2}(y_j)}\right)\\ 
    	&\qquad\qquad\qquad\qquad \times \int_{\mathbb{R}^3}\cdots\int_{\mathbb{R}^3}\mathrm{d}x_{k+1}\dots\mathrm{d}x_{N}\prod_{\ell=k+1}^N \left|\psi_{\varphi_1,\varphi_2,N}^{\theta_1,\theta_2}(x_\ell) \right|^2\,.
    \end{align*}
      The integration in the spatial variables $x_{k+1},\dots,x_N$ cancels out because, owing to \eqref{eq:f1f2},
      \[
       \big\| \psi_{\varphi_1,\varphi_2,N}^{\theta_1,\theta_2} \big\|_{\cH}^2\;=\;\frac{N_1}{N}\|\varphi_1\|_{L^2}^2+\frac{N_2}{N}\|\varphi_2\|_{L^2}^2\;=\;1\,.
      \]
      Thus, with the shorthand $\phi_p:=\sqrt{\frac{N_p}{N}}\,e^{-\ii\theta_p}\varphi_p$, $p\in\{1,2\}$,
      \[
       \widetilde{\gamma}_{N}^{(k)}(x_{1},\dots,x_{k};y_{1},\dots,y_{k})\;=\;\frac{1}{(2\pi)^{2}}\iint_{[0,2\pi]^{2}}\mathrm{d}\theta_{1}\mathrm{d}\theta_{2}\,\prod_{j=1}^k\big(\phi_1(x_j)+\phi_2(x_j)\big)\big(\overline{\phi_1(y_j)}+\overline{\phi_2(y_j)}\big)\,.
      \]
      We re-write
      \[
       \prod_{j=1}^k\big(\phi_1(x_j)+\phi_2(x_j)\big)\;=\;\sum_{\sigma\in\{1,2\}^k}\prod_{\ell=1}^k\phi_{\sigma_\ell}(x_\ell)
      \]
      with the notation $\sigma\equiv(\sigma_1,\dots,\sigma_k)$ for an element of $\{1,2\}^k$, i.e., an ordered collection of $1$'s and $2$'s. Therefore,
      \[
       \begin{split}
        &\widetilde{\gamma}_{N}^{(k)}(x_{1},\dots,x_{k};y_{1},\dots,y_{k})\;=\;\frac{1}{(2\pi)^{2}}\iint_{[0,2\pi]^{2}}\mathrm{d}\theta_{1}\mathrm{d}\theta_{2}\,\bigg(\sum_{\sigma\in\{1,2\}^k}\prod_{\ell=1}^k\phi_{\sigma_\ell}(x_\ell)\bigg)\bigg(\sum_{\sigma'\in\{1,2\}^k}\prod_{m=1}^k\overline{\phi_{\sigma'_m}(y_m)}\bigg) \\
        &\qquad =\;\frac{1}{(2\pi)^{2}}\sum_{ \sigma,\sigma'\in\{1,2\}^k }N^{-k}\prod_{\ell,m=1}^k\sqrt{ N_{\sigma_\ell} N_{\sigma'_m}\,}\;\varphi_{\sigma_\ell}(x_\ell)\,\overline{\varphi_{\sigma'_m}(y_m)}\iint_{[0,2\pi]^{2}}\mathrm{d}\theta_{1}\mathrm{d}\theta_{2}\,e^{\ii\big(\sum_{\ell=1}^k\theta_{\sigma_\ell}-\sum_{m=1}^k\theta_{\sigma'_m}\big)}\,.
       \end{split}
      \]
     If $\sigma$ and $\sigma'$ consist of a \emph{different} number of $1$'s (and hence of $2$'s), the exponential in the integral above takes the form $e^{\ii q (\theta_1-\theta_2)}$ for some $q\in\mathbb{Z}\setminus\{0\}$, and therefore the integration over $\theta_1$ gives zero. Only those $\sigma,\sigma'$ with $|\sigma^{-1}(1)|=|{\sigma'}^{-1}(1)|$ (i.e., the same number of $1$'s in the two collections) contribute to the above expression, and for each such pair the exponential in the integral trivialises to 1 and the double integration in $\theta_1,\theta_2$ cancels out by means of the normalisation factor $(2\pi)^{-2}$. Observe also that for any two $\sigma,\sigma'$ with $|\sigma^{-1}(1)|=|{\sigma'}^{-1}(1)|$ one has
     \[
      \prod_{\ell,m=1}^k \sqrt{N_{\sigma_\ell} N_{\sigma'_m}\,}\;=\; \prod_{\ell=1}^k N_{\sigma_\ell}\,.
     \]
    The preceding considerations yield the first line of the following chain of identities:
    \[
     \begin{split}
      & \widetilde{\gamma}_{N}^{(k)}(x_{1},\dots,x_{k};y_{1},\dots,y_{k})\;=\;\!\!\!\sum_{\substack{ \sigma,\sigma'\in\{1,2\}^k \\ |\sigma^{-1}(1)|=|{\sigma'}^{-1}(1)|}}\!\!\!\frac{N_{\sigma_1}\cdots N_{\sigma_k}}{N^k}\;  \prod_{\ell,m=1}^k \varphi_{\sigma_\ell}(x_\ell)\,\overline{\varphi_{\sigma'_m}(y_m)} \\
      &\qquad\qquad =\;\sum_{j=0}^k\sum_{\substack{ \sigma,\sigma'\in\{1,2\}^k \\ |\sigma^{-1}(1)|=|{\sigma'}^{-1}(1)| = k-j}}\!\!\!\frac{N_1^{k-j} N_2^j}{N^k}\;   \prod_{\ell,m=1}^k \varphi_{\sigma_\ell}(x_\ell)\,\overline{\varphi_{\sigma'_m}(y_m)} \\
       &\qquad\qquad =\; \sum_{j=0}^k\:\frac{N_1^{k-j} N_2^j}{N^k} \left( \sum_{\substack{ \sigma\in\{1,2\}^k \\ |\sigma^{-1}(1)|= k-j}}\prod_{\ell=1}^k \varphi_{\sigma_\ell}(x_\ell) \right)\left( \sum_{\substack{ \sigma'\in\{1,2\}^k \\ |{\sigma'}^{-1}(1)|= k-j}}\prod_{m=1}^k \overline{\varphi_{\sigma'_m}(y_m)} \right).
     \end{split}
    \]
    On the other hand, the vector $\varphi_1^{\otimes(k-j)}\vee\varphi_2^{\otimes j}$ consists, by definition, of the normalised sum of all functions obtained from $\varphi_1(x_1)\cdots\varphi_{1}(x_{k-j})\varphi_2(x_{k-j+1})\cdots\varphi_2(x_k)$ by applying all possible permutations with repetitions of the index set $\{1,\dots,1,2,\dots,2\}$ with $k-j$ $1$'s and $j$ $2$'s. There are precisely $\binom{k}{j}$ such distinct functions, all of the form $\prod_{\ell=1}^k \varphi_{\sigma_\ell}(x_\ell)$ for some $\sigma\in\{1,2\}^k$ with $|{\sigma}^{-1}(1)|= k-j$, and they are all pair-wise orthonormal (owing to \eqref{eq:f1f2}), implying that the normalisation factor for the sum thus obtained is $\binom{k}{j}^{-\frac{1}{2}}$. Thus,
    \[
     \big(\varphi_1^{\otimes(k-j)}\vee\varphi_2^{\otimes j}\big)(x_1,\dots,x_k)\;=\;\binom{k}{j}^{-\frac{1}{2}}\!\!\!\!\sum_{\substack{ \sigma\in\{1,2\}^k \\ |\sigma^{-1}(1)|= k-j}}\prod_{\ell=1}^k \varphi_{\sigma_\ell}(x_\ell)
    \]
    and an analogous formula holds for $\big(\varphi_1^{\otimes(k-j)}\vee\varphi_2^{\otimes j}\big)(y_1,\dots,y_k)$. This allows to re-write
    \[
     \widetilde{\gamma}_{N}^{(k)}(x_{1},\dots,x_{k};y_{1},\dots,y_{k})\;=\;\sum_{j=0}^k\:\binom{k}{j}\frac{N_1^{k-j} N_2^j}{N^k}\:\big(\varphi_1^{\otimes(k-j)}\vee\varphi_2^{\otimes j}\big)(x_1,\dots,x_k)\,\overline{\big(\varphi_1^{\otimes(k-j)}\vee\varphi_2^{\otimes j}\big)(y_1,\dots,y_k)}\,,
    \]
   that is,
   \[
    \widetilde{\gamma}_{N}^{(k)}\;=\;\sum_{j=0}^{k}
    \binom{k}{j}\frac{N_{1}^{k-j}N_{2}^{j}}{N^{k}} \;\big|\varphi_{1}^{\otimes(k-j)}\vee\varphi_{2}^{\otimes j}\big\rangle\big\langle\varphi_{1}^{\otimes(k-j)}\vee\varphi_{2}^{\otimes j}\big|\,.
   \]
   Formula \eqref{eq:gammaktilde} is finally established.
    \end{proof}

   In order to prove Proposition \ref{prop:asymptoticvicinity} it is convenient to single out the following estimate.
   
    \begin{lem}\label{lem:combinatorial}
    For each $N\in \mathbb{N}$, $N\geqslant 2$, let $N_1,N_2\in\mathbb{N}$ such that $N_1+N_2=N$ and $\kappa N\leqslant N_1< N$, $\kappa\leqslant N_2< N$ for some $\kappa\in(0,1)$. For each $k\in\{0,1,\dots, N-1\}$ there exist $a_k>0$, depending only on $k$ (and $\kappa$), such that, for each $j\in\{1,\dots,k\}$,
    \begin{equation}\label{eq:combinatorial}
    \left| \binom{N}{k}^{-1} \binom{N_1}{k-j}
    \binom{N_2}{k} - \binom{k}{j} \frac{N_1^{k-j}N_2^j}{N^k}\right| \;\leqslant\; \frac{a_k}{N}\,.
    \end{equation}
    \end{lem}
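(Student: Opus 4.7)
The strategy is to compare the two expressions in \eqref{eq:combinatorial} by rewriting the first term as a rational function of $N_1,N_2,N$ whose leading-order behaviour is precisely $\binom{k}{j}N_1^{k-j}N_2^j/N^k$, and then to control the lower-order correction. The starting point is the elementary factorial identity
\[
\binom{N}{k}^{-1}\binom{N_1}{k-j}\binom{N_2}{j}\;=\;\binom{k}{j}\,\frac{(N_1)_{k-j}\,(N_2)_j}{(N)_k}\,,
\]
where $(m)_r:=m(m-1)\cdots(m-r+1)$ denotes the falling factorial. This makes the common prefactor $\binom{k}{j}$ manifest.

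Next, I would factor each falling factorial as $(N_1)_{k-j}=N_1^{k-j}\prod_{i=0}^{k-j-1}(1-\tfrac{i}{N_1})$, and analogously for $(N_2)_j$ and $(N)_k$, obtaining
\[
\binom{N}{k}^{-1}\binom{N_1}{k-j}\binom{N_2}{j}\;=\;\binom{k}{j}\,\frac{N_1^{k-j}N_2^j}{N^k}\cdot R_{N,k,j}\,,
\qquad
R_{N,k,j}\,:=\,\frac{\prod_{i=0}^{k-j-1}(1-i/N_1)\,\prod_{i=0}^{j-1}(1-i/N_2)}{\prod_{i=0}^{k-1}(1-i/N)}\,.
\]
The hypothesis $N_1,N_2\geqslant \kappa N$ is precisely what is needed to control each factor $1-i/N_p$ and each reciprocal $(1-i/N)^{-1}$ (for $p\in\{1,2\}$, $0\leqslant i\leqslant k-1$) as $1+O(1/N)$ with implicit constants depending only on $k$ and $\kappa$. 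A routine expansion, or equivalently a finite telescoping over the $2k$ factors in $R_{N,k,j}$, then yields $|R_{N,k,j}-1|\leqslant b_k/N$ for some $b_k>0$ depending only on $k$ and $\kappa$.

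The claimed bound follows immediately: since $N_1,N_2\leqslant N$ gives $N_1^{k-j}N_2^j/N^k\leqslant 1$, one has
\[
\left|\binom{N}{k}^{-1}\binom{N_1}{k-j}\binom{N_2}{j}-\binom{k}{j}\,\frac{N_1^{k-j}N_2^j}{N^k}\right|\;=\;\binom{k}{j}\,\frac{N_1^{k-j}N_2^j}{N^k}\,|R_{N,k,j}-1|\;\leqslant\;\frac{a_k}{N}\,,
\]
with $a_k:=b_k\,\max_{0\leqslant j\leqslant k}\binom{k}{j}$. There is no genuine obstacle in this plan; the argument is pure combinatorial bookkeeping. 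The only mild point of care is ensuring that the constant is independent of $j$, which is automatic because $0\leqslant j\leqslant k$ uniformly bounds all the auxiliary numerical factors arising in the expansion of $R_{N,k,j}$.
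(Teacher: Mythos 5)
Your proposal is correct and follows essentially the same route as the paper: rewrite the ratio of binomials via falling factorials to expose the common prefactor $\binom{k}{j}$, factor out $N_1^{k-j}N_2^j/N^k$, and expand the resulting product of $(1\pm O(1/N))$ factors using $N_1,N_2\geqslant\kappa N$, ending with the same constant $a_k=b_k\max_j\binom{k}{j}$. (You also, correctly, read the $\binom{N_2}{k}$ in the statement as the intended $\binom{N_2}{j}$.)
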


    \begin{proof}
    We compute
    \begin{align*}
    \binom{N}{k}^{-1}
    \binom{N_1}{k-j}\binom{N_2}{j}\;&=\;\frac{k!(N-k)!}{N!}\cdot\frac{N_{1}!}{(k-j)!(N_{1}-k+j)!}\cdot\frac{N_{2}!}{j!(N_{2}-j)!}\\
    & =\;\binom{k}{j}
    \frac{(N-k)!N_{1}!N_{2}!}{N!(N_{1}-k+j)!(N_{2}-j)!}\\
    & =\;\binom{k}{j}\frac{N_{1}(N_{1}-1)\cdots(N_{1}-k+j+1)N_{2}(N_{2}-1)\cdots(N_{2}-j+1)}{N(N-1)\cdots(N-k+1)}\,..
    \end{align*}
    Then, by Taylor expansion for large $N$,
    \begin{align*}
     \frac{N_{1}^{k-j}N_{2}^{j}}{N^{k}}&-\frac{N_{1}(N_{1}-1)\cdots(N_{1}-k+j+1)N_{2}(N_{2}-1)\cdots(N_{2}-j+1)}{N(N-1)\cdots(N-k+1)}\\
     & =\;\frac{N_{1}^{k-j}N_{2}^{j}}{N^{k}}\left(1-\frac{(1-\frac{1}{N_{1}})\cdots(1-\frac{k-j-1}{N_{1}})(1-\frac{1}{N_{2}})\cdots(1-\frac{j-1}{N_{2}})}{(1-\frac{1}{N})\cdots(1-\frac{k-1}{N})}\right)\\
     & =\;\frac{N_{1}^{k-j}N_{2}^{j}}{N^{k}}\,\Big(\frac{1}{N_{1}}+\dots+\frac{k-j-1}{N_{1}}+\frac{1}{N_{2}}+\dots+\frac{j-1}{N_{2}}\\
     &\qquad\qquad\qquad\quad +\frac{1}{N}+\dots+\frac{k-1}{N}
     +O\Big(\frac{1}{N_1 N_2}+ \frac{1}{N_1 N} + \frac{1}{N_2 N}\Big)
     \Big)\; \leqslant\; \frac{b_k}{N}
    \end{align*}
    for some $b_k>0$ only depending on $k$ (and $\kappa$). The final result then follows by setting $a_k:=b_k\max_j\binom{k}{j}$.
    \end{proof}
    
    \begin{proof}[Proof of Proposition \ref{prop:asymptoticvicinity}]
    On account of \eqref{eq:gammak}-\eqref{eq:gammaktilde},
    \[
     \begin{split}
      \gamma_{N}^{(k)} \;&=\;\sum_{j=0}^{k}
    \binom{N}{k}^{-1} \binom{N_1}{k-j} \binom{N_2}{j} 
    \big|\varphi_{1}^{\otimes(k-j)}\vee \varphi_{2}^{\otimes j}\big\rangle \big\langle \varphi_{1}^{\otimes(k-j)}\vee\varphi_{2}^{\otimes j}\big|\,, \\
     \widetilde{\gamma}_{N}^{(k)}\;&=\;\sum_{j=0}^{k}
    \binom{k}{j}\frac{N_{1}^{k-j}N_{2}^{j}}{N^{k}} \;\big|\varphi_{1}^{\otimes(k-j)}\vee\varphi_{2}^{\otimes j}\big\rangle\big\langle\varphi_{1}^{\otimes(k-j)}\vee\varphi_{2}^{\otimes j}\big|\,.
     \end{split}
    \]
    Then the difference $\gamma_{N}^{(k)} - \widetilde{\gamma}_{N}^{(k)}$ is given by $k+1$ summands, each consisting of a rank-one orthogonal projection multiplied by a coefficient estimated precisely by \eqref{eq:combinatorial} of Lemma \ref{lem:combinatorial}. This yields
    \[
    \Tr \Big| \gamma_{N}^{(k)} - \widetilde{\gamma}_{N}^{(k)} \Big|
    \;\leqslant\;
    \sum_{j=0}^{k} \frac{a_{k}}{N} \;\leqslant\; \frac{c_{k}}{N}\,,
    \]
    having set $c_k := (k+1) a_k$.
    \end{proof}

    \begin{rem}\label{rem:general-ell}
     Although for the present discussion we picked non-restrictively the case of $\ell=2$ levels of fragmentation, we can prove the following $\ell$-level generalisations, with reasonings that are analogous to the preceding ones. Similarly to Proposition \ref{prop:reduced-marginals-time-zero}, and for $\ell\in\mathbb{N}$, $\ell\geqslant 2$, introduce the purely $\ell$-level fragmented state 
    \begin{equation}\label{eq:ell1}
     \Psi_N\;:=\;\varphi_1^{\otimes N_1}\vee\cdots\vee\varphi_{\ell}^{\otimes N_{\ell}}
    \end{equation}
    (see \eqref{eq:genfrag} above) with one-body states of fragmentation given by the orthonormal system $\{\varphi_1,\dots,\varphi_{\ell}\}$ in $\cH$, and populations $N_1\equiv N_1(N)$, $\dots,$ $N_\ell\equiv N_{\ell}(N)$ such that $N_1+\cdots+N_\ell=N$ and $N_j/N\xrightarrow{N\to\infty} n_j\in(0,1)$, $j\in\{1,\dots,\ell\}$, as well as the non-fragmented $N$-body mixture
    \begin{equation}\label{eq:ell2}
 \begin{split}
  \widetilde{\gamma}_{N}\;&:=\frac{1}{(2\pi)^{\ell}}  \int\cdots\int_{[0,2\pi]^\ell}\ud\theta_1\cdots\ud\theta_{\ell}\:\big|(\psi_{\varphi_1,\cdots,\varphi_\ell,N}^{\theta_1,\cdots,\theta_\ell})^{\otimes N}\big\rangle\big\langle (\psi_{\varphi_1,\dots,\varphi_\ell,N}^{\theta_1,\dots,\theta_\ell})^{\otimes N}\big|\,, \\
  \psi_{\varphi_1,\dots,\varphi_\ell,N}^{\theta_1,\dots,\theta_\ell}\;&:=\;\sum_{j-=1}^\ell\sqrt{\frac{N_{j}}{N}}e^{-\mathrm{i}\theta_{j}}\varphi_{j}\,.
  \end{split}
 \end{equation}
     For $k\in\{1,\dots,N-1\}$ introduce also the index sets
     \begin{equation}
      \mathcal{F}_k^N\;:=\;\Big\{\mathbf{a}\equiv(a_1,\dots,a_\ell)\,\Big|\,a_j\in\{1,\dots,N_j\}\:\textrm{ for each }\:j\in\{1,\dots,\ell\},\:\textrm{ and }\:\sum_{j=1}^\ell a_j=k\Big\}\,.
     \end{equation}
  Then the $k$-marginals associated with the $N$-body states \eqref{eq:ell1} and \eqref{eq:ell2} above are, respectively,
  \begin{equation}
   \begin{split}
    \gamma_N^{(k)}\;&=\;\sum_{\mathbf{a}\in\mathcal{F}_k^N} c_{\mathbf{a}}\,\big|\varphi_1^{\otimes a_1}\vee\cdots\vee\varphi_{\ell}^{\otimes a_\ell}\big\rangle\big\langle \varphi_1^{\otimes a_1}\vee\cdots\vee\varphi_{\ell}^{\otimes a_\ell}\big|\,,\qquad c_{\mathbf{a}}\,:=\,\binom{N}{k}^{-1}\binom{N_1}{a_1}\cdots\binom{N_\ell}{a_\ell}
   \end{split}
  \end{equation}
 and
 \begin{equation}
  \widetilde{\gamma}_N^{(k)}\;=\;\sum_{\mathbf{a}\in\mathcal{F}_k^N} \widetilde{c}_{\mathbf{a}}\,\big|\varphi_1^{\otimes a_1}\vee\cdots\vee\varphi_{\ell}^{\otimes a_\ell}\big\rangle\big\langle \varphi_1^{\otimes a_1}\vee\cdots\vee\varphi_{\ell}^{\otimes a_\ell}\big|\,,\qquad \widetilde{c}_{\mathbf{a}}\,:=\,\frac{k!}{\,a_{1}!\cdots a_{\ell}!\,} \frac{N_{1}^{a_{1}}\cdots N_{\ell}^{a_{\ell}}}{N}\,. 
 \end{equation}
  Proposition \ref{prop:asymptoticvicinity} and Lemma \ref{lem:combinatorial} can be then reproduced with identical formulation, that is,
  \begin{equation}
   \big| c_{\mathbf{a}} - \widetilde{c}_{\mathbf{a}} \big|\;\leqslant\;\frac{\mathsf{a}_k}{N}\,,
  \end{equation}
  and consequently the above $k$-marginals are as close as 
     \begin{equation}
    \mathrm{Tr}\big| \gamma_N^{(k)}- \widetilde{\gamma}_N^{(k)}\big|\;\leqslant\;\frac{c_k}{N}\,,
   \end{equation}
   for suitable constants $\mathsf{a}_k$ and $c_k$. 
   \end{rem}

 \section{Quantitative emergence of effective dynamics of fragmentation at infinite  gap}\label{sec:scenarioINfinitegap}

  The second type of result of the present work concerns the time evolution of a many-body state with initial fragmented BEC, when the dynamics is monitored at the level of marginals, in the infinite particle limit \emph{and} the infinite gap limit together.
  
  The physically meaningful setting would be the one of the previous Section, with initial state $\Psi_N\in\cH$ given by \eqref{eq:defPsiN} (exact two-level fragmentation) and evolution governed by a many-body Hamiltonian of the form
  \begin{equation}
  \sum_{j=1}^N \big(-\Delta_{x_j}+U_\nu(x_j)\big)\,+\frac{1}{N}\!\!\sum_{1\leqslant \ell<r\leqslant N }\!\!V(x_\ell-x_r)
 \end{equation}
 (acting on $\cH_N$), where $V$ is an inter-particle interaction potential and $U_\nu$ is a trapping potential such that the occupied orbitals $\varphi_1$ and $\varphi_2$ are, respectively, the ground state and the first excited state of the one-body Hamiltonian $-\Delta_{x_j}+U_\nu$, and are separated by a gap $\nu>0$, and moreover a typical mean-field scaling factor $1/N$ is inserted. Denoting by $\Psi_{N,\nu,t}$ the many-body state at later times $t>0$ and by $\gamma_{N,\nu,t}^{(k)}$ its $k$-marginal, one would like to characterise the object $\lim_{N,\nu\to\infty}\gamma_{N,\nu,t}^{(k)}$.

 There are multiple factors making such an analysis complicated and too general. We therefore perform it for  
  a tractable \emph{toy model}, specifically the toy model recently introduced in \cite{DimFalcOlg21-fragmented} for precisely the same type of question. However, in \cite{DimFalcOlg21-fragmented} the overall conclusion is non-quantitative, as the $N\to\infty$ limit is only controlled by measure-theoretic compactness arguments, hence without rate of convergence. Here we reproduce the result through an alternative route and with rates of convergence. In fact, the toy model below allows for a caricature of the infinite gap limit in the evolution of a fragmented condensate which is not completely satisfactory from the physical viewpoint, and yet it retains an amount of instructiveness and insight.

    For the present purposes one-body and $N$-body Hilbert spaces are modified from \eqref{eq:Honebody}-\eqref{eq:Hmanybody} to
  \begin{equation}\label{eq:nuHilbertspaces}
    \cH\;:=\;L^2(\mathbb{R}^3)\otimes\mathbb{C}^2\,,\qquad \cH_N\;:=\;\cH^{\otimes_{\mathrm{sym}} N}\,.
   \end{equation}
  Here the idea is to model fragmentation occurring on one-body states that differ for their \emph{spinor} component: the factor $\mathbb{C}^2$ is required for the two-level fragmentation discussed here, for generic $\ell$-level fragmentation we should use $\mathbb{C}^{\ell}$.

  We also consider the one-body Hamiltonian
  \begin{equation}\label{eq:hnutensorized}
   \mathfrak{h}_\nu\;:=\;{\textstyle\frac{1}{2}}\nu(-\Delta+x^2-3)\otimes\mathbbm{1}\,,\qquad \nu>0\,,
  \end{equation}
  self-adjointly realised in $\cH$ with domain $H^{1,1}(\mathbb{R}^3)\otimes\mathbb{C}^2$, where 
  \begin{equation}
   H^{1,1}(\mathbb{R}^3)\;:=\;\big\{\psi\in L^2(\mathbb{R}^3)\,|\,\|\nabla\psi\|_{L^2}^2+\|x\psi\|_{L^2}^2<+\infty\big\}\,.
  \end{equation}
  Correspondingly, the $N$-body mean field Hamiltonian shall be
  \begin{equation}\label{eq:hamiltonianNnunu}
  H_{N,\nu}\;:=\;\sum_{j=1}^N \mathfrak{h}_{\nu,j}\,+\frac{1}{N}\!\!\sum_{1\leqslant \ell<r\leqslant N }\!\!V(x_\ell-x_r)\,,
 \end{equation}
  with assumptions on $V$ that ensure the self-adjointness of $H_{N,\nu}$ in $\cH_N$, where $\mathfrak{h}_{\nu,j}$ acting as $\mathfrak{h}_\nu$ on the $j$-th particle and as the identity on all others. 
  Letting $\mathfrak{h}_\nu$ to be trivial on the `spin' sector is a stratagem used in \cite{DimFalcOlg21-fragmented} to introduce an obvious spectral degeneracy, including the therefore doubly degenerate ground state. The particular form \eqref{eq:hnutensorized} of $h_\nu$ satisfies the \emph{gap condition}
  \begin{equation}\label{eq:gapcondition}
  \inf\sigma(\mathfrak{h}_\nu)\,=\,0\,,\qquad \inf(\sigma(\mathfrak{h}_\nu)\setminus\{0\})\,=\,\nu
 \end{equation}
 (see \eqref{eq:gapfirst} above), as well as the fact that the domain of $\mathfrak{h}_\nu$ is \emph{independent} of $\nu$. Moreover, it has the technically relevant feature that the two-dimensional ground state subspace of $\mathfrak{h}_\nu$ is independent of $\nu$, being spanned by the orthonormal basis
  \begin{equation}\label{eq:spanningbasis}
   \varphi_1\;:=\;(2\pi)^{-\frac{3}{2}}\,e^{-x^2/2}\begin{pmatrix} 1 \\ 0 \end{pmatrix}\,,\qquad \varphi_2\;:=\;(2\pi)^{-\frac{3}{2}}\,e^{-x^2/2}\begin{pmatrix} 0 \\ 1 \end{pmatrix}\,.
  \end{equation}

  In terms of $\varphi_1,\varphi_2$ above, and for $N_1,N_2\in\mathbb{N}$ with $N_1+N_2=N$, we consider the $N$-body state
  \begin{equation}\label{eq:defPsiN2}
  \Psi_N\;:=\;\varphi_{1}^{\otimes N_{1}}\vee\varphi_{2}^{\otimes N_{2}}\,\in\,\cH_N\,.
 \end{equation}
  The goal now is to monitor the Schr\"{o}dinger evolution of $\Psi_N$ governed by $H_{N,\nu}$, at the level of marginals, identifying the leading dynamics, up to a sub-leading correction when $N$ and $\nu$ are large.

  We should warn the reader that the model \eqref{eq:nuHilbertspaces}-\eqref{eq:defPsiN2} is only to be regarded as informative (as we shall see in a moment) on the general mechanism of dynamical emergence of finite-rank marginals in the infinite gap limit; other than that, the physical meaningfulness of the scaling in $\nu$ in the Hamiltonian \eqref{eq:hnutensorized} is questionable, because while letting the gap $\nu$ to infinity one is also re-scaling the mass with a vanishing factor $\nu^{-1}$.

  The final result, that will be proved in Sections \ref{sec:proofofThmdynamics} and \ref{sec:effectiveMFdynamics}, is the following.
  
  \begin{thm}\label{thm:N-nu-dynamics}
   Let $\nu_0>0$. For each $t\geqslant 0$ let
  \begin{equation}
   \Psi_{N,\nu,t}\,:=\,e^{-\ii t H_{N,\nu}}\Psi_N\,,
  \end{equation}
 with $H_{N,\nu}$ defined in \eqref{eq:hnutensorized}-\eqref{eq:hamiltonianNnunu} and $\Psi_N$ defined in \eqref{eq:spanningbasis}-\eqref{eq:defPsiN2} for $N\in\mathbb{N}$, $N\geqslant 2$, $\nu\geqslant\nu_0$, and for two sequences $N_1\equiv N_1(N)$, $N_2\equiv N_2(N)$ in $\mathbb{N}$ such that $N_1+N_2=N$ and 
 \begin{equation}
  \Big|\frac{N_1}{N}-n_1\Big|\,=\,O(N^{-1})\,=\,\Big|\frac{N_2}{N}-n_2\Big|\qquad\textrm{as }\;N\to\infty
 \end{equation}
  for given $n_1,n_2\in(0,1)$. In \eqref{eq:hamiltonianNnunu} it is assumed that $V:\mathbb{R}^3\to\mathbb{R}$ is a measurable function such that $V(-x)=V(x)$ for a.e.~$x$ and
   \begin{equation}\label{eq:assumptionV2}
    V^2\;\lesssim\;-\Delta+x^2+\mathbbm{1} 
   \end{equation}
   in the sense of quadratic forms on $L^2(\mathbb{R}^3)$, on smooth and compactly supported functions of $\mathbb{R}^3$. Then $H_{N,\nu}$ is essentially self-adjoint and lower semi-bounded on the domain of smooth and compactly supported functions of $\cH_N$, and for each $\theta_1,\theta_2\in[0,2\pi]$ the initial value problem
   \begin{equation}\label{eq:IVPhartree-nu}
    \begin{split}
     \ii\partial_t \varphi^{\theta_1,\theta_2}_{\nu} \,&=\,\mathfrak{h}_{\nu}\,\varphi^{\theta_1,\theta_2}_{\nu} +\big(V*|\varphi^{\theta_1,\theta_2}_{\nu} |^2\big)\varphi^{\theta_1,\theta_2}_{\nu} \,, \\
     &\qquad \varphi^{\theta_1,\theta_2}_{\nu} \equiv\varphi^{\theta_1,\theta_2}_{\nu} (t,x),\quad (t,x)\in[0,+\infty)\times\mathbb{R}^3\,, \\
     \varphi^{\theta_1,\theta_2}_{\nu} (0,\cdot)\,&=\,\sqrt{n_1}\,e^{-\ii\theta_1}\,\varphi_1+\sqrt{n_2}\,e^{-\ii\theta_2}\varphi_2
    \end{split}
   \end{equation}
   (the convolution above being meant in the $x$-variable) is well posed in $C([0,+\infty),\mathcal{D}[\mathfrak{h}_\nu])$\,, in the sense that there exists a unique solution
   \begin{equation}\label{eq:phinuspace}
    \varphi^{\theta_1,\theta_2}_{\nu}\,\in\, C([0,+\infty),\mathcal{D}[\mathfrak{h}_\nu])\cap C^1([0,+\infty),\mathcal{D}[\mathfrak{h}_\nu]^*)
   \end{equation}
   to \eqref{eq:IVPhartree-nu}, with continuous dependence on initial data.
    Correspondingly, for each $k\in\{1,\dots,N-1\}$ let $\gamma_{N,\nu,t}^{(k)}$ be the $k$-body reduced density matrix associated with $\Psi_{N,\nu,t}$. Then there exist constants $A_k>0$, depending only on $k$ (and on $\nu_0$) and independent of $N\in\mathbb{N}$ with $N\geqslant 2$, of $\nu\geqslant\nu_0$, and of $t\geqslant 0$, such that, for every such $N,\nu,t$,
  \begin{equation}\label{eq:traceconvergence-nu}
   \mathrm{Tr}\,\big| \,\gamma_{N,\nu,t}^{(k)} - \gamma_{\infty,\nu,t}^{(k)}\,\big|\;\leqslant\;\frac{A_k\,e^{A_k t}}{N}\,,
  \end{equation}
  where 
  \begin{equation}\label{eq:gammainfk-nu}
   \gamma_{\infty,\nu,t}^{(k)}\,:=\,\frac{1}{\:(2\pi)^2}\iint_{[0,2\pi]^2}\ud\theta_1\,\ud\theta_2 \big|\varphi^{\theta_1,\theta_2}_{\nu}(t,\cdot)^{\otimes k}\big\rangle\big\langle \varphi^{\theta_1,\theta_2}_{\nu}(t,\cdot)^{\otimes k}\big|\,.
  \end{equation}
 The trace in \eqref{eq:traceconvergence-nu} and the ket-bra notation in \eqref{eq:gammainfk-nu} are understood in the Hilbert space $\cH_k=(L^2(\mathbb{R}^3)\otimes\mathbb{C}^2)^{\otimes_{\mathrm{sym}}k}$.
  \end{thm}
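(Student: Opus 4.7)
The theorem has three ingredients: the wellposedness of the Hartree initial value problem \eqref{eq:IVPhartree-nu}, the essential self-adjointness and semi-boundedness of $H_{N,\nu}$, and the quantitative estimate \eqref{eq:traceconvergence-nu}. The plan for the latter is to route the comparison through the phase-averaged auxiliary state $\widetilde{\gamma}_N$ of \eqref{eq:ourauxiliary}, whose initial $k$-marginal is $O(1/N)$-close to $\gamma_N^{(k)}$ by Proposition \ref{prop:asymptoticvicinity} and whose time evolution is controlled by standard mean-field methods, thanks to the fact that the integrand $(\psi_{\varphi_1,\varphi_2,N}^{\theta_1,\theta_2})^{\otimes N}$ is a simple BEC product state for each fixed $(\theta_1,\theta_2)$.

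\textbf{Wellposedness and self-adjointness.} The operator $\mathfrak{h}_\nu$ is self-adjoint with $\nu$-independent domain $H^{1,1}(\mathbb{R}^3)\otimes\mathbb{C}^2$ and generates a unitary $C_0$-group; combined with Sobolev-type embeddings of $\mathcal{D}[\mathfrak{h}_\nu^{1/2}]$ and the form-bound $V^2\lesssim-\Delta+x^2+\mathbbm{1}$ of assumption \eqref{eq:assumptionV2}, this makes the Hartree nonlinearity $\varphi\mapsto(V*|\varphi|^2)\varphi$ locally Lipschitz from $\mathcal{D}[\mathfrak{h}_\nu]$ to itself. A standard Banach fixed-point / Duhamel argument, coupled with a priori conservation of mass and Hartree energy, yields the global solution of \eqref{eq:IVPhartree-nu} in the function space \eqref{eq:phinuspace}, with continuous dependence on initial data. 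For $H_{N,\nu}$ the same assumption gives a KLMN-type quadratic form bound of the interaction relative to $\sum_j\mathfrak{h}_{\nu,j}$ with relative constant strictly less than $1$ (thanks to the mean-field prefactor $1/N$, and to $\nu\geqslant\nu_0$), so $H_{N,\nu}$ is essentially self-adjoint and lower semi-bounded on smooth compactly supported cylindrical vectors.

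\textbf{Reduction to standard mean field.} Let $U'(\theta_1,\theta_2):=e^{-\ii\theta_1 P_\uparrow-\ii\theta_2 P_\downarrow}$ denote the one-body spin phase rotation on $\mathbb{C}^2$, so that $\psi_{\varphi_1,\varphi_2,N}^{\theta_1,\theta_2}=U'(\theta_1,\theta_2)\psi_{\varphi_1,\varphi_2,N}^{0,0}$. Since $\mathfrak{h}_\nu$ and $V$ act trivially on the spin factor, the tensorised unitary $(U')^{\otimes N}$ commutes with $H_{N,\nu}$; since $|\varphi|^2$ is $U'$-invariant, the Hartree flow intertwines with $U'$, yielding $\varphi^{\theta_1,\theta_2}_\nu(t)=U'(\theta_1,\theta_2)\varphi^{0,0}_\nu(t)$. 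For each fixed $(\theta_1,\theta_2)$ the datum $(\psi_{\varphi_1,\varphi_2,N}^{\theta_1,\theta_2})^{\otimes N}$ is a simple BEC product state, to which a quantitative mean-field convergence theorem (a Pickl-type counting argument, adapted to the linear operator $\mathfrak{h}_\nu$) applies; the above gauge intertwining and the $\nu$-independence of $\mathcal{D}[\mathfrak{h}_\nu]$ make the produced constants uniform in $(\theta_1,\theta_2)\in[0,2\pi]^2$ and in $\nu\geqslant\nu_0$. Averaging the resulting $O(1/N)$ trace-norm estimate over $(\theta_1,\theta_2)$ gives
\[
 \mathrm{Tr}\big|\,\widetilde{\gamma}_{N,\nu,t}^{(k)}-\gamma_{\infty,\nu,t}^{(k)}\,\big|\;\leqslant\;\frac{B_k\,e^{B_k t}}{N}\,,\qquad \widetilde{\gamma}_{N,\nu,t}\,:=\,e^{-\ii tH_{N,\nu}}\widetilde{\gamma}_N\,e^{\ii tH_{N,\nu}}\,.
\]

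\textbf{Propagating the initial closeness: the main obstacle.} It remains to estimate $\mathrm{Tr}|\gamma_{N,\nu,t}^{(k)}-\widetilde{\gamma}_{N,\nu,t}^{(k)}|$, which is $O(1/N)$ at $t=0$ by Proposition \ref{prop:asymptoticvicinity} but whose propagation in time is not automatic because the partial trace does not commute with the $N$-body unitary evolution. The idea is to exploit the sector-projection identity $\Psi_N=\pi_{N_1}^{-1/2}\mathbb{P}_{N_1,N_2}(\psi_{\varphi_1,\varphi_2,N}^{0,0})^{\otimes N}$, where $\mathbb{P}_{N_1,N_2}$ is the spectral projector of the many-body spin counting operators onto $(\mathcal{N}_\uparrow,\mathcal{N}_\downarrow)=(N_1,N_2)$ and $\pi_{N_1}:=\binom{N}{N_1}(N_1/N)^{N_1}(N_2/N)^{N_2}\sim(2\pi n_1n_2N)^{-1/2}$. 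Because $[\mathbb{P}_{N_1,N_2},H_{N,\nu}]=0$, this identity persists in time: writing $\widetilde{\Psi}^{0,0}_{N,\nu,t}:=e^{-\ii tH_{N,\nu}}(\psi_{\varphi_1,\varphi_2,N}^{0,0})^{\otimes N}$ and decomposing $\widetilde{\gamma}_{N,\nu,t}$ into its sector-averaged form, one finds that both $\gamma_{N,\nu,t}^{(k)}$ and $\widetilde{\gamma}_{N,\nu,t}^{(k)}$ are partial traces of sector-projected versions of $|\widetilde{\Psi}^{0,0}_{N,\nu,t}\rangle\langle\widetilde{\Psi}^{0,0}_{N,\nu,t}|$---the former isolating the $(N_1,N_2)$ sector (renormalised by $\pi_{N_1}^{-1}$), the latter summing over all sectors weighted by $\pi_{N_1'}$. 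The hard part is to quantify this difference as $O(1/N)$ (modulo the exponential time factor absorbed into $A_ke^{A_kt}$): a naive trace-norm estimate between individual sector contributions yields only $O(1/\sqrt{N})$, the width of the binomial distribution. The $O(1/N)$ rate is obtained by combining the Pickl-type mean-field control on $\widetilde{\Psi}^{0,0}_{N,\nu,t}$ (which asserts that all but $O(1/N)$ particles remain in the Hartree orbital $\varphi^{0,0}_\nu(t)$) with a second-order Taylor expansion of the sector-restricted $k$-marginals in $N_1'/N$, so that the first-order term cancels by symmetry of $\pi_{N_1'}$ about $N_1'=N_1$ and the second-order term integrates against the variance $\sim N$ to produce the claimed $O(1/N)$; this is the time-dependent lift of Lemma \ref{lem:combinatorial}.
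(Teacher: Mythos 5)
Your first two ingredients are sound: the wellposedness/self-adjointness discussion is standard and matches the paper's setup, and the estimate $\mathrm{Tr}|\widetilde{\gamma}_{N,\nu,t}^{(k)}-\gamma_{\infty,\nu,t}^{(k)}|\leqslant B_k e^{B_k t}/N$, obtained by applying a quantitative simple-BEC mean-field theorem to each product state $(\psi_{\varphi_1,\varphi_2,N}^{\theta_1,\theta_2})^{\otimes N}$ and averaging over the phases, is correct and consistent with the paper's identity \eqref{eq:gammainfnutrewritten} together with Proposition \ref{prop:gammaspinrewritten}. The problem is your third step. You correctly identify that the initial closeness $\mathrm{Tr}|\gamma_N^{(k)}-\widetilde{\gamma}_N^{(k)}|=O(N^{-1})$ of Proposition \ref{prop:asymptoticvicinity} does not propagate automatically, but the mechanism you propose to propagate it is not substantiated and, as sketched, does not work. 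The sector decomposition $\widetilde{\gamma}_{N,\nu,t}=\sum_{N_1'}\pi_{N_1'}\,|e^{-\mathrm{i}tH_{N,\nu}}\Phi_{N_1'}\rangle\langle e^{-\mathrm{i}tH_{N,\nu}}\Phi_{N_1'}|$ is correct, but your claimed ``second-order Taylor expansion of the sector-restricted $k$-marginals in $N_1'/N$'' requires uniform-in-$N$, uniform-in-$t$ (up to $e^{Ct}$) second-difference bounds on the operator-valued map $N_1'\mapsto\mathrm{Tr}_{[N-k]}|e^{-\mathrm{i}tH_{N,\nu}}\Phi_{N_1'}\rangle\langle e^{-\mathrm{i}tH_{N,\nu}}\Phi_{N_1'}|$. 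Each $\Phi_{N_1'}$ is itself a fragmented (non-product) state, so the Pickl-type control you invoke on the evolved \emph{product} state $\widetilde{\Psi}^{0,0}_{N,\nu,t}$ does not transfer to the individual sectors at precision $O(N^{-1})$: each sector carries squared norm $\pi_{N_1'}\sim N^{-1/2}$, so an $O(N^{-1})$ error in the full state, once restricted to a sector and renormalised by $\pi_{N_1}^{-1}$, can be as large as $O(N^{-1/2})$. Establishing the regularity in $N_1'$ of the evolved sector marginals is essentially as hard as the theorem itself, so this step is a genuine gap.

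The paper avoids the propagation problem entirely by exploiting a structural feature of the toy model that your argument never uses: since $\varphi_1$ and $\varphi_2$ in \eqref{eq:spanningbasis} share the \emph{same spatial orbital} $\phi^0$ and differ only in their $\mathbb{C}^2$ component, and since both $\mathfrak{h}_\nu$ and $V$ act trivially on the spin factor, the initial datum and the dynamics factorise exactly: $\Psi_{N,\nu,t}=\big(e^{-\mathrm{i}tH^{\mathrm{spat}}_{N,\nu}}(\phi^0)^{\otimes N}\big)\otimes\Psi^{\mathrm{spin}}_{N_1,N_2}$, whence $\gamma_{N,\nu,t}^{(k)}=\gamma_{\mathrm{spat},N,\nu,t}^{(k)}\otimes\gamma_{\mathrm{spin},N_1,N_2}^{(k)}$ with a \emph{time-independent} spin factor (see \eqref{eq:factorisationofgamman}). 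The fragmentation is thereby confined to a static combinatorial comparison on $(\mathbb{C}^2)^{\otimes k}$ (Propositions \ref{prop:spinor-k-marginal}, \ref{prop:spinlimits}, \ref{prop:gammaspinrewritten}, resting on Lemma \ref{lem:combinatorial}), while the spatial sector is a genuine simple-BEC mean-field limit for the uncorrelated datum $(\phi^0)^{\otimes N}$, handled by adapting the Chen--Lee--Schlein scheme to the trapped, $\nu$-scaled Hamiltonian (Theorem \ref{thm:spatialdynamics}). If you incorporate this factorisation, the difficult propagation step in your proposal becomes unnecessary; without it, your argument is incomplete at its central point.
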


  Theorem \ref{thm:N-nu-dynamics} is designed so as to be combined with the following result from \cite{DimFalcOlg21-fragmented}.

  \begin{thm}[\cite{DimFalcOlg21-fragmented}]\label{thm:dimonte}
   For $\nu>0$ let $\mathfrak{h}_\nu$ be the operator \eqref{eq:hnutensorized} in $L^2(\mathbb{R}^3)\otimes \mathbb{C}^2$ with self-adjointness domain $H^{1,1}(\mathbb{R}^3)\otimes \mathbb{C}^2$, and for each $t\geqslant 0$ consider $\gamma_{\infty,\nu,t}^{(1)}$ as defined in \eqref{eq:gammainfk-nu} through \eqref{eq:IVPhartree-nu}-\eqref{eq:phinuspace} for $\varphi_1,\varphi_2$ from \eqref{eq:spanningbasis} and for given even measurable function $V:\mathbb{R}^3\to\mathbb{R}$ satisfying \eqref{eq:assumptionV2}. Then there exists a constant $A>0$, independent of $\nu$ and $t$, such that
   \begin{equation}\label{eq:traceconvergence-infnu}
     \mathrm{Tr}\,\big| \,\gamma_{\infty,\nu,t}^{(1)} - \gamma_{\infty,\infty,t}^{(1)}\,\big|\;\leqslant\;\frac{\,A\,e^{A t}}{\sqrt{\nu}\,}\,,
   \end{equation}
  where 
  \begin{equation}\label{eq:infgapdynamics1}
  \begin{split}
   \gamma_{\infty,\infty,t}^{(1)}\;&:=\;\sum_{j,\ell=1}^2 K_{j,\ell}(t) |\varphi_j\rangle\langle\varphi_\ell|\,, \\
   K_{j,\ell}(t)\;&:=\;\frac{1}{\:(2\pi)^2}\iint_{[0,2\pi]^2}\ud\theta_1\,\ud\theta_2\,\overline{\kappa^{\theta_1,\theta_2}_\ell(t)}\,\kappa^{\theta_1,\theta_2}_j(t)\,,
  \end{split}
  \end{equation}
 and where $\kappa_1^{\theta_1,\theta_2},\kappa_2^{\theta_1,\theta_2}$ solve the system of ordinary differential equations
 \begin{equation}\label{eq:infgapdynamics2}
  \begin{split}
   \ii\partial_t \kappa_1^{\theta_1,\theta_2}\,&=\,\big\langle \varphi_1,\big(V*|\Phi^{\theta_1,\theta_2}|^2\big)\Phi^{\theta_1,\theta_2}\big\rangle_{L^2(\mathbb{R}^3)\otimes\mathbb{C}^2}\,, \\
   \ii\partial_t \kappa_2^{\theta_1,\theta_2}\,&=\,\big\langle \varphi_2,\big(V*|\Phi^{\theta_1,\theta_2}|^2\big)\Phi^{\theta_1,\theta_2}\big\rangle_{L^2(\mathbb{R}^3)\otimes\mathbb{C}^2}\,, \\
   \Phi^{\theta_1,\theta_2}\,&:=\,\kappa_1^{\theta_1,\theta_2}\varphi_1+\kappa_2^{\theta_1,\theta_2}\varphi_2\,, \\   
   \kappa_1^{\theta_1,\theta_2}(0)\,&=\,\sqrt{n_1}\,e^{-\ii\theta_1}\,,\quad \kappa_2^{\theta_1,\theta_2}(0)\,=\,\sqrt{n_2}\,e^{-\ii\theta_2}\,.
  \end{split}
 \end{equation}
  Moreover,
  \begin{equation}
   \mathrm{rank}(\gamma_{\infty,\infty,t}^{(1)})\;=\;2\qquad \textrm{for a.e.~}t\in\mathbb{R}\,.
  \end{equation}
  \end{thm}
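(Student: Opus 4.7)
My plan is to decompose the Hartree solution $\varphi_\nu^{\theta_1,\theta_2}(t)$ with respect to the spectral splitting of $\mathfrak{h}_\nu$. Let $P$ denote the orthogonal projection in $\cH$ onto the zero-eigenspace $\mathrm{span}\{\varphi_1,\varphi_2\}$ and $Q:=\mathbbm{1}-P$; write $\varphi_\nu^{\theta_1,\theta_2}(t)=\Phi_\nu(t)+\chi_\nu(t)$ with $\Phi_\nu:=P\varphi_\nu^{\theta_1,\theta_2}$ and $\chi_\nu:=Q\varphi_\nu^{\theta_1,\theta_2}$. The two quantitative estimates I would establish are $(i)$ $\|\chi_\nu(t)\|_\cH=O(\nu^{-1/2})$ uniformly in $t\geqslant 0$ and $(\theta_1,\theta_2)\in[0,2\pi]^2$, and $(ii)$ $\|\Phi_\nu(t)-\Phi_\infty(t)\|_\cH\leqslant A e^{At}\nu^{-1/2}$, where $\Phi_\infty(t):=\kappa_1^{\theta_1,\theta_2}(t)\varphi_1+\kappa_2^{\theta_1,\theta_2}(t)\varphi_2$ solves \eqref{eq:infgapdynamics2}. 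These two ingredients, combined with $\mathrm{Tr}\big||\psi\rangle\langle\psi|-|\phi\rangle\langle\phi|\big|\leqslant 2\|\psi-\phi\|_\cH$ (applied to unit vectors, which both $\varphi_\nu^{\theta_1,\theta_2}(t)$ and $\Phi_\infty(t)$ are, by $L^2$-conservation of the respective flows) and with integration over $(\theta_1,\theta_2)$, yield \eqref{eq:traceconvergence-infnu}.

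Estimate $(i)$ follows from conservation of the Hartree energy
\[
  E_\nu\;=\;\tfrac12\big\langle\varphi_\nu^{\theta_1,\theta_2}(t),\mathfrak{h}_\nu\varphi_\nu^{\theta_1,\theta_2}(t)\big\rangle+\tfrac12\!\int\!\big(V*|\varphi_\nu^{\theta_1,\theta_2}|^2\big)\big|\varphi_\nu^{\theta_1,\theta_2}\big|^2\,:
\]
at $t=0$ the initial datum lies in $\mathrm{ker}\,\mathfrak{h}_\nu$, hence the kinetic part vanishes and the potential part is bounded $\nu$-uniformly by \eqref{eq:assumptionV2}; applying the same form bound to the potential energy at a generic time $t$ and using $\langle\chi_\nu,\mathfrak{h}_\nu\chi_\nu\rangle\geqslant\nu\|\chi_\nu\|_\cH^2$ (valid since $\mathfrak{h}_\nu\geqslant\nu$ on $Q\cH$) gives $(i)$. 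For $(ii)$, I would project the Hartree equation with $P$ and expand $|\varphi_\nu|^2=|\Phi_\nu|^2+2\mathrm{Re}(\bar\Phi_\nu\chi_\nu)+|\chi_\nu|^2$ to obtain
\[
  \ii\partial_t\Phi_\nu\;=\;P\big[(V*|\Phi_\nu|^2)\Phi_\nu\big]+\R_\nu\,,\qquad \|\R_\nu(t)\|_\cH=O(\nu^{-1/2})\,,
\]
where $\R_\nu$ collects the cross terms containing $\chi_\nu$. The leading equation coincides with the projected system \eqref{eq:infgapdynamics2}; its nonlinearity is locally Lipschitz on the unit ball of $\cH$ thanks to \eqref{eq:assumptionV2}, so a Gronwall argument applied to $\Phi_\nu-\Phi_\infty$ delivers $(ii)$. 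Uniformity in $(\theta_1,\theta_2)$ is automatic, since the initial data $\sqrt{n_1}e^{-\ii\theta_1}\varphi_1+\sqrt{n_2}e^{-\ii\theta_2}\varphi_2$ lies in a bounded compact family parametrised smoothly by the phases.

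For the rank assertion, at $t=0$ the phase integrals defining $K_{j,\ell}$ enforce $K_{j,\ell}(0)=n_j\delta_{j\ell}$, because the off-diagonal entries contain the factor $e^{\ii(\theta_\ell-\theta_j)}$ with $\ell\neq j$ and therefore integrate to zero; hence $\mathrm{rank}(\gamma_{\infty,\infty,0}^{(1)})=2$. The solutions $(\kappa_1^{\theta_1,\theta_2}(t),\kappa_2^{\theta_1,\theta_2}(t))$ of the polynomial-in-the-unknowns ODE system \eqref{eq:infgapdynamics2} depend real-analytically on $t$; phase-averaging preserves real-analyticity, so $\det(K_{j,\ell}(t))_{j,\ell=1,2}$ is a real-analytic function of $t$ which is non-zero at $t=0$, and therefore vanishes on at most a discrete (hence null) set of times. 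The main obstacle I foresee is the rigorous control of the nonlinear error $\R_\nu$ in $\cH$-norm: its formal size $\nu^{-1/2}$ must be promoted to a genuine operator-norm bound on the convolution $V*(\bar\Phi_\nu\chi_\nu+\text{c.c.}+|\chi_\nu|^2)$ uniform in $\nu$, a step which relies on upgrading $(i)$ to a quantitative bound on $\chi_\nu$ in a norm compatible with \eqref{eq:assumptionV2}; again, the zero-energy character of the initial datum plays the essential role in keeping all such constants independent of $\nu$.
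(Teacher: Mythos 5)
First, a point of order: the paper does \emph{not} prove Theorem \ref{thm:dimonte}; it imports it verbatim from \cite{DimFalcOlg21-fragmented} (the paper's own contribution is the companion Theorem \ref{thm:N-nu-dynamics}), so there is no internal proof to compare against. That said, your proposal is a correct reconstruction of the argument of the cited reference: the splitting $\varphi_\nu=P\varphi_\nu+Q\varphi_\nu$ along the ground eigenspace, the use of energy conservation together with $\mathfrak{h}_\nu\geqslant\nu$ on $Q\cH$ (and $\langle\varphi_\nu,\mathfrak{h}_\nu\varphi_\nu\rangle=\langle\chi_\nu,\mathfrak{h}_\nu\chi_\nu\rangle$, since $P\mathfrak{h}_\nu=\mathfrak{h}_\nu P=0$) to get $\|\chi_\nu(t)\|=O(\nu^{-1/2})$, and the Gr\"onwall comparison of the projected equation with the finite-dimensional system \eqref{eq:infgapdynamics2} is exactly the mechanism at work there; the rank statement via $K(0)=\mathrm{diag}(n_1,n_2)$ and real-analyticity of $\det K(t)$ is also the intended one (to average analyticity over the phases one uses that the bounded solutions of the polynomial ODE extend holomorphically to a $\theta$-uniform complex strip in $t$). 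Two remarks. The conserved energy carries relative weights $1$ and $\tfrac12$ on kinetic and interaction parts, not $\tfrac12$--$\tfrac12$; immaterial. More substantively, the uniform bound on the potential energy at time $t>0$ is not a direct consequence of \eqref{eq:assumptionV2}: the form bound controls $\|V*|\varphi_t|^2\|_{L^\infty}$ by $1+\|\varphi_t\|_{H^1}^2+\|x\varphi_t\|_{L^2}^2$, which in turn is controlled by $1+\nu^{-1}\langle\varphi_t,\mathfrak{h}_\nu\varphi_t\rangle$, so the inequality must be closed self-consistently for $\nu$ above a threshold --- this is precisely the bootstrap carried out in Theorem \ref{thm:Hartreebounds}. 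The same observation dissolves the ``obstacle'' you flag at the end: because $\mathfrak{h}_\nu=\frac{\nu}{2}(-\Delta+x^2-3)\otimes\mathbbm{1}$, the energy bound $\langle\chi_\nu,\mathfrak{h}_\nu\chi_\nu\rangle\leqslant C$ yields not merely $\|\chi_\nu\|_{L^2}=O(\nu^{-1/2})$ but $\|\chi_\nu\|_{H^1}^2+\|x\chi_\nu\|_{L^2}^2=O(\nu^{-1})$, which is exactly the norm compatible with \eqref{eq:assumptionV2} needed to bound $V*(\overline{\Phi_\nu}\chi_\nu)$ and $V*|\chi_\nu|^2$ in $L^\infty$, hence $\|\mathcal{R}_\nu\|_{L^2}=O(\nu^{-1/2})$ uniformly in $t$ and in the phases. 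With these points supplied, the proof is complete.
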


  \begin{rem}
   At the basis of the validity of the analysis of \cite{DimFalcOlg21-fragmented} yielding Theorem \ref{thm:dimonte}, an analysis that covers other variants of the prototype toy model \eqref{eq:nuHilbertspaces}-\eqref{eq:defPsiN2}, are the following features, that are indeed matched by the $\nu$-gapped one-body Hamiltonian $\mathfrak{h}_\nu$: $\mathfrak{h}_\nu$ has non-trivial action on the spatial variables only; its domain is $\nu$-independent; its ground state energy is zero irrespectively of the gap $\nu$; the ground state eigenspace is independent of $\nu$; the potential $V$ is Kato small with respect to $\mathfrak{h}_\nu$.   
  \end{rem}

   An obvious triangular inequality between \eqref{eq:traceconvergence-nu} and \eqref{eq:traceconvergence-infnu} yields the following notable Corollary:

   \begin{cor}\label{cor:combinedNnu}
    Under the assumptions of Theorem \ref{thm:N-nu-dynamics}, there exists a constant $C>0$, independent of $N$, $\nu$, and $t$ ($N\in\mathbb{N}$ with $N\geqslant 2$, $\nu\geqslant\nu_0$, $t\geqslant 0$), such that
       \begin{equation}\label{eq:fullyquant}
     \mathrm{Tr}\,\big| \,\gamma_{N,\nu,t}^{(1)} - \gamma_{\infty,\infty,t}^{(1)}\,\big|\;\leqslant\;C\,e^{C t}\Big(\frac{1}{N}+\frac{1}{\sqrt{\nu}}\Big)\,.
   \end{equation}
   In particular, in the limit $N,\nu\to\infty$ the one-body marginal $\gamma_{N,\nu,t}^{(1)}$ converges in trace norm to a rank-two density matrix.
   \end{cor}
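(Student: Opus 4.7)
The result is a direct triangle-inequality combination of the two preceding theorems, so the plan is essentially to chain them.

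First, I would insert the intermediate one-body marginal $\gamma_{\infty,\nu,t}^{(1)}$ (defined in \eqref{eq:gammainfk-nu}) between $\gamma_{N,\nu,t}^{(1)}$ and $\gamma_{\infty,\infty,t}^{(1)}$, and apply the triangle inequality in the trace norm:
\[
\mathrm{Tr}\,\big|\gamma_{N,\nu,t}^{(1)}-\gamma_{\infty,\infty,t}^{(1)}\big|
\;\leqslant\;
\mathrm{Tr}\,\big|\gamma_{N,\nu,t}^{(1)}-\gamma_{\infty,\nu,t}^{(1)}\big|
+\mathrm{Tr}\,\big|\gamma_{\infty,\nu,t}^{(1)}-\gamma_{\infty,\infty,t}^{(1)}\big|.
\]
Next, I apply Theorem \ref{thm:N-nu-dynamics} with $k=1$ to the first summand on the right, bounding it by $A_1 e^{A_1 t}/N$, and Theorem \ref{thm:dimonte} to the second summand, bounding it by $A\,e^{At}/\sqrt{\nu}$. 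Both bounds are valid uniformly for $N\geqslant 2$, $\nu\geqslant\nu_0$, and $t\geqslant 0$, with constants depending only on $\nu_0$, $V$, and the occupations $n_1,n_2$.

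Setting $C:=\max\{A_1,A\}$, the two exponentials are dominated by $e^{Ct}$, the two prefactors by $C$, and summing the bounds yields
\[
\mathrm{Tr}\,\big|\gamma_{N,\nu,t}^{(1)}-\gamma_{\infty,\infty,t}^{(1)}\big|
\;\leqslant\; C\,e^{Ct}\Big(\frac{1}{N}+\frac{1}{\sqrt{\nu}}\Big),
\]
which is \eqref{eq:fullyquant}. For the final assertion, at any fixed $t\geqslant 0$ the right-hand side of \eqref{eq:fullyquant} vanishes as $N,\nu\to\infty$, so $\gamma_{N,\nu,t}^{(1)}\to\gamma_{\infty,\infty,t}^{(1)}$ in trace norm; the rank-two property of the limit is then inherited from Theorem \ref{thm:dimonte}, which explicitly gives $\mathrm{rank}(\gamma_{\infty,\infty,t}^{(1)})=2$ for a.e.~$t$.

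There is no substantive obstacle at this stage: all the analytic work (well-posedness of the Hartree-type flow, mean-field convergence at finite gap, and control of the infinite-gap limit of the one-body Hartree dynamics) is already contained in Theorems \ref{thm:N-nu-dynamics} and \ref{thm:dimonte}, which this corollary simply juxtaposes.
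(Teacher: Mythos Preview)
Your proof is correct and matches the paper's approach exactly: the paper itself presents this corollary as following from ``an obvious triangular inequality between \eqref{eq:traceconvergence-nu} and \eqref{eq:traceconvergence-infnu}'', which is precisely the chaining through $\gamma_{\infty,\nu,t}^{(1)}$ that you carry out. Your handling of the constants via $C:=\max\{A_1,A\}$ and the concluding remark on the rank-two limit are both appropriate.
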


   With Corollary \ref{cor:combinedNnu} we thus obtain, for the toy model \eqref{eq:nuHilbertspaces}-\eqref{eq:defPsiN2} under consideration, a control on the time evolution of an initial many-body state of the type \eqref{eq:defPsiN2}, with two-level fragmented BEC, with the following two main features:
   \begin{itemize}
    \item whereas in general, at the level of the one-body marginals, $\gamma_{N,\nu,t}^{(1)}$ acquires infinite rank at almost every time $t>0$, and such an infinite rank is present in general also in the limit of infinitely many particles, the additional infinite gap limit (hence, all together, the limit $N,\nu\to\infty$) yields for a.e.~$t>0$ a density matrix with the same rank two initially displayed at $t=0$, and with the explicit Hartree-like effective dynamics \eqref{eq:infgapdynamics1}-\eqref{eq:infgapdynamics2}; this was actually the main result in \cite{DimFalcOlg21-fragmented} (despite the already commented excessively large definition of fragmented BEC adopted therein);
    \item in addition, which was inherently missing in the measure-theoretic analysis of \cite{DimFalcOlg21-fragmented}, the rate of convergence is fully quantitative in $\nu$ \emph{and} $N$, as provided by \eqref{eq:fullyquant} above.
   \end{itemize}

  \section{Proof of Theorem \ref{thm:N-nu-dynamics}}\label{sec:proofofThmdynamics}

  For the proof of Theorem \ref{thm:N-nu-dynamics} the special form of the one-body Hamiltonian and the fact that the spinor sector is not affected by the dynamics play a crucial role.

  Let us start with simple preparatory steps. We observe that, as elements of $\cH=L^2(\mathbb{R}^3)\otimes\mathbb{C}^2$, 
  \begin{equation}\label{eq:phi1phi2phi}
   \varphi_1\,=\,\phi^0\otimes\binom{1}{0}\,,\qquad \varphi_2\,=\,\phi^0\otimes\binom{0}{1}\,,\qquad \textrm{with}\;\;\phi^0(x)\,:=\,(2\pi)^{-\frac{3}{2}}e^{-x^2/2}\,,
   \end{equation}
  and, as an operator on $\cH$,
  \begin{equation}
   \mathfrak{h}_\nu\,=\,h_\nu\otimes\mathbbm{1}\,,\qquad \textrm{with}\;\; h_\nu\,:=\,{\textstyle\frac{1}{2}}\nu(-\Delta+x^2-3)\,.
  \end{equation}
  Therefore, the solution $\varphi_\nu^{\theta_1,\theta_2}$ to the Cauchy problem \eqref{eq:IVPhartree-nu} has at any time $t$ the form
  \begin{equation}
   \varphi_{\nu,t}^{\theta_1,\theta_2}\,=\,\phi_t\otimes\bigg(\sqrt{n_1}\,e^{-\ii\theta_1}\binom{1}{0}+\sqrt{n_2}\,e^{-\ii\theta_2}\binom{0}{1}\bigg)\,=\,\phi_t\otimes\binom{\sqrt{n_1}\,e^{-\ii\theta_1}}{\sqrt{n_2}\,e^{-\ii\theta_2}}\,,
  \end{equation}
 where $\phi\equiv\phi_t(x)$ is the ($\nu$-dependent) solution to the Cauchy problem
  \begin{equation}\label{eq:Hatreeforphit}
   \begin{split}
    \ii\partial_t\phi\:&=\:{\textstyle\frac{1}{2}}\nu(-\Delta+x^2-3)\phi+(V*|\phi|^2)\phi\,, \\
    \phi_{t=0}\:&=\:\phi^0
   \end{split}
  \end{equation}
  (Hartree evolution on the spatial sector only). This also means that \eqref{eq:gammainfk-nu} reads
  \begin{equation}\label{eq:gammainfnutrewritten}
   \gamma_{\infty,\nu,t}^{(k)}\,=\,|\phi_t^{\otimes k}\rangle\langle\phi_t^{\otimes k}|\otimes\frac{1}{\:(2\pi)^2}\iint_{[0,2\pi]^2}\ud\theta_1\,\ud\theta_2 \,\bigg|\binom{\sqrt{n_1}\,e^{-\ii\theta_1}}{\sqrt{n_2}\,e^{-\ii\theta_2}}^{\!\!\otimes k}\bigg\rangle\bigg\langle \binom{\sqrt{n_1}\,e^{-\ii\theta_1}}{\sqrt{n_2}\,e^{-\ii\theta_2}}^{\!\!\otimes k}\bigg|\,,
  \end{equation}
   as an operator acting on (the bosonic sector of) $L^2(\mathbb{R}^{3k})\otimes(\mathbb{C}^{2})^{\otimes k}$ -- see \eqref{eq:isoN} below.

  Analogously, at the $N$-body level it is convenient to exploit the canonical Hilbert space isomorphism
  \begin{equation}\label{eq:isoN}
   (L^2(\mathbb{R}^3)\otimes\mathbb{C}^2)^{\otimes N}\;\cong\;L^2(\mathbb{R}^{3N})\otimes(\mathbb{C}^{2})^{\otimes N}
  \end{equation}
  and re-write 
  \begin{equation}\label{eq:HN-HNspat}
   H_{N,\nu}\:=\;H^{\mathrm{spat}}_{N,\nu}\otimes\mathbbm{1}\,,\qquad\textrm{with}\;\;H^{\mathrm{spat}}_{N,\nu}\::=\:\sum_{j=1}^N h_{\nu,j}\,+\frac{1}{N}\!\!\sum_{1\leqslant \ell<r\leqslant N }\!\!V(x_\ell-x_r)
  \end{equation}
 (`spat' standing for the \emph{spatial} sector, and $h_{\nu,j}$ acting as $h_\nu$ on the $j$-th variable). The initial $N$-body state \eqref{eq:defPsiN2} is re-written as
 \begin{equation}
  \Psi_N\::=\;(\phi^0)^{\otimes N}\otimes \bigg(\binom{1}{0}^{\!\!\otimes N_1}\vee\binom{0}{1}^{\!\!\otimes N_2} \bigg)
 \end{equation}
  and its evolved version is re-written as
  \begin{equation}
  \begin{split}
    \Psi_{N,\nu,t}\;=\;e^{-\ii t H_{N,\nu}}\Psi_N\;&=\;\big(e^{-\ii t H^{\mathrm{spat}}_{N,\nu}}(\phi^0)^{\otimes N}\big)\otimes \bigg(\binom{1}{0}^{\!\!\otimes N_1}\vee\binom{0}{1}^{\!\!\otimes N_2} \bigg) \\
    &\equiv\;\qquad\;\; \Psi_{N,\nu,t}^{\mathrm{spat}}\qquad\;\;\otimes \qquad\qquad\Psi_{N_1,N_2}^{\mathrm{spin}}
  \end{split}
  \end{equation}
  (`spin' standing for the \emph{spinor} sector).
  In turn, this implies that the $k$-body reduced density matrix associated to $ \Psi_{N,\nu,t}$ has the form 
  \begin{equation}\label{eq:factorisationofgamman}
   \gamma_{N,\nu,t}^{(k)}\;=\;\gamma_{\mathrm{spat},N,\nu,t}^{(k)}\otimes\gamma_{\mathrm{spin},N_1,N_2}^{(k)}\,,
  \end{equation}
  where $\gamma_{\mathrm{spat},N,\nu,t}^{(k)}$ is the $k$-marginal of $|\Psi_{N,\nu,t}^{\mathrm{spat}}\rangle\langle \Psi_{N,\nu,t}^{\mathrm{spat}}|$ obtained by tracing out $N-k$ spatial degrees of freedom, and $\gamma_{\mathrm{spin},N_1,N_2}^{(k)}$ is the $k$-marginal of $|\Psi_{N_1,N_2}^{\mathrm{spin}}\rangle\langle\Psi_{N_1,N_2}^{\mathrm{spin}}|$ obtained by tracing out $N-k$ spinor degrees of freedom. The ket-bra notation refers, respectively, to the spatial variables Hilbert space and the spin variables Hilbert space.

  We can prove the following.
  
  \begin{prop}\label{prop:spinor-k-marginal}
   One has
   \begin{equation}\label{eq:spinor-k-marginal}
    \gamma_{\mathrm{spin},N_1,N_2}^{(k)}\;=\;\sum_{j=0}^k\,c_{k,j}\,\bigg|\binom{1}{0}^{\!\!\otimes(k-j)}\vee\binom{0}{1}^{\!\!\otimes j}\bigg\rangle\bigg\langle\binom{1}{0}^{\!\!\otimes(k-j)}\vee\binom{0}{1}^{\!\!\otimes j}\bigg|\,,\qquad c_{k,j}\,:=\,\frac{\,\binom{N_1}{k-j} \binom{N_2}{j}\,}{\binom{N}{k}}\,.
   \end{equation}
  \end{prop}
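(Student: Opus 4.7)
The plan is to carry out the partial trace explicitly in a combinatorial basis of $(\mathbb{C}^2)^{\otimes N}$ that is adapted to the split of the first $k$ slots from the remaining $N-k$. Writing $e_1:=\binom{1}{0}$ and $e_2:=\binom{0}{1}$, I attach to each subset $S\subseteq\{1,\dots,N\}$ with $|S|=N_1$ the product vector $|\chi_S\rangle\in(\mathbb{C}^2)^{\otimes N}$ whose $j$-th tensor factor is $e_1$ if $j\in S$ and $e_2$ otherwise. These $\binom{N}{N_1}$ vectors are pairwise orthonormal, and the very definition of the symmetric tensor product gives
\begin{equation*}
 \Psi_{N_1,N_2}^{\mathrm{spin}}\;=\;\binom{N}{N_1}^{-1/2}\sum_{\substack{S\subseteq\{1,\dots,N\}\\ |S|=N_1}}|\chi_S\rangle\,.
\end{equation*}

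Next, I factor $|\chi_S\rangle=|\chi_S^{[1,k]}\rangle\otimes|\chi_S^{[k+1,N]}\rangle$ according to the partition of the indices, and observe that the partial trace of $|\chi_S\rangle\langle\chi_{S'}|$ over the tail slots equals $\delta_{S\cap\{k+1,\dots,N\},\,S'\cap\{k+1,\dots,N\}}\,|\chi_S^{[1,k]}\rangle\langle\chi_{S'}^{[1,k]}|$, by orthonormality of the tail basis. Inserting this into the expansion of $|\Psi_{N_1,N_2}^{\mathrm{spin}}\rangle\langle\Psi_{N_1,N_2}^{\mathrm{spin}}|$ and grouping the contributions by the common tail $T\subseteq\{k+1,\dots,N\}$ of size $t$ forces both head subsets $A,A'\subseteq\{1,\dots,k\}$ to have size $a:=N_1-t$, and decouples the surviving double sum into independent sums over $A$ and over $A'$.

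The final step is to recognise, again from the definition of the bosonic symmetric product, that
\begin{equation*}
 \sum_{\substack{A\subseteq\{1,\dots,k\}\\|A|=a}}|\chi_A^{[1,k]}\rangle\;=\;\binom{k}{a}^{1/2}\,e_1^{\otimes a}\vee e_2^{\otimes (k-a)}\,,
\end{equation*}
so that the inner sum collapses to a rank-one projection onto $e_1^{\otimes(k-j)}\vee e_2^{\otimes j}$ with $j:=k-a$. There are $\binom{N-k}{t}=\binom{N-k}{N_2-j}$ admissible tails $T$, and combining this with the prefactor $\binom{N}{N_1}^{-1}\binom{k}{j}$ produces the coefficient $\binom{N}{N_1}^{-1}\binom{N-k}{N_2-j}\binom{k}{j}$, which a routine factorial rearrangement identifies with $\binom{N_1}{k-j}\binom{N_2}{j}/\binom{N}{k}$, as claimed.

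No genuine technical obstacle is anticipated: the proof is entirely an exercise in bookkeeping the partial trace consistently with the partition of the indices and in tracking the normalisation constants of the symmetric products. The only spot requiring a little care is the re-indexing in the third step, namely relating the size $t$ of the common tail to the occupation number $j$ of $e_2$ in the head; the combinatorial identity between the two equivalent forms of $c_{k,j}$ is then a one-line calculation. An alternative, essentially equivalent, route would use second-quantisation of the spinor Fock space and the action of number operators on symmetric occupation-number eigenstates, but the elementary combinatorial presentation above is the most transparent.
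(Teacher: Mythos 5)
Your proposal is correct and follows essentially the same route as the paper's proof: expand $e_1^{\otimes N_1}\vee e_2^{\otimes N_2}$ into the orthonormal family of $\binom{N}{N_1}$ elementary tensors, use orthogonality of the traced-out factors to kill all cross terms whose tail configurations differ, and count multiplicities, ending with the same binomial identity $\binom{N}{N_1}^{-1}\binom{N-k}{N_2-j}\binom{k}{j}=\binom{N}{k}^{-1}\binom{N_1}{k-j}\binom{N_2}{j}$. The only difference is organisational: you group the surviving terms directly by the common tail subset $T$, whereas the paper first regroups the expansion into the vectors $\mathsf{V}_j$ adapted to the $N_1|N_2$ split and then argues separately which cross terms survive -- your bookkeeping is, if anything, slightly more streamlined.
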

  It is not by chance that the coefficients $c_{k,j}$ are the very same as in \eqref{eq:gammak}. Indeed, the statement of Proposition \ref{prop:spinor-k-marginal} has the same structure as the implication \eqref{eq:defPsiN}$\Rightarrow$\eqref{eq:gammak} in Proposition \ref{prop:reduced-marginals-time-zero}. Explicitly, to claim that the $k$-marginal of $\Psi_N=\varphi_{1}^{\otimes N_{1}}\vee\varphi_{2}^{\otimes N_{2}}$ is $\gamma_N^{(k)}=\sum_{j=0}^{k}\,c_{k,j}\,
    \big|\varphi_{1}^{\otimes(k-j)}\vee\varphi_{2}^{\otimes j}\big\rangle\big\langle\varphi_{1}^{\otimes(k-j)}\vee\varphi_{2}^{\otimes j}\big|$ is tantamount as to claim that the $k$-marginal of $\Psi_{N_1,N_2}^{\mathrm{spin}}=\binom{1}{0}^{\!\!\otimes N_1}\vee\binom{0}{1}^{\!\!\otimes N_2}$ is 
    $ \gamma_{\mathrm{spin},N_1,N_2}^{(k)}=\sum_{j=0}^k\,c_{k,j}\,\big|\binom{1}{0}^{\!\!\otimes(k-j)}\vee\binom{0}{1}^{\!\!\otimes j}\big\rangle\big\langle\binom{1}{0}^{\!\!\otimes(k-j)}\vee\binom{0}{1}^{\!\!\otimes j}\big|$ up to the canonical isomorphism
    \[
     \mathrm{span}\{\varphi_1,\varphi_2\}^{\otimes N}\,\xrightarrow{\cong}\,(\mathbb{C}^2)^{\otimes N}\,,\qquad \varphi_1\,\mapsto\,\binom{1}{0}\,,\quad \varphi_2\,\mapsto\,\binom{0}{1}\,.
    \]
    Thus, by proving Proposition \ref{prop:spinor-k-marginal} (which will be done in a moment) one also proves Proposition \ref{prop:reduced-marginals-time-zero}.

  With these preparations at hand, we can now control the asymptotic behaviour of both $\gamma_{\mathrm{spat},N,\nu,t}^{(k)}$ and $\gamma_{\mathrm{spin},N_1,N_2}^{(k)}$ as $N\to\infty$.

  \begin{thm}\label{thm:spatialdynamics} Under the assumption of Theorem \ref{thm:N-nu-dynamics}, for every $k\in\{1,\dots,N\}$ and arbitrary $\nu_0>0$ there exists constants $b_k\equiv b_k(\phi^0,\nu_0)>0$,  independent of $N$, $t$, and $\nu\geqslant\nu_0$, such that, for every $t>0$,
  \begin{equation}\label{eq:spatialdynamics}
   \mathrm{Tr}\,\big|\,\gamma_{\mathrm{spat},N,\nu,t}^{(k)}-|\phi_t^{\otimes k}\rangle\langle\phi_t^{\otimes k}|\,\big|\;\leqslant\;b_k\,\frac{e^{b_k t}}{N}\,.
  \end{equation}
  The above trace is in $L^2(\mathbb{R}^{3k})$. The one-body orbital $\phi_t$ is characterised in \eqref{eq:Hatreeforphit}.
  \end{thm}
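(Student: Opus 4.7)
The plan is to apply the Knowles--Pickl counting-functional strategy \cite{kp-2009-cmp2010,Pickl-JSP-2010,Pickl-RMP-2015} to the purely spatial dynamics. By the factorisation \eqref{eq:HN-HNspat}, $\Psi_{N,\nu,t}^{\mathrm{spat}}=e^{-\mathrm{i}tH_{N,\nu}^{\mathrm{spat}}}(\phi^0)^{\otimes N}$ is the Schr\"odinger evolution of a pure product state under a standard mean-field Hamiltonian whose one-body part happens to depend on $\nu$. The target estimate \eqref{eq:spatialdynamics} is therefore the classical mean-field convergence rate $1/N$, with the sole subtlety that all constants must be uniform in $\nu\geq\nu_0$.

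The preliminary observation responsible for this uniformity is that $\phi^0$ is the non-degenerate ground state of $\tfrac12(-\Delta+x^2-3)$ with eigenvalue zero, so $h_\nu\phi^0=0$ and the Hartree energy of the initial datum reduces to the $\nu$-free quantity $\tfrac12\langle\phi^0,(V*|\phi^0|^2)\phi^0\rangle$. Conservation of this energy along \eqref{eq:Hatreeforphit}, combined with the form bound \eqref{eq:assumptionV2} which makes $V$ infinitesimally small with respect to $-\Delta+x^2+\mathbbm{1}$, propagates uniform-in-$\nu\geq\nu_0$ a priori bounds on $\langle\phi_t,h_\nu\phi_t\rangle$ and on $\|\phi_t\|_{H^{1,1}(\mathbb{R}^3)}$, and hence on $\|\phi_t\|_{L^\infty(\mathbb{R}^3)}$ by Sobolev embedding. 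This also yields global well-posedness of \eqref{eq:Hatreeforphit} with all the quantities needed below bounded uniformly in $\nu$.

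Next, introduce the projectors $p_t:=|\phi_t\rangle\langle\phi_t|$ and $q_t:=\mathbbm{1}-p_t$ on $L^2(\mathbb{R}^3)$, the counting operator $\widehat q_t:=\sum_{j=1}^N q_{t,j}$ on $L^2_{\mathrm{sym}}(\mathbb{R}^{3N})$, and the Pickl functional
\[
 \alpha_N(t)\,:=\,\tfrac{1}{N}\big\langle\Psi_{N,\nu,t}^{\mathrm{spat}},\widehat q_t\,\Psi_{N,\nu,t}^{\mathrm{spat}}\big\rangle\,=\,1-\big\langle\phi_t,\gamma_{\mathrm{spat},N,\nu,t}^{(1)}\phi_t\big\rangle\,,
\]
which vanishes at $t=0$. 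Since both $\Psi_{N,\nu,t}^{\mathrm{spat}}$ and $p_t$ evolve with the same one-body generator $h_\nu$, the $\nu$-dependent contribution to $\mathrm{i}\dot\alpha_N$ cancels, and what remains is a commutator of $\widehat q_t$ with $H_{N,\nu}^{\mathrm{spat}}-\sum_j h_{\nu,j}-\sum_j (V*|\phi_t|^2)_j$, involving only the pair potential $V(x_\ell-x_r)/N$ and the Hartree mean field. Decomposing this commutator along the standard blocks $p_tp_t,\,p_tq_t,\,q_tp_t,\,q_tq_t$ and estimating by the usual manipulations, using \eqref{eq:assumptionV2} together with the $\nu$-uniform control of $\phi_t$ from the previous paragraph, one obtains $|\dot\alpha_N(t)|\leq C(\alpha_N(t)+N^{-1})$ with $C=C(\phi^0,V,\nu_0)$; Gronwall then gives $\alpha_N(t)\leq C\,e^{Ct}/N$. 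The inequalities \eqref{eq:equivalent-BEC-control-1compontent}--\eqref{eq:1klevel} applied in the spatial sector upgrade this to the trace-norm bound \eqref{eq:spatialdynamics} at fixed $k$, with an extra factor $k$ absorbed into $b_k$.

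The main obstacle is making rigorous the cancellation of the large $\nu$-dependent kinetic term in $\dot\alpha_N$: one must verify that no residual factor of $\nu$ survives in the block estimates. This requires phrasing all arguments in an interaction-picture-like fashion, so that $\widehat q_t$ tracks the one-body Hartree evolution exactly, and relying only on the $\nu$-uniform Sobolev control of $\phi_t$ from the preliminary step to bound commutators that, after cancellation, involve solely the $\nu$-free operators $V$ and $V*|\phi_t|^2$.
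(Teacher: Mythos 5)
Your proposal takes a genuinely different route from the paper (which adapts the Fock-space fluctuation scheme of Chen--Lee--Schlein, see Theorem \ref{thm:CLSmain} and Section \ref{sec:effectiveMFdynamics}), but as written it does not deliver the claimed rate. The Gr\"onwall step is fine and gives $\alpha_N(t)\leqslant C e^{Ct}/N$; the gap is in the final upgrade to trace norm. The inequality \eqref{eq:equivalent-BEC-control-1compontent} you invoke costs a square root: it yields $\mathrm{Tr}\,|\gamma^{(1)}_{\mathrm{spat},N,\nu,t}-|\phi_t\rangle\langle\phi_t||\leqslant 2\sqrt{\alpha_N(t)}=O(N^{-1/2})$, not $O(N^{-1})$, and the same loss recurs at every level $k$ via \eqref{eq:1klevel}. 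This square root is not an artefact of the inequality but of the information you retain: writing $\gamma^{(1)}-p_t$ in the blocks $p_t(\cdot)p_t$, $p_t(\cdot)q_t$, $q_t(\cdot)p_t$, $q_t(\cdot)q_t$, the diagonal blocks have trace norm $\leqslant\alpha_N$, but the off-diagonal block $p_t\gamma^{(1)}q_t$ is rank one with trace norm bounded only by $\|\sqrt{\gamma^{(1)}}\,p_t\|_{\mathrm{HS}}\,\|\sqrt{\gamma^{(1)}}\,q_t\|_{\mathrm{HS}}\leqslant\sqrt{\alpha_N}$, and knowledge of $\alpha_N$ alone cannot improve this. To reach the optimal $1/N$ one must control the linear-in-fluctuation (off-diagonal) contributions separately; this is exactly what the coherent-state/Bogoliubov comparison between $\mathcal{U}$ and $\mathcal{U}_2$ accomplishes in the paper's proof. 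Since the $1/N$ rate of \eqref{eq:spatialdynamics} is what feeds the $1/N$ in Theorem \ref{thm:N-nu-dynamics}, the weaker $N^{-1/2}$ output of the counting method is not sufficient for the statement as claimed.

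Two further remarks. First, your preliminary observations on $\nu$-uniformity are sound and essentially reproduce Theorem \ref{thm:Hartreebounds}: the identity $h_\nu\phi^0=0$ and energy conservation do give uniform-in-$\nu$ control of $\|\phi_t\|_{\mathcal{Q}}$, and the cancellation of the $\nu$-dependent one-body generator in $\dot\alpha_N$ is correct, since $p_t$ and $\Psi^{\mathrm{spat}}_{N,\nu,t}$ carry the same free part. Second, a minor slip: $H^1(\mathbb{R}^3)$ (or $H^{1,1}$) does not embed into $L^\infty(\mathbb{R}^3)$, so you cannot claim an $L^\infty$ bound on $\phi_t$ this way; what you actually need in the block estimates is a bound on $\|V*|\phi_t|^2\|_{L^\infty}$ and on $\|V(x-\cdot)\phi_t\|_{L^2}$, both of which do follow from \eqref{eq:assumptionV2} together with the uniform $\mathcal{Q}$-bound, so this part is repairable.
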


  \begin{prop}\label{prop:spinlimits}
   Under the assumption of Theorem \ref{thm:N-nu-dynamics} concerning $N_1,N_2,N$, for each $k\in\{1,\dots,N-1\}$ there exists $c_k>0$ such that
   \begin{equation}\label{eq:spinlimits}
    \mathrm{Tr}\,\big|\,\gamma_{\mathrm{spin},N_1,N_2}^{(k)}-\gamma_{\mathrm{spin},\infty}^{(k)}\,\big|\;\leqslant\;\frac{c_k}{N}\,,
   \end{equation}
  where
  \begin{equation}\label{eq:gammaspininf}
   \gamma_{\mathrm{spin},\infty}^{(k)}\;:=\;\sum_{j=0}^k\,c^\infty_{k,j}\,\bigg|\binom{1}{0}^{\!\!\otimes(k-j)}\vee\binom{0}{1}^{\!\!\otimes j}\bigg\rangle\bigg\langle\binom{1}{0}^{\!\!\otimes(k-j)}\vee\binom{0}{1}^{\!\!\otimes j}\bigg|\,,\qquad c^\infty_{k,j}\,:=\,\binom{k}{j} n_1^{k-j}n_2^j\,.
  \end{equation}
   The above trace is in $(\mathbb{C}^{2})^{\otimes k}$.
  \end{prop}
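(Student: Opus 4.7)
The plan is to reduce the estimate directly to a comparison of the scalar coefficients appearing in Proposition \ref{prop:spinor-k-marginal} and in \eqref{eq:gammaspininf}, exploiting the fact that both $\gamma_{\mathrm{spin},N_1,N_2}^{(k)}$ and $\gamma_{\mathrm{spin},\infty}^{(k)}$ are spectrally decomposed in the \emph{same} orthonormal system
\[
 \xi_{k,j}\;:=\;\binom{1}{0}^{\!\!\otimes(k-j)}\vee\binom{0}{1}^{\!\!\otimes j}\,,\qquad j\in\{0,1,\dots,k\}\,,
\]
of $(\mathbb{C}^2)^{\otimes_{\mathrm{sym}} k}$. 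Orthonormality of $(\xi_{k,j})_{j=0}^k$ (the same orthonormality argument used in the proof of Proposition \ref{prop:reduced-marginals-time-zero}) makes the rank-one projections $|\xi_{k,j}\rangle\langle\xi_{k,j}|$ mutually orthogonal in Hilbert--Schmidt sense and each of unit trace norm. Therefore
\[
 \mathrm{Tr}\,\big|\,\gamma_{\mathrm{spin},N_1,N_2}^{(k)}-\gamma_{\mathrm{spin},\infty}^{(k)}\,\big|\;=\;\sum_{j=0}^k\big|c_{k,j}-c^\infty_{k,j}\big|\,,
\]
and the task is reduced to an elementary numerical comparison of the two finite families of coefficients.

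To bound each $|c_{k,j}-c^\infty_{k,j}|$ I would interpolate through the auxiliary coefficient $\widetilde{c}_{k,j}:=\binom{k}{j}N_1^{k-j}N_2^j/N^k$ already introduced in \eqref{eq:gammaktilde}, writing
\[
 \big|c_{k,j}-c^\infty_{k,j}\big|\;\leqslant\;\big|c_{k,j}-\widetilde{c}_{k,j}\big|+\big|\widetilde{c}_{k,j}-c^\infty_{k,j}\big|\,.
\]
The first summand on the right is controlled by Lemma \ref{lem:combinatorial}, which yields $|c_{k,j}-\widetilde{c}_{k,j}|\leqslant a_k/N$ (the hypothesis $|N_j/N-n_j|=O(N^{-1})$ of Theorem \ref{thm:N-nu-dynamics} ensures, for all sufficiently large $N$, the lower bound $N_j\geqslant \kappa N$ with, say, $\kappa=\tfrac{1}{2}\min\{n_1,n_2\}$, as required by the lemma). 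The second summand is
\[
 \big|\widetilde{c}_{k,j}-c^\infty_{k,j}\big|\;=\;\binom{k}{j}\,\Big|\tfrac{N_1^{k-j}N_2^j}{N^k}-n_1^{k-j}n_2^j\Big|\,,
\]
and a standard telescoping identity of the type $\alpha^p\beta^q-\alpha_0^p\beta_0^q=\sum_{i=1}^p\alpha^{i-1}\alpha_0^{p-i}(\alpha-\alpha_0)\beta^q+\alpha_0^p\sum_{i=1}^q\beta^{i-1}\beta_0^{q-i}(\beta-\beta_0)$, applied with $\alpha=N_1/N$, $\alpha_0=n_1$, $\beta=N_2/N$, $\beta_0=n_2$, together with the assumption $|N_j/N-n_j|=O(N^{-1})$, gives $|\widetilde{c}_{k,j}-c^\infty_{k,j}|\leqslant b_k/N$ for some $b_k>0$ depending only on $k$, $n_1$, and $n_2$.

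Summing the $k+1$ resulting estimates yields \eqref{eq:spinlimits} with $c_k:=(k+1)(a_k+b_k)$. There is no genuinely hard step here: the whole statement is a combinatorial consequence of the explicit form \eqref{eq:spinor-k-marginal} and of the fact that the overlaps among the basis vectors $\xi_{k,j}$ decouple the spin marginal into an $\ell^1$ sum over $j\in\{0,\dots,k\}$. The only mild care is in verifying that the hypotheses of Lemma \ref{lem:combinatorial} (namely the lower bound $N_j\geqslant \kappa N$) are met uniformly in $N$ large, which is guaranteed by $n_1,n_2\in(0,1)$ and the $O(N^{-1})$-closeness of $N_j/N$ to $n_j$; the constant $c_k$ produced by the argument depends only on $k$, $n_1$, $n_2$, and is in particular independent of $N$.
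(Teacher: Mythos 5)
Your proof is correct and follows essentially the same route as the paper's: reduce to the $\ell^1$ sum $\sum_{j=0}^k|c_{k,j}-c^\infty_{k,j}|$ over the common orthonormal family $\xi_{k,j}$, then control each coefficient difference by $O(N^{-1})$ and set $c_k=(k+1)\times(\text{uniform constant})$. You are in fact slightly more careful than the paper, which invokes Lemma \ref{lem:combinatorial} directly for the bound $|c_{k,j}-c^\infty_{k,j}|\leqslant a_k/N$ even though that lemma as stated only compares $c_{k,j}$ with $\widetilde{c}_{k,j}=\binom{k}{j}N_1^{k-j}N_2^j/N^k$; your interpolation through $\widetilde{c}_{k,j}$ together with the telescoping estimate exploiting $|N_j/N-n_j|=O(N^{-1})$, and your verification of the hypothesis $N_j\geqslant\kappa N$ of the lemma, supply the elementary steps the paper leaves implicit.
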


  \begin{prop}\label{prop:gammaspinrewritten}
   For $\gamma_{\mathrm{spin},\infty}^{(k)}$ defined in \eqref{eq:gammaspininf} one also has
   \begin{equation}\label{eq:gammaspinrewritten}
    \gamma_{\mathrm{spin},\infty}^{(k)}\;=\;\frac{1}{\:(2\pi)^2}\iint_{[0,2\pi]^2}\ud\theta_1\,\ud\theta_2 \,\bigg|\binom{\sqrt{n_1}\,e^{-\ii\theta_1}}{\sqrt{n_2}\,e^{-\ii\theta_2}}^{\!\!\otimes k}\bigg\rangle\bigg\langle \binom{\sqrt{n_1}\,e^{-\ii\theta_1}}{\sqrt{n_2}\,e^{-\ii\theta_2}}^{\!\!\otimes k}\bigg|\,.
   \end{equation}
  \end{prop}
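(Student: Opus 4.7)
The plan is to reduce Proposition \ref{prop:gammaspinrewritten} to the combinatorial computation already carried out for $\widetilde{\gamma}_N^{(k)}$ in the proof of Proposition \ref{prop:reduced-marginals-time-zero}, since the right-hand side of \eqref{eq:gammaspinrewritten} has exactly the structure of the spinor version of \eqref{eq:ourauxiliary} (with $N_j/N$ replaced by $n_j$), and the left-hand side has the structure of \eqref{eq:gammaktilde} (with $N_j/N$ replaced by $n_j$). Concretely, I would first write
\[
\binom{\sqrt{n_1}\,e^{-\ii\theta_1}}{\sqrt{n_2}\,e^{-\ii\theta_2}}\;=\;\sqrt{n_1}\,e^{-\ii\theta_1}e_1+\sqrt{n_2}\,e^{-\ii\theta_2}e_2\,,\qquad e_1:=\binom{1}{0},\;\;e_2:=\binom{0}{1},
\]
and expand its $k$-th tensor power as a sum over $\sigma\in\{1,2\}^k$ of products $\prod_{\ell=1}^k\sqrt{n_{\sigma_\ell}}\,e^{-\ii\theta_{\sigma_\ell}}\,e_{\sigma_\ell}$.

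Next I would form the outer product and integrate over $(\theta_1,\theta_2)\in[0,2\pi]^2$. Mirroring the argument after \eqref{eq:gammaktilde}, the double integral of $e^{\ii(\sum_\ell\theta_{\sigma_\ell}-\sum_m\theta_{\sigma'_m})}$ vanishes unless $|\sigma^{-1}(1)|=|\sigma'^{-1}(1)|$, in which case it equals $(2\pi)^2$. The surviving contribution, indexed by $j:=|\sigma^{-1}(2)|=|\sigma'^{-1}(2)|$, becomes
\[
\sum_{j=0}^{k}n_1^{k-j}n_2^{j}\Bigg(\sum_{\substack{\sigma\in\{1,2\}^k \\ |\sigma^{-1}(1)|=k-j}}\!\!\!\bigotimes_{\ell=1}^k e_{\sigma_\ell}\Bigg)\Bigg(\sum_{\substack{\sigma'\in\{1,2\}^k \\ |\sigma'^{-1}(1)|=k-j}}\!\!\!\bigotimes_{m=1}^k e_{\sigma'_m}\Bigg)^{\!*}.
\]

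At this point I would invoke the same identification used in the proof of Proposition \ref{prop:reduced-marginals-time-zero}, namely
\[
\sum_{\substack{\sigma\in\{1,2\}^k \\ |\sigma^{-1}(1)|=k-j}}\bigotimes_{\ell=1}^k e_{\sigma_\ell}\;=\;\binom{k}{j}^{\!\frac{1}{2}}\;e_1^{\otimes(k-j)}\vee e_2^{\otimes j}\,,
\]
which follows from the orthonormality of $e_1,e_2$ together with the normalisation convention for $\vee$. Substituting this into the previous display collapses each $j$-summand to $\binom{k}{j}n_1^{k-j}n_2^{j}\big|e_1^{\otimes(k-j)}\vee e_2^{\otimes j}\big\rangle\big\langle e_1^{\otimes(k-j)}\vee e_2^{\otimes j}\big|$, which matches \eqref{eq:gammaspininf}.

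No real obstacle is expected: the argument is an exact combinatorial replica of the $\widetilde{\gamma}_N^{(k)}$ calculation, transported from $\mathrm{span}\{\varphi_1,\varphi_2\}\subset L^2(\mathbb{R}^3)$ to $\mathbb{C}^2$ via the isometry $\varphi_1\mapsto e_1$, $\varphi_2\mapsto e_2$, with the weights $N_1/N,N_2/N$ replaced by their limits $n_1,n_2$. If preferred, one may bypass the explicit integration altogether and observe that, in view of this isometry, the right-hand side of \eqref{eq:gammaspinrewritten} is literally the $k$-marginal of the spinor counterpart of the state $\widetilde{\gamma}_N$ in \eqref{eq:ourauxiliary} (with population fractions $n_j$ instead of $N_j/N$), so that formula \eqref{eq:gammaktilde} applied in that setting yields precisely \eqref{eq:gammaspininf}.
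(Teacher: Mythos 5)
Your argument is correct and is essentially the paper's own proof: the same expansion over $\sigma\in\{1,2\}^k$, the same vanishing of the $\theta$-integrals unless the $e_1$-counts on the two sides agree, and the same regrouping into $\binom{k}{j}\,n_1^{k-j}n_2^j\,|e_1^{\otimes(k-j)}\vee e_2^{\otimes j}\rangle\langle e_1^{\otimes(k-j)}\vee e_2^{\otimes j}|$ via the $\binom{k}{j}^{-1/2}$ normalisation of $\vee$. Your closing remark that one could instead read the right-hand side as the $k$-marginal of the spinor analogue of $\widetilde{\gamma}_N$ with weights $n_j$ and quote \eqref{eq:gammaktilde} is also valid (using $n_1+n_2=1$), but it is a repackaging of the same computation rather than a different proof.
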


  Combining the above results together, Theorem \ref{thm:N-nu-dynamics} follows straightforwardly.
  
  \begin{proof}[Proof of Theorem \ref{thm:N-nu-dynamics}]
   Owing to Theorem \ref{thm:spatialdynamics} and Proposition \ref{prop:spinlimits}, and using \eqref{eq:factorisationofgamman},
   \[\tag{i}
    \mathrm{Tr}\,\Big| \,\gamma_{N,\nu,t}^{(k)}-\Big(|\phi_t^{\otimes k}\rangle\langle\phi_t^{\otimes k}|\otimes \gamma_{\mathrm{spin},\infty}^{(k)}\Big)\,\Big|\;\leqslant\;b_k\frac{e^{b_k t}}{N}+\frac{c_k}{N}\,,
   \]
  the above trace being in the Hilbert space $\cH_N$ defined in \eqref{eq:Hmanybody}, re-written through the canonical isomorphism \eqref{eq:isoN}. Modulo an obvious definition of a new constant $A_k$, the r.h.s.~of (i) can be bounded by $A_k e^{A_k t}/N$, so as to obtain the r.h.s.~of \eqref{eq:traceconvergence-nu}. Concerning the l.h.s., owing to \eqref{eq:gammainfnutrewritten} and Proposition \ref{prop:gammaspinrewritten} we have
  \[\tag{ii}
   |\phi_t^{\otimes k}\rangle\langle\phi_t^{\otimes k}|\otimes \gamma_{\mathrm{spin},\infty}^{(k)}\;=\;\gamma_{\infty,\nu,t}^{(k)}\,.
  \]
  Thus, (i) and (ii) together yield \eqref{eq:traceconvergence-nu}.   
  \end{proof}

  Let us now move on to the proof of the auxiliary results stated above. We defer to the next Section the proof of the effective dynamics on the spatial sector (Theorem \ref{thm:spatialdynamics}), and we focus here on the spinor sector and on the limit of $\gamma_{\mathrm{spin},N_1,N_2}^{(k)}$ as $N\to\infty$ (Propositions \ref{prop:spinor-k-marginal}, \ref{prop:spinlimits}, and  \ref{prop:gammaspinrewritten}).

  \begin{proof}[Proof of Proposition \ref{prop:spinor-k-marginal}]
   Let us assume (non-restrictively) that $N_{1}\leqslant N_{2}$ and introduce the shorthand
   \[
   e_1 \,:=\, \binom{1}{0},\qquad\quad e_2 \,:=\, \binom{0}{1},
   \]
   and let us consider the partial trace in $\gamma_{\mathrm{spin},N_1,N_2}:=|\Psi_{N_1,N_2}^{\mathrm{spin}}\rangle\langle\Psi_{N_1,N_2}^{\mathrm{spin}}|=\big|e_1^{\otimes N_1}\vee e_2^{\otimes N_2}\big\rangle\big\langle e_1^{\otimes N_1}\vee e_2^{\otimes N_2}\big|$ at level $k\in\{1,\dots,N_1\}$. We can conveniently re-arrange the expansion in $e_1^{\otimes N_1}\vee e_2^{\otimes N_2}$ as
   \[
     e_1^{\otimes N_1}\vee e_2^{\otimes N_2}\;=\;\frac{1}{Z}\sum_{j=0}^{N_1}\mathsf{V}_j
   \]
  for a normalisation factor $Z$ that is going to be clear in a moment, and vectors $\mathsf{V}_j\in(\mathbb{C}^2)^{\otimes N_1}\vee (\mathbb{C}^2)^{\otimes N_2}$ defined as follows. $\mathsf{V}_j$ is the \emph{sum} of all terms of the form $\mathsf{w}_j\otimes\widetilde{\mathsf{w}}_j$, where
  \begin{itemize}
   \item $\mathsf{w}_j\in(\mathbb{C}^2)^{\otimes N_1}$ is any tensor product of $N_1-j$ copies of $e_1$ and $j$ copies of $e_2$, that is, $\mathsf{w}_j$ is any of the $\binom{N_1}{j}$ vectors obtained from the representative $e_1^{\otimes (N_1-j)}\otimes e_2^{\otimes j}$ by choosing $j$-out-of-$N_1$ positions for the $e_2$'s and placing the remaining $N_1-j$ vectors of type $e_1$ in the other positions, 
   \item and conversely $\widetilde{\mathsf{w}}_j\in(\mathbb{C}^2)^{\otimes N_2}$ is any of the $\binom{N_2}{j}$ vectors obtained from the representative $e_1^{\otimes j}\otimes e_2^{\otimes (N_2-j)}$ by choosing $j$-out-of-$N_2$ positions for the $e_1$'s and placing the remaining $N_2-j$ vectors of type $e_2$ in the other positions.
  \end{itemize}
  Observe that each $\mathsf{w}_j\otimes\widetilde{\mathsf{w}}_j$ is a $N(=N_1+N_2)$-tensor product with precisely $(N_1-j)+j=N_1$ copies of $e_1$ and $j+(N_2-j)=N_2$ copies of $e_2$, as it must be in the expansion of $e_1^{\otimes N_1}\vee e_2^{\otimes N_2}$. There are by construction $\binom{N_1}{j}\binom{N_2}{j}$ summands in $\mathsf{V}_j$, and they are all mutually orthogonal (due to the different positions occupied by the $e_1$'s and $e_2$'s in any two distinct $\mathsf{w}_j\otimes\widetilde{\mathsf{w}}_j$, and to the orthogonality $e_1\perp e_2$). Moreover, any summand from $\mathsf{V}_j$ is orthogonal to any summand from $\mathsf{V}_{j'}$ for $j\neq j'$ (due to the fact that summands from $\mathsf{V}_j$ and $\mathsf{V}_{j'}$ have a different content of $e_1$'s and $e_2$'s, and again $e_1\perp e_2$). This means that $\sum_{j=0}^{N_1}\mathsf{V}_j$ consists of a sum of normalised $N$-fold tensor products, all mutually orthogonal, whose number amounts to
  \[
\sum_{j=0}^{N_{1}}\binom{N_{1}}{j}\binom{N_{2}}{j}\;=\;\frac{(N_{1}+N_{2})!}{N_{1}!N_{2}!} \;=\; \binom{N}{N_1}\,.
\]
 Thus, the above normalisation factor must be
 \[
  Z\;=\;\binom{N}{N_1}^{\!\frac{1}{2}},
 \]
  and
  \[\tag{i}\label{eq:tobetraced}
   \big|e_1^{\otimes N_1}\vee e_2^{\otimes N_2}\big\rangle\big\langle e_1^{\otimes N_1}\vee e_2^{\otimes N_2}\big|\;=\;\binom{N}{N_1}^{\!-1}\Big|\sum_{j=0}^{N_1}\mathsf{V}_j\Big\rangle\Big\langle \sum_{j'=0}^{N_1}\mathsf{V}_{j'}\Big|\,.
  \]
 We now aim at tracing out $N-k$ degrees of freedom in the density matrix \eqref{eq:tobetraced}, and therefore, by linearity, in each $|\mathsf{V}_j\rangle\langle\mathsf{V}_{j'}|$. In fact, the $k$-partial trace in each $|\mathsf{V}_j\rangle\langle\mathsf{V}_{j'}|$ with $j\neq j'$ gives zero. This is seen for each summand $|\mathsf{w}_j\otimes\widetilde{\mathsf{w}}_j\rangle\langle\mathsf{w}_{j'}\otimes\widetilde{\mathsf{w}}_{j'}|$ arising from the expansion of $|\mathsf{V}_j\rangle\langle\mathsf{V}_{j'}|$: when tracing out the last $N-k$ positions in $|\mathsf{w}_j\otimes\widetilde{\mathsf{w}}_j\rangle\langle\mathsf{w}_{j'}\otimes\widetilde{\mathsf{w}}_{j'}|$, for fixed $k\in\{1,\dots,N_1\}$, one surely has at least one position between the $(k+1)$-th and the $(N_1+N_2)$-th occupied by $e_1$ on the left and $e_2$ on the right, or vice versa, thereby producing zero $k$-marginal owing to the orthogonality $e_1\perp e_2$. The prototypical case 
 \[
 \begin{split}
   &|\mathsf{w}_1\otimes\widetilde{\mathsf{w}}_1\rangle\langle\mathsf{w}_{2'}\otimes\widetilde{\mathsf{w}}_{2'}| \\
   &\qquad=\;\big|\underbrace{e_2  \underbrace{e_1 \cdots  e_1}_{(k-1)} \underbrace{e_1 \cdots  e_1}_{(N_1-k)}}_{(N_1)} \underbrace{e_1 \underbrace{\pmb{e}_{\pmb{2}} \cdots  e_2}_{(k-1)}\underbrace{e_2\otimes\cdots\otimes e_2}_{(N_2-k)}}_{(N_2)}\big\rangle \big\langle \underbrace{e_2e_2\underbrace{e_1\cdots e_1}_{k-2}\underbrace{e_1\cdots e_1}_{N_1-k}}_{(N_1)} \underbrace{e_1\pmb{e}_{\pmb{1}}\underbrace{e_2\cdots e_2}_{(k-2)}\underbrace{e_2\cdots e_2}_{(N_2-k)}}_{(N_2)} \big|
 \end{split}
 \]
(having suppressed the notation `$\otimes$' in the r.h.s.) shows an example of this occurrence (in bold-face one has $\pmb{e}_{\pmb{2}}$ in position $(N_1+2)$ on the left, as opposite to $\pmb{e}_{\pmb{1}}$ in the same position on the right). Therefore,
\[\tag{ii}\label{eq:trnkvj}
  \gamma_{\mathrm{spin},N_1,N_2}^{(k)}\;=\; \mathrm{Tr}_{[N-k]}\big|e_1^{\otimes N_1}\vee e_2^{\otimes N_2}\big\rangle\big\langle e_1^{\otimes N_1}\vee e_2^{\otimes N_2}\big| 
  \;=\; \binom{N}{N_1}^{\!-1}\mathrm{Tr}_{[N-k]}\bigg(\sum_{j=0}^{N_1}\big|\mathsf{V}_j\big\rangle\big\langle \mathsf{V}_{j}\big|\bigg).
\]
 Next, instead of proceeding by linearity as $\mathrm{Tr}_{[N-k]}\big(\sum_{j=0}^{N_1}\big|\mathsf{V}_j\big\rangle\big\langle \mathsf{V}_{j}\big|\big)=\sum_{j=0}^{N_1}\mathrm{Tr}_{[N-k]}\big|\mathsf{V}_j\big\rangle\big\langle \mathsf{V}_{j}\big|$, it is more convenient to reason along this line: from the overall sum $\sum_{j=0}^{N_1}\big|\mathsf{V}_j\big\rangle\big\langle \mathsf{V}_{j}\big|$ the $k$-partial trace produces by definition (see \eqref{eq:pt1}-\eqref{eq:pt4} above) a sum of operators acting on $(\mathbb{C}^2)^{\otimes k}$, all of the form $|\mathbf{a}\rangle\langle\mathbf{b}|$ and with coefficient $+1$ in front of each of them in the sum, where $\mathbf{a},\mathbf{b}\in (\mathbb{C}^2)^{\otimes k}$ are $k$-fold tensor products of $e_1$'s and $e_2$'s. The precise form and amount of such $|\mathbf{a}\rangle\langle\mathbf{b}|$'s is argued as follows. To begin with, the $\mathbf{a}$'s and $\mathbf{b}$'s are all possible tensor products with $k-\ell$ copies of $e_1$ and $\ell$ copies of $e_2$, in all possible dispositions, and for all $\ell$'s running over $\{0,\dots,k\}$. They are precisely the terms produced by the expansion of
 \[\tag{iii}\label{eq:notreadyyet}
  \binom{k}{k-\ell}\,\big|e_1^{\otimes (k-\ell)}\vee e_2^{\otimes \ell}\big\rangle\big\langle e_1^{\otimes (k-\ell)}\vee e_2^{\otimes \ell}\big|\,.
 \]
 The pre-factor above guarantees that the expansion is an actual combination of all possible terms with coefficient $+1$, since the expansion of each $e_1^{\otimes(k-\ell)}\vee e_2^{\otimes \ell}$ has by definition the overall pre-factor $1/\sqrt{\binom{k}{k-\ell}}$. At every fixed $\ell$, each summand of the expansion of \eqref{eq:notreadyyet} appears in the r.h.s.~of \eqref{eq:trnkvj} counted multiple times. Explicitly, tracing out the last $N-k$ degrees of freedom from $\sum_{j=0}^{N_1}\big|\mathsf{V}_j\big\rangle\big\langle \mathsf{V}_{j}\big|$ produces the term \eqref{eq:notreadyyet} as many times as $\binom{N-k}{N_1-k+\ell}$. Indeed, \eqref{eq:notreadyyet} is an expansion of summands $|\mathbf{a}\rangle\langle\mathbf{b}|$ with $k$ positions on each side, occupied by $k-\ell$ $e_1$'s, and by $k$-partial trace of $\sum_{j=0}^{N_1}\big|\mathsf{V}_j\big\rangle\big\langle \mathsf{V}_{j}\big|$ one recovers \eqref{eq:notreadyyet}  as many times as the number of distinct positions occupied by the remaining $N_1-(k-\ell)$ $e_1$'s in the $N-k$ positions that have been traced out. This number is precisely $\binom{N-k}{N_1-k+\ell}$. One could argue alternatively, and equivalently, in terms of the $e_2$'s, thus getting the number $\binom{N-k}{N_2-\ell}$ (and indeed, $\binom{N-k}{N_2-\ell}=\binom{N-k}{N_1-k+\ell}$, since $N_1+N_2=N$). Therefore,
 \[\tag{iv}\label{eq:nowbigsum}
 \begin{split}
    \gamma_{\mathrm{spin},N_1,N_2}^{(k)}\;&=\;\binom{N}{N_1}^{\!-1}\mathrm{Tr}_{[N-k]}\bigg(\sum_{j=0}^{N_1}\big|\mathsf{V}_j\big\rangle\big\langle \mathsf{V}_{j}\big|\bigg) \\
    &=\;\binom{N}{N_1}^{\!-1}\sum_{\ell=0}^k\binom{N-k}{N_1-k+\ell}\binom{k}{k-\ell}\,\big|e_1^{\otimes(k-\ell)}\vee e_2^{\otimes \ell}\big\rangle\big\langle e_1^{\otimes(k-\ell)}\vee e_2^{\otimes \ell}\big|\,.
 \end{split}
 \]
  Since
  \[
   \begin{split}
    \binom{N}{N_1}^{\!-1}\binom{N-k}{N_1-k+\ell}\binom{k}{k-\ell}\;&=\;\frac{N_{1}!N_{2}!}{N!}\cdot\frac{(N-k)!}{(N_{1}-k+\ell)!(N_{2}-\ell)!}\cdot\frac{k!}{\,(k-\ell)!\,\ell!\,} \\
    &=\;\frac{(N-k)!\,k!}{N!}\cdot\frac{N_1!}{(N_{1}-k+\ell)!(k-\ell)!}\cdot\frac{N_2!}{\,(N_2-\ell)!\,\ell!\,} \\
    &=\;\binom{N}{k}^{-1}\binom{N_1}{k-\ell}\binom{N_2}{\ell}\,,
   \end{split}
  \]
 one finally deduces \eqref{eq:spinor-k-marginal} from \eqref{eq:nowbigsum}, thus completing the proof.
\end{proof}

  \begin{proof}[Proof of Proposition  \ref{prop:spinlimits}]
   In view of \eqref{eq:spinor-k-marginal} and \eqref{eq:gammaspininf},
   \[
    \mathrm{Tr}\,\big|\,\gamma_{\mathrm{spin},N_1,N_2}^{(k)}-\gamma_{\mathrm{spin},\infty}^{(k)}\,\big|\;\leqslant\;\sum_{j=0}^k\big|c_{k,j}-c^\infty_{k,j}\big|\,.
   \]
   On account of Lemma \ref{lem:combinatorial}, there is $a_k>0$ such that
   \[
    |c_{k,j}-c^\infty_{k,j}\big|\;\leqslant\;\frac{a_k}{N}
   \]
   uniformly in $j$. The two estimates above combined together yield \eqref{eq:spinlimits} with $c_k := (k+1) a_k$, precisely as for the constant $c_k$ in the proof of Proposition \ref{prop:asymptoticvicinity}.   
  \end{proof}
  
  \begin{proof}[Proof of Proposition \ref{prop:gammaspinrewritten}]
  
  With the shorthand
  \[
   e_1 \,:=\, \binom{1}{0},\qquad\quad e_2 \,:=\, \binom{0}{1},\qquad v_{1}\,:=\,\sqrt{n_{1}}\,e_{1}\,,\qquad v_{2}\,:=\,\sqrt{n_{2}}\,e_{2}
  \]
  one re-writes
\begin{align*}
    \textrm{r.h.s.~of \eqref{eq:gammaspinrewritten}}\;&\equiv\;\frac{1}{(2\pi)^{2}}\iint_{[0,2\pi]^{2}}\ud\theta_{1}\,\ud\theta_{2}\,\bigg|\binom{\sqrt{n_{1}}\,e^{-\ii\theta_{1}}}{\sqrt{n_{2}}\,e^{-\ii\theta_{2}}}^{\!\!\otimes k}\bigg\rangle\bigg\langle\binom{\sqrt{n_{1}}\,e^{-\ii\theta_{1}}}{\sqrt{n_{2}}\,e^{-\ii\theta_{2}}}^{\!\!\otimes k}\bigg|\\
    &=\;\frac{1}{(2\pi)^{2}}\iint_{[0,2\pi]^{2}}\ud\theta_{1}\,\ud\theta_{2}\,\Big|\big( e^{-\ii\theta_{1}}v_{1}+e^{-\ii\theta_{2}}v_{2}\big)^{\otimes k}\Big\rangle\Big\langle \big( e^{-\ii\theta_{1}}v_{1}+e^{-\ii\theta_{2}}v_{2}\big)^{\otimes k}\Big|\,.  
\end{align*}
 Expanding the two terms $\big( e^{-\ii\theta_{1}}v_{1}+e^{-\ii\theta_{2}}v_{2}\big)^{\otimes k}$ in the integrand above and taking all possible mixed products yields expressions of the form $|\cdots\otimes v_1\otimes\cdots\otimes v_2\otimes\cdots\rangle\langle \cdots\otimes v_1\otimes\cdots\otimes v_2\otimes\cdots|$, with all possible positions for the $v_1$'s and the $v_2$'s, each multiplied by factors of the form $e^{-\ii p\theta_1}$ and/or $e^{-\ii q\theta_2}$ for integers $p,q$. Actually, $p=p_A-p_B$ and $q=q_A-q_B$ for a generic term $|A\rangle\langle B|$ that arises from $p_A$ copies of $e^{-\ii\theta_{1}}v_{1}$ from $|A\rangle$ and $p_B$ copies of $e^{-\ii\theta_{1}}v_{1}$ from $\langle B|$, and $q_A$ copies of $e^{-\ii\theta_{2}}v_{2}$ from $|A\rangle$ and  $q_B$ copies of $e^{-\ii\theta_{2}}v_{2}$ from $\langle B|$. Now, whenever $p_A\neq p_B$ (resp., $q_A\neq q_B$), the integration over $\theta_1$ (resp., over $\theta_2$) of $e^{-\ii p\theta_1}$ (resp., of $e^{-\ii q\theta_2}$) gives zero: all such terms do not contribute. Thus, in the integrand above one is only left with the sum of terms $|A\rangle\langle B|$ of the form
 \[\tag{*}\label{eq:bigsumcombinatorial}
  \Big|\bigotimes_{j\in J_A} v_1 \otimes \bigotimes_{\ell\in L_A} v_2\Big\rangle\Big\langle \bigotimes_{m\in J_B} v_1 \otimes \bigotimes_{n\in L_B} v_2\Big|
 \]
 for all possible choices of index sets $J_A,L_A,J_B,L_B\subset\{1,\dots,k\}$ such that
  \begin{align*}
   &J_A\cap L_A=\emptyset\,,\quad J_A\cup L_A=\{1,\dots,k\}\,, \\
   &J_B\cap L_B=\emptyset\,,\quad J_B\cup L_B=\{1,\dots,k\}\,, \\
   &|J_A|=|J_B|\,,\quad\; |L_A|=|L_B|\,.
  \end{align*}
 In such a sum the overall $(2\pi)^{-2}\iint_{[0,2\pi]^{2}}\ud\theta_{1}\,\ud\theta_{2}$ integration gives trivially 1.
 This means that the r.h.s.~of \eqref{eq:gammaspinrewritten} is the sum of all possible terms \eqref{eq:bigsumcombinatorial} (i.e., their linear combination with coefficient $+1$ in front of each summand).
 In turn, one can suitably re-group the above-mentioned summands and write
 \[
  \frac{1}{(2\pi)^{2}}\iint_{[0,2\pi]^{2}}\ud\theta_{1}\,\ud\theta_{2}\,\bigg|\binom{\sqrt{n_{1}}\,e^{-\ii\theta_{1}}}{\sqrt{n_{2}}\,e^{-\ii\theta_{2}}}^{\!\!\otimes k}\bigg\rangle\bigg\langle\binom{\sqrt{n_{1}}\,e^{-\ii\theta_{1}}}{\sqrt{n_{2}}\,e^{-\ii\theta_{2}}}^{\!\!\otimes k}\bigg|\;=\;\sum_{j=0}^k \mathcal{C}_j\,,
 \]
  where $\mathcal{C}_j$ consists of the sum of only those terms \eqref{eq:bigsumcombinatorial} for which $|J_A|=|J_B|=k-j$, and consequently $|L_A|=|L_B|=j$, i.e., those terms \eqref{eq:bigsumcombinatorial} having $k-j$ vectors $v_1$ and $j$ vectors $v_2$ both on the left and on the right of the structure $|A\rangle\langle B|$, in all possible positions of the corresponding tensor products. This means precisely that
    \[
   \mathcal{C}_j\;=\;
   \binom{k}{j}\big| v_1^{\otimes(k-j)}\vee v_2^{\otimes j}\big\rangle\big\langle v_1^{\otimes(k-j)}\vee v_2^{\otimes j} \big|\,,
  \]
  the $\binom{k}{j}$-pre-factor cancelling the normalisation $\binom{k}{j}^{-\frac{1}{2}}$ arising from the expansion of $v_1^{\otimes(k-j)}\vee v_2^{\otimes j}$ both on the left and on the right.
  Thus,
  \[
   \begin{split}
    \frac{1}{(2\pi)^{2}}\iint_{[0,2\pi]^{2}}&\ud\theta_{1}\,\ud\theta_{2}\,\bigg|\binom{\sqrt{n_{1}}\,e^{-\ii\theta_{1}}}{\sqrt{n_{2}}\,e^{-\ii\theta_{2}}}^{\!\!\otimes k}\bigg\rangle\bigg\langle\binom{\sqrt{n_{1}}\,e^{-\ii\theta_{1}}}{\sqrt{n_{2}}\,e^{-\ii\theta_{2}}}^{\!\!\otimes k}\bigg|\;=\;\sum_{j=0}^k \mathcal{C}_j \\
    &=\;\sum_{j=0}^k \binom{k}{j}\big| v_1^{\otimes(k-j)}\vee v_2^{\otimes j}\big\rangle\big\langle v_1^{\otimes(k-j)}\vee v_2^{\otimes j} \big| \\
    &=\;\sum_{j=0}^k \binom{k}{j}\,n_1^{k-j} n_2^j\,\big| e_1^{\otimes(k-j)}\vee e_2^{\otimes j}\big\rangle\big\langle e_1^{\otimes(k-j)}\vee e_2^{\otimes j} \big|\;=\;\,\gamma_{\mathrm{spin},\infty}^{(k)}\,,
   \end{split}
  \]
  having used \eqref{eq:gammaspininf} in the last step.
  \end{proof}

  \section{Effective mean-field dynamics for spatial degrees of freedom}\label{sec:effectiveMFdynamics}

  We complete the line of reasoning started in Section \ref{sec:scenarioINfinitegap} by proving Theorem \ref{thm:spatialdynamics} concerning the quantitative emergence of the asymptotic mean-field dynamics \eqref{eq:Hatreeforphit} and \eqref{eq:spatialdynamics}.

  This is a question of effective many-body dynamics in the Hilbert space $L^2_\mathrm{sym}(\mathbb{R}^{3N})$ with totally factorised (uncorrelated) initial state
  \begin{equation}
   \Psi_{N,\nu,t=0}^{\mathrm{spat}}\;=\;(\phi^0)^{\otimes N}
  \end{equation}
 ($\phi^0$ being defined in \eqref{eq:phi1phi2phi}) and evolution governed by the mean-field Hamiltonian $H^{\mathrm{spat}}_{N,\nu}$ given by \eqref{eq:HN-HNspat}. As such, it can be answered by means of a variety of very sophisticated techniques that have been developed over the last two decades (with also precursors from classical kinetic theory) for the dynamics of \emph{simple} BEC. In this respect, we already mentioned in Sections \ref{sec:compositeBEC} and \ref{sec:Evolutiongeneral} the works \cite{BEGMY-2002,EY-2001,ESY-2006,ESYinvent,RS-2007,ESY-2008,kp-2009-cmp2010,Pickl-JSP-2010,Pickl-LMP-2011,chen-lee-2010-JMP2011,Chen-Lee-Schlein-2011,Pickl-RMP-2015,Benedikter-DeOliveira-Schlein_QuantitativeGP-2012_CPAM2015,Boccato-Cenatiempo-Schlein-2015_AHP2017_fluctuations,Chen-Lee-Lee-JMP2018-rateHartree,Lee-JSP2019-timedepRate,Bossman-Pavlovic-Pickl-Soffer-2020,NNRT-2020} as main representatives of such an ample spectrum of techniques and results.

  For the model \eqref{eq:nuHilbertspaces}-\eqref{eq:defPsiN2} under consideration we shall employ the scheme of \cite{Chen-Lee-Schlein-2011}, consisting of a control of the fluctuations around the mean-field leading dynamics for marginals in a Fock space framework, with optimal rate of convergence, by means of an a-priori bound on the growth of the kinetic energy with respect to an approximate dynamics with quadratic generator.

  In this Section we discuss the two main adaptations needed from \cite{Chen-Lee-Schlein-2011}, namely the insertion of a trapping potential, which is only alluded to but not worked out in \cite[Remark 2]{Chen-Lee-Schlein-2011}, specifically a \emph{harmonic} potential, and the quantitative control of the rate of convergence in terms of the new parameter $\nu$ present here.

 We shall establish Theorem \ref{thm:spatialdynamics} in a slightly more general setting. Let us start with simple but crucial bounds on the underlying Hartree dynamics.

 \begin{thm}\label{thm:Hartreebounds}
  Let $U\in C^\infty(\mathbb{R}^3,\mathbb{R})$ such that $U\geqslant 0$ and $D^\alpha U\in L^\infty(\mathbb{R}^3)$ for all multi-indices $\alpha$ with $|\alpha|\geqslant 2$, and let $V:\mathbb{R}^3\to\mathbb{R}$ be a measurable function such that $V$ is even-symmetric, $-\Delta+U$ self-adjoint on $L^2(\mathbb{R}^3)$ with quadratic form domain
  \begin{equation}
   \mathcal{Q}\;:=\;\Big\{ f\in L^2(\mathbb{R}^3)\,\Big|\,\|f\|_{\mathcal{Q}}\,:=\,\Big(\|f\|^2_{H^1}+\langle f,Uf\rangle_{L^2}\Big)^{\!\frac{1}{2}}<+\infty\Big\}\,,
  \end{equation}
  and
  \begin{equation}\label{eq:VboundU}
   V^2\;\lesssim\;\mathbbm{1}-\Delta+U
  \end{equation}
  on $\mathcal{Q}$, in the sense of quadratic forms. For $\phi^0\in \mathcal{Q}$ and $\nu>0$, there is a unique solution $\phi\equiv\phi_t(x)$, $(t,x)\in[0,+\infty)\times\mathbb{R}^3$, with 
  \begin{equation}
   \phi\,\in\,C([0,+\infty),\mathcal{Q})\cap C^1([0,+\infty),\mathcal{Q}^*)
  \end{equation}
  ($\mathcal{Q}^*$ being the topological dual of $Q$ with respect to the norm topology induced by $\|\cdot\|_{\mathcal{Q}}$) to the initial value problem
  \begin{equation}\label{eq:Hartreeharmonic}
   \begin{split}
    \ii\partial_t\phi\,&=\,\nu(-\Delta_x)\phi+\nu\,U\phi+(V*|\phi|^2)\phi\,, \\
    \phi_{t=0}\,&=\,\phi^0
   \end{split}
  \end{equation}
  (the convolution being with respect to the $x$-variable). Moreover, for $\nu$ large enough in terms of the given $U,V,\phi^0$, 
  \begin{equation}\label{eq:unifbound}
   \|\phi_t\|^2_{H^1}+\langle\phi_t,U\phi_t\rangle_{L^2}\;\lesssim\;\|\phi^0\|^2_{H^1}+\langle\phi^0,U\phi^0\rangle_{L^2}
  \end{equation}
  (i.e., $\|\phi_t\|_{\mathcal{Q}}\lesssim\|\phi^0\|_{\mathcal{Q}}$) uniformly in time.  
 \end{thm}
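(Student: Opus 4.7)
The plan is to follow a standard energy method: local well-posedness in $\mathcal{Q}$ via a Duhamel contraction, and then global existence together with the uniform bound \eqref{eq:unifbound} simultaneously via mass and energy conservation, the only non-routine step being an interaction estimate derived from \eqref{eq:VboundU} and the at most quadratic growth of $U$.

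For local well-posedness I would run a contraction for the integral equation
\[
\phi_t \,=\, e^{-\ii t\nu(-\Delta+U)}\phi^0 \,-\, \ii\!\int_0^t\! e^{-\ii(t-s)\nu(-\Delta+U)}\bigl(V*|\phi_s|^2\bigr)\phi_s\,\ud s
\]
inside a ball of $C([0,T],\mathcal{Q})$ for some $T$ depending on $\|\phi^0\|_{\mathcal{Q}}$. The unitary group $e^{-\ii\tau\nu(-\Delta+U)}$ preserves the $\mathcal{Q}$-norm \emph{exactly}, since $\|f\|_{\mathcal{Q}}^2=\|f\|_{L^2}^2+\langle f,(-\Delta+U)f\rangle_{L^2}$ and both summands are invariants of a flow generated by a self-adjoint operator commuting with $-\Delta+U$. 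The nonlinearity $F(\phi):=(V*|\phi|^2)\phi$ is locally Lipschitz $\mathcal{Q}\to\mathcal{Q}^*$ as a direct consequence of \eqref{eq:VboundU}. Uniqueness, $C^1$-regularity into $\mathcal{Q}^*$, and continuous dependence on data all come out of the same contraction.

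Global existence and the a priori bound are both obtained from conservation of the mass $\|\phi_t\|_{L^2}$ and of the energy
\[
E[\phi]\,:=\,\nu\bigl\langle\phi,(-\Delta+U)\phi\bigr\rangle_{L^2}+\tfrac{1}{2}\bigl\langle\phi,(V*|\phi|^2)\phi\bigr\rangle_{L^2}\,,
\]
the latter holding because $-\Delta+U$ is self-adjoint and $V$ is even. Rewriting energy conservation as
\[
\nu\,\|\phi_t\|_{\mathcal{Q}}^2\,=\,E[\phi^0]+\nu\|\phi^0\|_{L^2}^2-\tfrac{1}{2}\bigl\langle\phi_t,(V*|\phi_t|^2)\phi_t\bigr\rangle_{L^2}\,,
\]
the whole matter reduces to the key bound
\[
\bigl|\bigl\langle\phi,(V*|\phi|^2)\phi\bigr\rangle\bigr|\,\leqslant\,C\bigl(\|\phi\|_{L^2}\bigr)\,\|\phi\|_{\mathcal{Q}}\,,\qquad \phi\in\mathcal{Q}\,.
\]
I would derive this by applying \eqref{eq:VboundU} to the translate $z\mapsto\phi(z+y)$, which yields
\[
\int_{\mathbb{R}^3}V(x-y)^2|\phi(x)|^2\,\ud x\,\lesssim\,\|\nabla\phi\|_{L^2}^2+\int_{\mathbb{R}^3}U(x-y)|\phi(x)|^2\,\ud x+\|\phi\|_{L^2}^2\,,
\]
and then using that the standing assumption $D^\alpha U\in L^\infty$ for $|\alpha|\geqslant 2$ forces $U$ to be of at most quadratic growth, so an elementary Young-type inequality $U(x-y)\leqslant K(1+U(x)+U(y))$ holds. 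This turns the shifted form bound into $\int V(x-y)^2|\phi(x)|^2\,\ud x\lesssim \|\phi\|_{\mathcal{Q}}^2+U(y)\|\phi\|_{L^2}^2$. Cauchy--Schwarz on the convolution integral then gives the pointwise bound $|(V*|\phi|^2)(y)|\lesssim\|\phi\|_{L^2}\bigl(\|\phi\|_{\mathcal{Q}}+\sqrt{U(y)}\,\|\phi\|_{L^2}\bigr)$, and integrating against $|\phi(y)|^2$ (with $\int\sqrt{U}\,|\phi|^2\leqslant\|\phi\|_{\mathcal{Q}}\|\phi\|_{L^2}$) closes the estimate.

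Inserting the interaction estimate into the energy identity and absorbing the sub-quadratic term $C\|\phi_t\|_{\mathcal{Q}}$ into $\tfrac{\nu}{2}\|\phi_t\|_{\mathcal{Q}}^2$ by Young's inequality -- legitimate once $\nu$ exceeds a threshold depending on $V$, $U$, and $\|\phi^0\|_{L^2}$ -- yields $\|\phi_t\|_{\mathcal{Q}}^2\lesssim \|\phi^0\|_{\mathcal{Q}}^2$ uniformly in $t$. This same bound prevents blow-up of the local solution, delivering global existence in the stated class. The main obstacle is the interaction estimate: hypothesis \eqref{eq:VboundU} is stated in a non-translation-invariant way, so the quantitative fact that $U$ has at most quadratic growth has to be used to transport it to $V(\cdot-y)^2$. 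Once that translation step is in hand, the rest is a routine absorption argument.
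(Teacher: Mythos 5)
Your overall strategy is the paper's: local well-posedness by standard means (the paper simply cites Cazenave), conservation of mass and energy, a rearrangement isolating $\|\phi_t\|_{H^1}^2+\langle\phi_t,U\phi_t\rangle$, and absorption of the interaction term for $\nu$ large. Where you genuinely diverge is in the interaction estimate, and your route is arguably the sounder one. The paper bounds $\|V*|\phi_t|^2\|_{L^\infty}$ by $\tfrac12\mathcal{M}(\phi^0)+\tfrac{\kappa}{2}\big(\|\phi_t\|_{H^1}^2+\int U|\phi_t|^2\big)$ by applying \eqref{eq:VboundU} inside an essential supremum over the translation parameter, obtaining a term quadratic in the $\mathcal{Q}$-norm that is absorbed thanks to the smallness of $\kappa\mathcal{M}(\phi^0)/\nu$; but, exactly as you observe, \eqref{eq:VboundU} is not translation invariant when $U$ is unbounded, and the translated form bound picks up $\int U(\cdot-y)\,|\phi_t|^2$, which is \emph{not} uniformly bounded in $y$ — so the paper's $L^\infty$ step is delicate as written. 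Your version, which keeps the resulting $\sqrt{U(y)}$ tail pointwise and integrates it against $|\phi(y)|^2$ via $\int\sqrt{U}\,|\phi|^2\leqslant\|\phi\|_{\mathcal{Q}}\|\phi\|_{L^2}$, sidesteps the supremum entirely and yields a bound merely linear in $\|\phi\|_{\mathcal{Q}}$, making the Young absorption cleaner. One caveat: the inequality $U(x-y)\leqslant K(1+U(x)+U(y))$ does \emph{not} follow from $U\geqslant 0$ with bounded Hessian alone (a smoothed version of $(x_1)_+^2$ violates it); those hypotheses only give $U(x-y)\leqslant 2U(y)+C(1+|x|^2)$, and to close you still need $|x|^2\lesssim 1+U(x)$. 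This quadratic \emph{lower} growth holds for the harmonic trap $U(x)=x^2$ to which the theorem is applied, but it should be flagged as an extra (harmless in context) assumption. Finally, for the Duhamel contraction to close in $C([0,T],\mathcal{Q})$ you need the nonlinearity to be locally Lipschitz into $\mathcal{Q}$ itself, or Kato's two-norm scheme, not merely Lipschitz into $\mathcal{Q}^*$; this is the standard technical point the paper delegates wholesale to the cited reference.
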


 \begin{proof}
  Standard analysis (see, e.g., \cite[Section 9.2]{cazenave}) establishes the well-posedness of \eqref{eq:Hartreeharmonic} with conservation in time of mass and energy
  \[
   \mathcal{M}(\phi_t)\;:=\;\|\phi_t\|^2_{L^2}\,,\qquad\mathcal{E}(\phi_t)\;:=\;\frac{\nu}{2}\int_{\mathbb{R}^3}|\nabla_x\phi_t|^2+\frac{\nu}{2}\int_{\mathbb{R}^3}U|\phi_t|^2+\frac{1}{4}\int_{\mathbb{R}^3}(V*|\phi_t|^2)|\phi_t|^2\,.
  \]
  From
  \[
   \begin{split}
    \|\phi_t\|_{H^1}^2\;&=\;\|\phi_t\|_{L^2}^2+\|\nabla_x\phi_t\|_{L^2}^2 \\
    &=\;\mathcal{M}(\phi_t)+\frac{2}{\nu}\,\mathcal{E}(\phi_t)-\int_{\mathbb{R}^3}U|\phi_t|^2-\frac{1}{2\nu}\int_{\mathbb{R}^3}(V*|\phi_t|^2)|\phi_t|^2
   \end{split}
  \]
  one finds
  \[
   \|\phi_t\|_{H^1}^2+\int_{\mathbb{R}^3}U|\phi_t|^2\;\leqslant\;\mathcal{M}(\phi^0)+\frac{2}{\nu}\,\mathcal{E}(\phi^0)+\frac{1}{2\nu}\int_{\mathbb{R}^3}|V*|\phi_t|^2|\,|\phi_t|^2\,,
  \]
  and from \eqref{eq:VboundU} one finds
  \[
   \begin{split}
    \big\| V*|\phi_t|^2 \big\|_{L^\infty}\;& \leqslant\;\frac{1}{2}\|\phi_t\|_{L^2}^2+\frac{1}{2}\,\mathrm{ess}\sup_{\!\!\!\!\!\!\!\!x}\int_{\mathbb{R}^3}V^2(x-y)|\phi_t(y)|^2\,\ud y  \\
    &\;\leqslant \frac{1}{2}\,\mathcal{M}(\phi^0)+\frac{\kappa}{2} \,\Big( \|\phi_t\|_{H^1}^2+\int_{\mathbb{R}^3}U|\phi_t|^2\Big)
   \end{split}
 \]
 for some $\kappa>0$ depending only on the given $U$ and $V$. Therefore, for sufficiently large $\nu>0$, so as $\frac{\,\kappa\mathcal{M}(\phi^0)}{4\nu}<1$,
 \[
  \|\phi_t\|_{H^1}^2+\int_{\mathbb{R}^3}U|\phi_t|^2\;\leqslant\;\Big(1-\frac{\,\kappa\mathcal{M}(\phi^0)}{4\nu}\Big)^{-1}\Big(\mathcal{M}(\phi^0)+\frac{\,\mathcal{M}(\phi^0)^2}{4\nu}+\frac{2}{\nu}\,\mathcal{E}(\phi^0)\Big)\,.
 \]
  Observe that
  \[
   \Big(1-\frac{\,\kappa\mathcal{M}(\phi^0)}{4\nu}\Big)^{-1}\Big(\mathcal{M}(\phi^0)+\frac{\,\mathcal{M}(\phi^0)^2}{4\nu}+\frac{2}{\nu}\,\mathcal{E}(\phi^0)\Big)\;\xrightarrow{\;\nu\to +\infty\;}\;\|\phi^0\|_{H^1}^2+\int_{\mathbb{R}^3}U|\phi^0|^2\,,
  \] 
  implying that for all $\nu$'s exceeding a (large) threshold $\nu_0$ which depends on $U,V,\phi^0$ only, one has
  \[
   \|\phi_t\|_{H^1}^2+\int_{\mathbb{R}^3}U|\phi_t|^2\;\leqslant\; C_0\Big( \|\phi^0\|_{H^1}^2+\int_{\mathbb{R}^3}U|\phi^0|^2\Big) 
  \]
 uniformly in time for some $C_0>1$, which is the estimate \eqref{eq:unifbound}.
 \end{proof}

 The main catch from Theorem \ref{thm:Hartreebounds} is the uniform boundedness in time of both $\|\phi_t\|_{H^1}$ and $\langle\phi_t,U\phi_t\rangle_{L^2}$, controlled by the constant $\|\phi^0\|_{H^1}+\langle\phi^0,U\phi^0\rangle_{L^2}$.
 Concerning the well-posedness claim, the same standard arguments \cite[Section 9.2]{cazenave} obviously yield well-posedness in $\mathcal{Q}$ of the modified problem \eqref{eq:Hartreeharmonic-ab} below.

 Here is the slightly more general statement from which we deduce Theorem \ref{thm:spatialdynamics}.

\begin{thm} \label{thm:CLSmain}

Let $U\in C^\infty(\mathbb{R}^3,\mathbb{R})$ such that $U\geqslant 0$ and $D^\alpha U\in L^\infty(\mathbb{R}^3)$ for all multi-indices $\alpha$ with $|\alpha|\geqslant 2$, and let $V:\mathbb{R}^3\to\mathbb{R}$ be a measurable function such that $V$ is even-symmetric, $-\Delta+U$ self-adjoint on $L^2(\mathbb{R}^3)$ with quadratic form domain
  \begin{equation}\label{eq:Qnorm}
   \mathcal{Q}\;:=\;\Big\{ f\in L^2(\mathbb{R}^3)\,\Big|\,\|f\|_{\mathcal{Q}}\,:=\,\Big(\|f\|^2_{H^1}+\langle f,Uf\rangle_{L^2}\Big)^{\!\frac{1}{2}}<+\infty\Big\}\,,
  \end{equation}
  and
  \begin{equation}\label{eq:Vcond}
   V^2\;\leqslant\;\mathsf{q}(\mathbbm{1}-\Delta+U)
  \end{equation}
  on $\mathcal{Q}$, in the sense of quadratic forms, for some $\mathsf{q}>0$.
  Correspondingly, and for $N\in\mathbb{N}$ with $N\geqslant 2$, let 
  \begin{equation}
 \mathsf{H}_{N,\nu }\;:=\;\sum_{j=1}^N\big(\nu (-\Delta_{x_j})+\nu \,U(x_j)\big)+\frac{1}{N}\sum_{1\leqslant\ell<r\leqslant N}V(x_\ell-x_r)
  \end{equation}
  be the self-adjoint operator on $L^2_\mathrm{sym}(\mathbb{R}^{3N})$ with domain of essential self-adjointness $C^\infty_c(\mathbb{R}^{3N})\cap L^2_\mathrm{sym}(\mathbb{R}^{3N})$, let $\phi\equiv\phi_t(x)$, $(t,x)\in\mathbb{R}\times\mathbb{R}^3$, be the unique solution  
  \begin{equation}
   \phi\,\in\,C([0,+\infty),\mathcal{Q})\cap C^1([0,+\infty),\mathcal{Q}^*)
  \end{equation}
   to the initial value problem
  \begin{equation}\label{eq:Hartreeharmonic-ab}
   \begin{split}
    \ii\partial_t\phi\,&=\,\nu (-\Delta_x)\phi+\nu \,U\phi+(V*|\phi|^2)\phi\,, \\
    \phi_{t=0}\,&=\,\phi^0\,,
   \end{split}
  \end{equation}
  and for each $t\geqslant 0$ and $k\in\{1,\dots,N-1\}$ let
  \begin{equation}\label{eq:psinabt}
   \Psi_{N,\nu ,t}\;:=\;e^{-\ii t \mathsf{H}_{N,\nu }}(\phi^0)^{\otimes N}\;\in\;L^2_\mathrm{sym}(\mathbb{R}^{3N})
  \end{equation}
  and 
  \begin{equation}
   \gamma_{N,\nu ,t}^{(k)}\;:=\;\mathrm{Tr}_{[N-k]}|\Psi_{N,\nu ,t}\rangle\langle\Psi_{N,\nu ,t}|\,.
  \end{equation}
  Then, for arbitrary $\nu_0>0$, there are positive constants
  \begin{equation}\label{eq:quantities}
   C_k\;\equiv\;C_k\big(\mathsf{q},\| \phi^0 \|_{\mathcal{Q}},\nu_0\big)
  \end{equation}
  such that 
  \begin{equation}\label{eq:main}
   \mathrm{Tr}\,\big|\,\gamma_{N,\nu ,t}^{(k)}-|\phi_t^{\otimes k}\rangle\langle\phi_t^{\otimes k}| \,\big|\;\leqslant\;\frac{\,C_k \,e^{ t C_k}}{N}
  \end{equation}
  for each $N,t,k$ as above, and each $\nu\geqslant\nu_0$. 
\end{thm}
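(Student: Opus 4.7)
The plan is to adapt the Fock-space coherent-state scheme of Chen-Lee-Schlein to the present setting, accommodating the trapping potential $\nu U$ and the semiclassical parameter $\nu$. I work on the bosonic Fock space $\mathcal{F}=\bigoplus_{n\geqslant 0} L^2_{\mathrm{sym}}(\mathbb{R}^{3n})$, with creation/annihilation operators $a^*(f),a(f)$, number operator $\mathcal{N}$, vacuum $\Omega$, and Weyl operators $W(f):=\exp(a^*(f)-a(f))$. The basic identity $W^*(\sqrt{N}f)a(g)W(\sqrt{N}f)=a(g)+\sqrt{N}\langle g,f\rangle_{L^2}$ lets one relate $(\phi^0)^{\otimes N}$ to the $N$-particle component of the coherent state $W(\sqrt{N}\phi^0)\Omega$, and ultimately express $\gamma_{N,\nu,t}^{(k)}-|\phi_t^{\otimes k}\rangle\langle\phi_t^{\otimes k}|$ through matrix elements of polynomials $a^*(f_1)\cdots a^*(f_k)a(g_1)\cdots a(g_k)$ conjugated by Weyl operators.

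The central object is the fluctuation dynamics
\[
\mathcal{U}_{N,\nu}(t;s)\;:=\;W^*(\sqrt{N}\phi_t)\,e^{-\ii(t-s)\mathsf{H}_{N,\nu}}\,W(\sqrt{N}\phi_s),
\]
with $\phi_t$ the Hartree solution of \eqref{eq:Hartreeharmonic-ab} provided by Theorem \ref{thm:Hartreebounds}. Using the Hartree equation to cancel the linear-in-$(a,a^*)$ terms in $W^*\mathsf{H}_{N,\nu}W$, one obtains $\ii\partial_t\mathcal{U}_{N,\nu}(t;s)=\mathcal{L}_{N,\nu}(t)\mathcal{U}_{N,\nu}(t;s)$, where the generator splits into a quadratic Bogoliubov-type part (containing $\ud\Gamma(\nu(-\Delta+U)+V*|\phi_t|^2)$ plus pairing terms built from the kernel $V(x-y)\phi_t(x)\phi_t(y)$), a cubic remainder of prefactor $N^{-1/2}$, and a quartic remainder of prefactor $N^{-1}$. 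Crucially, the entire $\nu$-dependence enters only through $\ud\Gamma(\nu(-\Delta+U))$, which commutes with $\mathcal{N}$.

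The heart of the argument is the a priori estimate
\[
\big\langle \Omega,\,\mathcal{U}_{N,\nu}^*(t;0)\,(\mathcal{N}+1)^{k+1}\,\mathcal{U}_{N,\nu}(t;0)\,\Omega\big\rangle\;\leqslant\;K\,e^{Kt},
\]
with $K=K(\mathsf{q},\|\phi^0\|_{\mathcal{Q}},\nu_0,k)>0$ independent of $N\geqslant 2$ and $\nu\geqslant\nu_0$, which I would establish by a Grönwall argument on
\[
\partial_t\,\big\langle\Omega,\mathcal{U}^*(\mathcal{N}+1)^{k+1}\mathcal{U}\Omega\big\rangle\;=\;\big\langle\Omega,\mathcal{U}^*\,\ii[\mathcal{L}_{N,\nu}(t),(\mathcal{N}+1)^{k+1}]\,\mathcal{U}\Omega\big\rangle.
\]
The main obstacle lies precisely here: since the $\nu$-heavy quadratic piece commutes with $\mathcal{N}$ and drops out of the commutator, one must bound the cubic and quartic remainder contributions as quadratic forms by a $\nu$-\emph{uniform} constant times $(\mathcal{N}+1)^{k+1}$. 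This is where the uniform bound $\|\phi_t\|_{\mathcal{Q}}\lesssim\|\phi^0\|_{\mathcal{Q}}$ from Theorem \ref{thm:Hartreebounds} becomes essential: combined with the form estimate \eqref{eq:Vcond}, it controls $\|V*|\phi_t|^2\|_{L^\infty}$ and analogous norms of the kernels $V(x-y)\phi_t(x)\phi_t(y)$ uniformly in $\nu\geqslant\nu_0$, and through standard Cauchy-Schwarz type operator inequalities for creation/annihilation operators this closes the Grönwall loop with $\nu$-independent constants.

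Once the above fluctuation bound is secured, the passage to \eqref{eq:main} proceeds through the standard estimate
\[
\mathrm{Tr}\,\big|\gamma_{N,\nu,t}^{(k)}-|\phi_t^{\otimes k}\rangle\langle\phi_t^{\otimes k}|\big|\;\leqslant\;\frac{c_k}{N}\,\big\langle\Omega,\mathcal{U}_{N,\nu}^*(t;0)(\mathcal{N}+1)^{k+1}\mathcal{U}_{N,\nu}(t;0)\Omega\big\rangle,
\]
obtained by rewriting the marginal via Weyl-conjugated polynomials in $a^*,a$ and isolating the coherent-state contribution $|\phi_t^{\otimes k}\rangle\langle\phi_t^{\otimes k}|$ from the fluctuation remainder, whose trace norm is $O(N^{-1})$ times a number-operator expectation in the fluctuation vector. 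Combining this with the a priori bound of the preceding paragraph produces \eqref{eq:main} with $C_k=C_k(\mathsf{q},\|\phi^0\|_{\mathcal{Q}},\nu_0)$, as required.
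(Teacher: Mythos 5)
Your overall framework (Weyl operators, fluctuation dynamics $\mathcal{U}_{N,\nu}(t;s)=W^*(\sqrt{N}\phi_t)e^{-\ii(t-s)\mathsf{H}_{N,\nu}}W(\sqrt{N}\phi_s)$, $\nu$-uniform Gr\"onwall bounds on $(\mathcal{N}+\mathbbm{1})^{j}$ exploiting that the $\nu$-heavy quadratic piece commutes with $\mathcal{N}$, and the uniform-in-time bound $\|\phi_t\|_{\mathcal{Q}}\lesssim\|\phi^0\|_{\mathcal{Q}}$) is the right starting point and matches the paper's strategy. But there is a genuine gap in the final step. The inequality you invoke as ``standard'',
\[
\mathrm{Tr}\,\big|\gamma_{N,\nu,t}^{(k)}-|\phi_t^{\otimes k}\rangle\langle\phi_t^{\otimes k}|\big|\;\leqslant\;\frac{c_k}{N}\,\big\langle\Omega,\mathcal{U}^*(t;0)(\mathcal{N}+\mathbbm{1})^{k+1}\,\mathcal{U}(t;0)\Omega\big\rangle\,,
\]
is false as stated. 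Conjugating $a_y^*a_x$ by $W(\sqrt{N}\phi_t)$ produces, besides the genuinely quadratic-in-fluctuations term of size $N^{-1}\langle\mathcal{U}\Omega,\mathcal{N}\,\mathcal{U}\Omega\rangle$, \emph{linear} terms of the form $N^{-1/2}\,\phi_t(x)\,\langle\mathcal{U}\Omega,a_y^*\,\mathcal{U}\Omega\rangle$ (this is the second summand in the paper's analogue of \cite[Eq.~(4.9)]{Chen-Lee-Schlein-2011}, Eq.~\eqref{eq:diff1}). Estimated brutally by the number-operator bound, these contribute only $O(N^{-1/2})$, not $O(N^{-1})$. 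The optimal rate requires the additional mechanism of Chen--Lee--Schlein that your proposal omits entirely: introducing the \emph{comparison dynamics} $\mathcal{U}_2$ generated by the quadratic truncation $\mathcal{L}_2$ of the fluctuation generator. Since $\mathcal{L}_2$ is quadratic in $a,a^*$, the evolution $\mathcal{U}_2$ preserves the parity of the particle number, so $\langle\mathcal{U}_2\Omega,a_y^*\,\mathcal{U}_2\Omega\rangle=0$ identically; the linear terms then survive only through $\mathcal{U}-\mathcal{U}_2$, which is itself $O(N^{-1/2})$ (the paper's estimate \eqref{eq:comparisondynamics}), giving the overall $O(N^{-1})$. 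Establishing this requires the whole auxiliary apparatus you skipped: the bounds \eqref{lm:7} and \eqref{eq:U2L2U2} on $\mathcal{L}_2$ relative to $\nu\mathcal{K}+\nu\mathcal{W}$ (this is where the uniformity in $\nu\geqslant\nu_0$ is delicate, since one divides by $\nu$), and the control of the quartic term $\mathcal{L}_4$, which in turn forces the regularisation $\widetilde V=\mathrm{sgn}(V)\min\{|V|,N^3\}$ and the three-way splitting \eqref{eq:vicinityCLS} of the trace norm — none of which appears in your argument.

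A secondary but nontrivial omission: $(\phi^0)^{\otimes N}$ is not a coherent state, so relating its marginals to Weyl-conjugated fluctuation expectations brings in the normalisation $d_N=\sqrt{N!}/(N/e)^{N/2}\sim N^{1/4}$ and the projection onto the $N$-particle sector; the resulting prefactors $d_N/N$ and $d_N/\sqrt{N}$ must be absorbed, which is again only possible because the parity cancellation upgrades the naive $N^{-1/2}$ terms. Your one-sentence allusion to the ``$N$-particle component of the coherent state'' does not close this.
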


  Obviously, the symbols $\mathsf{H}_{N,\nu }$, $\Psi_{N,\nu ,t}$, and $\gamma_{N,\nu ,t}^{(k)}$ in this Section correspond to $H_{N,\nu}^{\mathrm{spat}}$, $\Psi_{N,\nu,t}^{\mathrm{spat}}$, and $\gamma_{\mathrm{spat},N,\nu,t}^{(k)}$ used in Section \ref{sec:proofofThmdynamics}. Since only spatial variables are considered in the present Section, the redundant `spat' label is omitted here with no risk of confusion.

  It is also clear that Theorem \ref{thm:spatialdynamics} is a special case of Theorem \ref{thm:CLSmain} with the choice $U(x)=x^2$ (plus an irrelevant shift by an additive constant) and $\phi^0=(2\pi)^{-3/2}e^{-x^2/2}$ (so that $\phi_t$ in Theorem \ref{thm:CLSmain} is precisely $\phi_t$ considered in Theorem \ref{thm:spatialdynamics}).

   In the remaining part of this Section we come to the proof of Theorem \ref{thm:CLSmain}.

   \begin{proof}[Proof of Theorem \ref{thm:CLSmain}]
    As already mentioned at the beginning of this Section, this is an adaptation of the analysis \cite{Chen-Lee-Schlein-2011} to the modified Hamiltonian $\mathsf{H}_{N,\nu }$ that includes here a trapping potential and the parameter $\nu$. To this aim one has to revisit all those steps of the scheme of \cite{Chen-Lee-Schlein-2011} where the quantities $U$ and $\nu$ were not originally accounted for, thereby tracking down the various estimates that are going to produce the final constants \eqref{eq:quantities} in the bound \eqref{eq:main}. We only explicitly work out such steps in the following, while referring to the very clean and accessible presentation of \cite{Chen-Lee-Schlein-2011} for all other aspects, that remain untouched, of the general derivation of the bound \eqref{eq:main}. The reader who is not already familiar with \cite{Chen-Lee-Schlein-2011} would find the proof complete by simply replacing our modifications to \cite{Chen-Lee-Schlein-2011} in the precise points indicated each time here below.

\medskip

\textbf{Regularisation of the interaction.}
As in the original scheme \cite{Chen-Lee-Schlein-2011}, a suitable regularisation of the (possibly singular) interaction $V$ is implemented by setting, for concreteness,
\begin{equation}\label{eq:wtV}
\widetilde V (x) \;:=\;\text{sgn}(V (x)) \cdot \min \{ |V (x)|, N^3 \}\,.
\end{equation}
One therefore replaces in \cite{Chen-Lee-Schlein-2011} the regularised Hamiltonian and the regularised evolution with, respectively,
  \begin{equation}\label{eq:regH}
 \wt{\mathsf{H}}_{N,\nu }\;:=\;\sum_{j=1}^N\big(\nu (-\Delta_{x_j})+\nu \,U(x_j)\big)+\frac{1}{N}\sum_{1\leqslant\ell<r\leqslant N}\wt{V}(x_\ell-x_r)
  \end{equation}
and 
\begin{equation}\label{eq:psinabtREG}
   \wt{\Psi}_{N,\nu ,t}\;:=\;e^{-\ii t \wt{\mathsf{H}}_{N,\nu }}(\phi^0)^{\otimes N}\;\in\;L^2_\mathrm{sym}(\mathbb{R}^{3N})\,.
\end{equation}
Full and regularised marginals are denoted by $\gamma_{N,\nu ,t}^{(k)}$ and $\widetilde{\gamma}_{N,\nu ,t}^{(k)}\,$, and next to the full one-body effective dynamics \eqref{eq:Hartreeharmonic} one also considers the regularised initial value problem 
\begin{equation}\label{eq:hartree-reg} 
   \begin{split}
    \ii\partial_t\widetilde{\phi}\,&=\,\nu (-\Delta_x)\widetilde{\phi}+\nu \,U\widetilde{\phi}+(\widetilde{V}*|\widetilde{\phi}|^2)\widetilde{\phi}\,, \\
    \widetilde{\phi}_{t=0}\,&=\,\phi^0\,.
   \end{split}
  \end{equation}

  \textbf{Vicinity between full and regularised dynamics.} Proceeding as in \cite{Chen-Lee-Schlein-2011}, one then controls the large-$N$ asymptotics of $\gamma_{N,\nu ,t}^{(k)}$ by splitting
  \begin{equation}\label{eq:vicinityCLS}
\begin{split}
   &\mathrm{Tr}\,\big|\,\gamma_{N,\nu ,t}^{(k)}-|\phi_t^{\otimes k}\rangle\langle\phi_t^{\otimes k}| \,\big|\\
   &\leqslant\;
   \mathrm{Tr}\,\big|\,\gamma_{N,\nu ,t}^{(k)}-\wt{\gamma}_{N,\nu ,t}^{(k)}\big|\,
   +\mathrm{Tr}\,\big|\,\wt{\gamma}_{N,\nu ,t}^{(k)}-|\wt{\phi}_t^{\otimes k}\rangle\langle\wt{\phi}_t^{\otimes k}|\big|\,
   +\mathrm{Tr}\,\big|\,|\wt{\phi}_t^{\otimes k}\rangle\langle\wt{\phi}_t^{\otimes k}|-|\phi_t^{\otimes k}\rangle\langle\phi_t^{\otimes k}|\big|\,.
\end{split}
\end{equation}
  The vectors \eqref{eq:psinabt} and \eqref{eq:psinabtREG} are as close as
\begin{equation}\label{eq:vicinity-psi-psitilde}
\big\| {\Psi}_{N,\nu ,t} - \wt{\Psi}_{N,\nu ,t} \big\|_{\cH_N}^2 \;\leqslant\; \mathsf{C}_k^2 N^{-2} \|\phi^0\|_{\mathcal{Q}}^2\, t \,
\end{equation}
 for some ($N$- and $t$-independent) constant $\mathsf{C}_k>0$. This is seen with the very same argument of \cite[Lemma 2.1]{Chen-Lee-Schlein-2011}: the now present external potential terms are cancelled out in the commutator $[{\mathsf{H}}_{N,\nu}-\widetilde{\mathsf{H}}_{N,\nu},\,\cdot\,]$, which is the key tool at the basis of that result -- see \cite[formula (2.5)]{Chen-Lee-Schlein-2011}. Moreover, implementing in the proof of \cite[Lemma 2.1]{Chen-Lee-Schlein-2011} the control of $V$ in terms of $-\Delta$ and $V$ given by \eqref{eq:Vcond} above, and the bound \eqref{eq:unifbound}, do yield $\|\phi^0\|_{\mathcal{Q}}$ in the final estimate \eqref{eq:vicinity-psi-psitilde}, which still incorporates the information of the external potential. This also means that $\mathsf{C}_k$ depends on the parameter $\mathsf{q}$ from \eqref{eq:Vcond}. The above vicinity of $N$-body vectors implies (see \eqref{eq:vicinities}) trace norm vicinity of the corresponding $k$-marginals, with the same bound, that is,
 \begin{equation}\label{eq:firsttraceterm}
\Tr \, \big| \gamma_{N,\nu ,t}^{(k)}-\wt{\gamma}_{N,\nu ,t}^{(k)} \big| \;\leqslant\;  \mathsf{C}_k N^{-1} \|\phi^0\|_{\mathcal{Q}}\, t^{\frac{1}{2}}\,.
\end{equation}
 The same cancellation of the trapping potential $U$ occurs when monitoring the difference $\phi_t - \wt{\phi}_t$ between the two initial value problems \eqref{eq:Hartreeharmonic} and \eqref{eq:hartree-reg}: thus, reasoning precisely as in \cite[Lemma 2.2]{Chen-Lee-Schlein-2011} and using now the bound \eqref{eq:unifbound} one obtains
\begin{equation}\label{eq:ph-wtph}
\| \phi_t - \wt{\phi}_t \|_{L^2} \;\leqslant\; \mathsf{B} N^{-\frac{3}{2}}\|\phi^0\|_{\mathcal{Q}}\,e^{\mathsf{B} t}  
\end{equation}
for some ($N$- and $t$-independent) constant $\mathsf{B}>0$ (depending of $\mathsf{q}$ via \eqref{eq:Vcond}), and consequently 
\begin{equation}\label{eq:gm-wtgm} 
\Tr \; \big| |\phi_t \rangle \langle \phi_t |^{\otimes k} - |\wt{\phi}_t \rangle \langle \wt{\phi}_t |^{\otimes k} \big|\; \leqslant\; 2 \| \phi_t - \wt{\phi}_t \|_{L^2} \;\leqslant\; 2\,\mathsf{B} N^{-\frac{3}{2}}\|\phi^0\|_{\mathcal{Q}}\,e^{\mathsf{B} t}\,. 
\end{equation} 
We combine \eqref{eq:firsttraceterm} and \eqref{eq:gm-wtgm} and deduce from \eqref{eq:vicinityCLS}
\begin{equation}\label{eq:whatremains}
 \mathrm{Tr}\,\big|\,\gamma_{N,\nu ,t}^{(k)}-|\phi_t^{\otimes k}\rangle\langle\phi_t^{\otimes k}| \,\big|\;\leqslant\;\frac{\,\mathsf{A}_k\,e^{\mathsf{A}_k t}}{N}+\mathrm{Tr}\,\big|\,\wt{\gamma}_{N,\nu ,t}^{(k)}-|\wt{\phi}_t^{\otimes k}\rangle\langle\wt{\phi}_t^{\otimes k}|\big|
\end{equation}
 for constants $\mathsf{A}_k\equiv\mathsf{A}_k(\mathsf{q},\|\phi^0\|_{\mathcal{Q}})$.

\medskip

\textbf{Effective regularised dynamics.} To control $\mathrm{Tr}|\,\wt{\gamma}_{N,\nu ,t}^{(k)}-|\wt{\phi}_t^{\otimes k}\rangle\langle\wt{\phi}_t^{\otimes k}||$ in \eqref{eq:whatremains}, one adapts \cite[Proposition 2.1]{Chen-Lee-Schlein-2011} to the current setting. In that scheme, two Fock space dynamics are to be compared, the dynamics of fluctuations with complete content of creation and annihilation operators, and the fluctuation dynamics without cubic and quartic terms. Due to the presence in \eqref{eq:regH} of the external potential $U$ and of the parameter $\nu $, the generators $\mathcal{L}$ and $\mathcal{L}_2$, respectively, of such two dynamics take the form
\begin{equation}\label{eq:cL}
\begin{split} 
\cL (t) \;=\; \; & \nu  \int \ud x \,  \nabla_x a^*_x \nabla_x a_x + \nu  \int \ud x \,  U(x) \, a^*_x a_x + \int \ud x \, (\wt V*|\wt\phi_t|^2 ) (x)\, a^*_x a_x \\ 
&+ \int \ud x \ud y \, \wt V(x-y) a_x^* a_y \wt\phi_t (x) \overline{\wt\phi_t (y)}\\
&+ \int \ud x \ud y \, \wt V(x-y) \left( a_x^* a_y^* \, \wt \phi_t (x) \wt \phi_t (y) +  a_x a_y \, \overline{\wt \phi_t (x)}\, \overline{\wt \phi_t (y)} \right) \\
&+\frac{1}{\sqrt{N}} \int \ud x \ud y \, \wt V(x-y) a_x^* \left( a_y^* \wt \phi_t (y) + a_y \overline{\wt \phi_t (y)} \right) a_x \\
&+\frac{1}{N} \int \ud x \ud y \, \wt V(x-y) a_x^* a_y^* a_y a_x 
\end{split}
\end{equation}
and
\begin{equation}\label{eq:L2} 
\begin{split} 
\cL_2 (t) \;=\; \; & \nu  \int \ud x \, \nabla_x a^*_x \nabla_x a_x + \nu  \int \ud x \,  U(x) \, a^*_x a_x + \int \ud x \,  (\wt V*|\wt \phi_t|^2 ) (x) a^*_x a_x \\
&+ \int \ud x \ud y \, \wt V(x-y) a_x^* a_y \wt \phi_t (x) \overline{\wt \phi_t (y)} \\
&+ \int \ud x \ud y \, \wt V(x-y) \left( a_x^* a_y^* \, \wt \phi_t (x) \wt \phi_t (y) +  a_x a_y \, \overline{\wt \phi_t (x)}\, \overline{\wt \phi_t (y)}\right) \,. 
\end{split}
\end{equation}
They uniquely identify the propagators $\cU$ and $\cU_2$ via, respectively,
\begin{align}
\ii \frac{\ud}{\ud t}  \cU (t;s) \,&=\, \cL (t)\, \cU (t;s)\,, \qquad \quad \cU (s;s) = \mathbbm{1} \qquad\,\,\forall\,t,s\in\mathbb{R}\,,\label{eq:U} \\
\ii \frac{\ud}{\ud t}  \cU_2 (t;s) \,&=\, \cL_2 (t)\, \cU_2 (t;s)\,, \,\quad \quad \cU_2 (s;s) = \mathbbm{1}  \qquad\forall\,t,s\in\mathbb{R}\label{eq:U2}
\end{align}
 ($\mathbbm{1}$ denoting here the identity operator on the Fock space).
 With this modification, \cite[Proposition 2.1]{Chen-Lee-Schlein-2011} can be re-proved, 
 so as to finally obtain the bound
 \begin{equation}\label{trace norm bound}
  \mathrm{Tr}\,\big|\,\wt{\gamma}_{N,\nu ,t}^{(k)}-|\wt{\phi}_t^{\otimes k}\rangle\langle\wt{\phi}_t^{\otimes k}|\big|\;\leqslant\;\frac{\,\mathsf{D}_k \,e^{\mathsf{D}_kt}\,}{N}
 \end{equation}
 for some ($N$- and $t$-independent) constants $\mathsf{D}_k\equiv\mathsf{D}_k(\mathsf{q},\|\phi^0\|_{\mathcal{Q}}, \nu)>0$ appearing in \eqref{trace norm bound}, each of which is uniformly bounded in $\nu$ for $\nu\geqslant\nu_0$, with arbitrary $\nu_0>0$. This requires to adapt to the present setting also the preparatory results devised in \cite{Chen-Lee-Schlein-2011} to obtain their original version of \eqref{trace norm bound}: this is going to be discussed here below. Finally, plugging \eqref{trace norm bound} into \eqref{eq:whatremains} yields the desired estimate \eqref{eq:quantities}-\eqref{eq:main}.
 
  \medskip

\textbf{Bounds on the growth of number of particles.} Several technical results that are crucial for the proof of \cite[Proposition 2.1]{Chen-Lee-Schlein-2011} need be revisited as well, and adapted to the present setting, in order to be applicable to the modified version discussed above which leads to the bound \eqref{trace norm bound}. The first one is the control of the  growth of the number of particles with respect to the evolutions $\cU$ and $\cU_2$. To this aim, the proof of \cite[Proposition 3.3]{RS-2007} can be repeated step by step, now with $\cU$ and $\cU_2$ given by \eqref{eq:cL}-\eqref{eq:L2} and \eqref{eq:U}-\eqref{eq:U2}, with the assumption \eqref{eq:Vcond} on the interaction potential, and with the estimate \eqref{eq:unifbound}. This yields the bounds
\begin{align}
  \| (\cN+\mathbbm{1})^j \,\cU (t;s) \,\psi \|_{\mathcal{F}} \;&\leqslant\; \mathsf{K}_j \; e^{\mathsf{K}_j |t-s|} \| (\cN +\mathbbm{1})^{2j+1} \psi \|_{\mathcal{F}}\,, \label{eq:boundgrowth1} \\
  \| (\cN+\mathbbm{1})^j \,\cU_2 (t;s) \,\psi \|_{\mathcal{F}} \;&\leqslant\; \mathsf{K}_j \; e^{\mathsf{K}_j |t-s|} \| (\cN +\mathbbm{1})^j \psi \|_{\mathcal{F}}\,, \label{eq:boundgrowth2}
 \end{align}
 valid for every $t,s\in\mathbb{R}$, $2j\in\mathbb{N}$, $\psi\in\mathcal{F}$, for some ($N$- and $t,s$-independent) constants $\mathsf{K}_j\equiv\mathsf{K}_j(\mathsf{q},\|\phi\|_{\mathcal{Q}})$, where $\mathcal{F}$ is the underlying Fock space for this analysis ($\|\,\|_{\mathcal{F}}$ is its norm and $\langle\cdot,\cdot\rangle_{\mathcal{F}}$ is its scalar product), and $\mathcal{N}$ is the number operator on $\mathcal{F}$. It is important to stress that this derivation of \eqref{eq:boundgrowth1}-\eqref{eq:boundgrowth2} following \cite[Proposition 3.3 and Lemma 3.5]{RS-2007} shows that the constants $\mathsf{K}_j$ are \emph{independent} of the parameter $\nu$. 
 
 \medskip

   \textbf{Reduced fluctuation dynamics.} Further intermediate results for the above adaptation of \cite[Proposition 2.1]{Chen-Lee-Schlein-2011} are needed, consisting of the adaptation of suitable controls on the generator $\mathcal{L}_2$ of the reduced fluctuation dynamics. To this aim, one considers the kinetic energy operator $\mathcal{K}$ and the external potential operator $\mathcal{W}$ on Fock space defined, respectively, by
   \begin{equation}\label{eq:defKW}
\cK \,:=\, \int \ud x \, \nabla_x a_x^* \, \nabla_x a_x
\qquad
\text{and}
\qquad
\cW \,:=\, \int \ud x \, U(x)\,a_x^* a_x\,.
\end{equation}
\begin{itemize}
 \item First, the proof of \cite[Lemma 6.1]{Chen-Lee-Schlein-2011} can be repeated, now with $\mathcal{L}_2$ defined by \eqref{eq:L2}, thereby replacing the original role of $\mathcal{K}$ with the present $\nu  \,\cK + \nu  \, \cW$, and replacing also the original assumption $V^2\lesssim\mathbbm{1}-\Delta_x$ with the present \eqref{eq:Vcond}: this yields the operator inequalities
   \begin{equation}\label{lm:7}
-\mathsf{C}^{(\mathsf{q})} (\N+\mathbbm{1}) \;\leqslant\; \cL_2(t) - \nu  \,\cK - \nu  \, \cW \;\leqslant\;  \mathsf{C}^{(\mathsf{q})} (\N+\mathbbm{1})\,,
\end{equation} 
 valid uniformly in $t\in\mathbb{R}$, for some constant $\mathsf{C}^{(\mathsf{q})}$
 depending on $\mathsf{q}$ owing to \eqref{eq:unifbound}.
 
 \item Furthermore, the proof of \cite[Lemma 6.2]{Chen-Lee-Schlein-2011} can be repeated as well, again with the present definition \eqref{eq:L2} of $\mathcal{L}_2$, and one obtains the bounds
 \begin{equation}\label{eq:U2L2U2}
 \begin{split}
  \big| \big\langle \U_2 (t;s) \psi\,,\, \cL_2 (t)\, \U_2 (t;s) \psi \big\rangle_{\mathcal{F}} \big| 
\;&\leqslant\; 
e^{\frac{\mathsf{G}}\nu |t-s|} \big\langle \, \psi\,, \cL_2(s) \, \psi \big\rangle_{\mathcal{F}} \\
&\qquad +\frac{\mathsf{G} }\nu  \frac{\,(\nu  +\mathsf{C}^{(\mathsf{q})})\,\mathsf{K}_{\frac{1}{2}}^2\,}{\mathsf{K}_{\frac{1}{2}} - \mathsf{G}/\nu  }\, e^{2\mathsf{K_{\frac{1}{2}}}|t-s|}
\big\langle \psi\,, (\cN + \mathbbm{1}) \, \psi \big\rangle_{\mathcal{F}}
 \end{split}
\end{equation}
 (where $\mathsf{C}^{(\mathsf{q})}$
 and $\mathsf{K}_{\frac{1}{2}}$ are the constants provided, respectively, by \eqref{lm:7} and \eqref{eq:boundgrowth2}), valid for any $t,s\in\mathbb{R}$ and $\psi\in\mathcal{F}$, for some $t,s$-independent constant 
 $\mathsf{G}$ whose dependence is $\mathsf{G}\equiv\mathsf{G}(\mathsf{q},\|\phi^0\|_{\mathcal{Q}})$ owing to \eqref{eq:unifbound} and \eqref{eq:Vcond}. Let us sketch the tricky emergence of $\nu$ in \eqref{eq:U2L2U2}, as an adaptation from \cite[Lemma 6.2]{Chen-Lee-Schlein-2011}. The initial part of the analysis is the same as in the original version and yields
 \[
\left|
\frac{\ud}{\ud t} \big\langle \U_2 (t;s) \psi\,,\, \cL_2 (t)\, \U_2 (t;s) \psi \big\rangle_{\mathcal{F}} 
\right|\;\leqslant\; \mathsf{G}\, \langle \U_2 (t;s) \psi, (\cK + \cN + \mathbbm{1}) \U_2 (t;s) \psi\rangle_{\mathcal{F}}
\]
  for some $\mathsf{G}\equiv\mathsf{G}(\mathsf{q},\|\phi^0\|_{\mathcal{Q}})$ emerging from the application of \eqref{eq:unifbound} and \eqref{eq:Vcond}. 
    Applying \eqref{lm:7} one obtains
  \begin{align*}
&\left|
\frac{\ud}{\ud t} \big\langle \U_2 (t;s) \psi\,,\, \cL_2 (t)\, \U_2 (t;s) \psi \big\rangle_{\mathcal{F}} 
\right|\;\leqslant\; \mathsf{G}\,
\Big\langle \U_2 (t;s) \psi\,, \Big(\frac{\nu  \cK}{\nu } + \cN + \mathbbm{1}\Big) \U_2 (t;s) \psi\Big\rangle_{\!\mathcal{F}}\\
&\qquad\qquad\leqslant\; \mathsf{G}\,
\Big\langle \U_2 (t;s) \psi\,, \,\frac{\,\cL_2(t)-\nu  \,\cW+(\nu  +\mathsf{C}^{(\mathsf{q})})(\cN+\mathbbm{1})}\nu \;\U_2 (t;s) \psi\Big\rangle_{\!\mathcal{F}}\,.
\end{align*}
 Using the positivity of $U$, and hence of $\mathcal{W}$, 
\[
\left|
\frac{\ud}{\ud t}  \big\langle \,\U_2 (t;s) \psi\,, \cL_2(t) \,\U_2 (t;s) \psi \big\rangle_{\mathcal{F}}
\right|\;
\leqslant\;
\frac{\mathsf{G}}\nu 
 \,\big\langle \,\U_2 (t;s) \psi\,, \big(\cL_2(t) + (\nu +\mathsf{C}^{(\mathsf{q})})(\cN + \mathbbm{1}) \big) \,\U_2 (t;s) \psi \big\rangle_{\mathcal{F}}\,.
\]
Then a Gr\"onwall-type argument
and \eqref{eq:boundgrowth2} finally yield \eqref{eq:U2L2U2}.
\end{itemize}

 \medskip
 
\textbf{Comparison of dynamics.} The next technical tool which the original proof of \cite[Proposition 2.1]{Chen-Lee-Schlein-2011} is based upon, and which need be adapted to the present setting in order to validate the above derivation of the bound \eqref{trace norm bound}, is a convenient estimate of the difference between $\cU$ and $\cU_2$. The net result, mirroring \cite[Proposition 6.1]{Chen-Lee-Schlein-2011}, is
\begin{equation}\label{eq:comparisondynamics}
\begin{split}
 &\big\| (\cN+\mathbbm{1})^j \, \left( \cU (t;s) - \cU_2 (t;s)\right) \psi \big\|_{\mathcal{F}}\\
 &\;\leqslant\; 
 \frac{\,\mathsf{M}_j\,e^{\mathsf{M}_j t}}{\sqrt{N}}\, \Big(\big\langle\psi\,,\big(\,\cK+ \,\cW+2 {\textstyle\frac{\,\mathsf{C}^{(\mathsf{q})}+\nu\,}{\nu}} (\cN+\mathbbm{1})\big)\,\psi\big\rangle_{\mathcal{F}}^{\!\frac{1}{2}}+\big\| (\cN+\mathbbm{1})^{6(2j+3)} \psi \big\|_{\mathcal{F}}\Big)\,,
\end{split}
\end{equation}
 valid for every $t,s\in\mathbb{R}$, $j\in\mathbb{N}$, $\psi\in\mathcal{F}$ for some ($N$- and $t$-independent) constants $\mathsf{M}_j\equiv\mathsf{M}_j(\mathsf{q},\|\phi^0\|_{\mathcal{Q}}, \nu)$. Moreover, each $\mathsf{M}_j$ is uniformly bounded in $\nu$ for $\nu\geqslant\nu_0$, with arbitrary $\nu_0>0$.
 This is obtained by re-doing the steps of the proof of \cite[Proposition 6.1]{Chen-Lee-Schlein-2011} with $\cU$ and $\cU_2$ given now by \eqref{eq:cL}-\eqref{eq:L2} and \eqref{eq:U}-\eqref{eq:U2}, and with the assumption \eqref{eq:Vcond}. Let us provide details on that.

 For concreteness, fix $t\geqslant 0$ and $s=0$ (all other cases are treated analogously). From \eqref{eq:U}-\eqref{eq:U2} one writes
\[  \cU (t;0) - \cU_2 (t;0) \;=\; \int_0^t \ud \tau \: \cU (t;\tau) ( \cL (\tau) - \cL_2 (\tau) )\, \cU_2\, (\tau;0)\,. \]
 Using this, re-writing $\cL(\tau)-\cL_2(\tau)=\cL_3(\tau) + \cL_4 $ with
\begin{equation}\label{eq:defL3L4}
\begin{split} \cL_3 (\tau) \;&:=\; \frac{1}{\sqrt{N}} \int \ud x \,\ud y \, \wt V(x-y) \, a_x^* (a_y^* \wt \phi_{\tau} (y) + a_y \overline{\wt \phi_{\tau} (y)}) a_x\,, \\   \cL_4 \;&:=\; \frac{1}{N} \int \ud x\, \ud y \, \wt V(x-y) \, a_x^* a_y^* a_y a_x  \, , \end{split} \end{equation}
 and applying the bounds \eqref{eq:boundgrowth1}-\eqref{eq:boundgrowth2}, we find
  \begin{equation}\label{eq:L3+L4} 
  \begin{split} 
 \big\| (\cN+\mathbbm{1})^j \, ( \cU (t;0) - \cU_2 (t;0) ) \psi \|_{\mathcal{F}} \;&\leqslant \;  \int_0^t \ud \tau \, \big\| (\cN+\mathbbm{1})^j \,\cU (t;\tau) \left( \cL (\tau) - \cL_2 (\tau) \right) \cU_2 (\tau;0) \,\psi \big\|_{\mathcal{F}} \\ 
 &\leqslant \; \mathsf{K}_j\int_0^t \ud \tau \, e^{\mathsf{K}_j (t-\tau)} 
 \big\| (\cN+\mathbbm{1})^{2j+1} \cL_3 (\tau) \,  \cU_2 (\tau;0) \psi \big\|_{\mathcal{F}} \\ 
 & \qquad + \mathsf{K}_j\int_0^t \ud \tau \, e^{\mathsf{K}_j (t-\tau)} \big\| (\cN+\mathbbm{1})^{2j+1} \cL_4 \, \cU_2 (\tau;0) \psi \big\|_{\mathcal{F}}\,.
\end{split} 
\end{equation}
  To estimate the first summand in the r.h.s.~above one re-proves \cite[Lemma 6.3]{Chen-Lee-Schlein-2011}, where now condition \eqref{eq:Vcond} replaces the original assumption $V^2\lesssim\mathbbm{1}-\Delta_x$, and the bound \eqref{eq:unifbound} is used, thereby obtaining
  \[
   \big\| (\cN+\mathbbm{1})^j \cL_3 (\tau) \psi \big\|_{\mathcal{F}} \;\leqslant\; \frac{\mathsf{Q}}{\sqrt{N}}\, \big\| (\cN+\mathbbm{1})^{j+\frac{3}{2}} \psi \big\|_{\mathcal{F}}
  \]
  uniformly in $\tau$ and $j\in\mathbb{N}$, for some ($N$- and $j$-independent) constant $\mathsf{Q}\equiv\mathsf{Q}(\mathsf{q},\|\phi^0\|_{\mathcal{Q}})$. This and the unitarity of $\cU_2 (\tau;0)$ then yield
  \begin{equation}\label{eq:int1toplug}
   \int_0^t \ud \tau \, e^{\mathsf{K}_j (t-\tau)} 
 \big\| (\cN+\mathbbm{1})^{2j+1} \cL_3 (\tau) \,  \cU_2 (\tau;0) \psi \big\|_{\mathcal{F}}\;\leqslant\;\frac{\,\mathsf{Q}\,e^{\mathsf{K}_j t}}{\sqrt{N}}\, \big\| (\cN+\mathbbm{1})^{j+\frac{5}{2}} \psi \big\|_{\mathcal{F}}\,.
  \end{equation}
  The estimate of the second summand in the r.h.s.~of \eqref{eq:L3+L4} requires a control of
  \begin{equation}\label{eq:takeroot}
    \big\| (\cN+\mathbbm{1})^{2j+1} \cL_4 \, \cU_2 (\tau;0) \psi \big\|_{\mathcal{F}}\;=\;\big\langle\, \cU_2 (\tau;0) \psi\,,\,  (\cN+\mathbbm{1})^{2j+1} \cL^2_4\, (\cN+\mathbbm{1})^{2j+1} \,\cU_2 (\tau;0) \psi\big\rangle_{\mathcal{F}}^{\frac{1}{2}}\,.
  \end{equation}
  To this aim, one estimates first the restriction of $(\cN+\mathbbm{1})^{2j+1} \cL^2_4\, (\cN+\mathbbm{1})^{2j+1} $ on the $n$-particle sector $\mathcal{F}_n$ of the Fock space $\mathcal{F}$: repeating this step from the proof of \cite[Proposition 6.1]{Chen-Lee-Schlein-2011} we find  
  \[ \begin{split}
(\cN+\mathbbm{1})^{2j+1} &\cL^2_4\, (\cN+\mathbbm{1})^{2j+1}\Big|_{\mathcal{F}_n} \;= \; \frac{(n+1)^{4j+2}}{N^2} \bigg( \sum_{1\leqslant\ell<r\leqslant n}\wt{V}(x_\ell-x_r) \bigg)^{\!2}\;\leqslant\;\frac{(n+1)^{4j+4}}{N^2}  \!\!\!\sum_{1\leqslant\ell<r\leqslant n}\wt{V}^2(x_\ell-x_r)  \\ 
 &\leqslant\; \frac{\,{\bf 1} \big(n+1 \leqslant N^{\frac{1}{4j+5}}\big)}{N^2} (n+1)^{4j+5} \,\mathsf{q}\sum_{\ell=1}^n (1 -\Delta_{x_\ell} + U(x_\ell)) + N^4(n+1)^{4j+6}\,{\bf 1} \big(n+1 \geqslant N^{\frac{1}{4j+5}}\big)  \\ 
 & \leqslant\;\left.\left( \frac{\,\mathsf{C}^{(\mathsf{q})}}{N} ( \cN + \cK + \cW) + N^4(\cN+\mathbbm{1})^{4j+6}\,{\bf 1} \big(\cN+\mathbbm{1} \geqslant N^{\frac{1}{4j+5}}\big)\right) \right|_{\cF^{(n)}}\,, 
 \end{split} 
\]
 having now applied \eqref{eq:Vcond} (whence the $\mathsf{q}$-dependence) and \eqref{eq:wtV}, tacitly re-naming $\mathsf{C}^{(\mathsf{q})}$ as the largest value between the constant needed here and the constant $\mathsf{C}^{(\mathsf{q})}$ from \eqref{lm:7}. This implies, by means of \eqref{lm:7},
   \[ \begin{split}
(\cN+\mathbbm{1})^{2j+1} &\cL^2_4\, (\cN+\mathbbm{1})^{2j+1}\\ 
 & \leqslant\; \frac{\,\mathsf{C}^{(\mathsf{q})}}{N} \frac{\,\cL_2 (\tau)+ \mathsf{C}^{(\mathsf{q})}(\cN+\mathbbm{1})}\nu  + N^4(\cN+\mathbbm{1})^{4j+6}\,{\bf 1} \big(\cN+\mathbbm{1} \geqslant N^{\frac{1}{4j+5}}\big) \,, 
 \end{split} 
\]
  whence also
  \begin{equation} \label{eq:H2+N}
   \begin{split}
    \big\langle\, \cU_2 (\tau;0) \psi\,,\,&  (\cN+\mathbbm{1})^{2j+1} \cL^2_4\, (\cN+\mathbbm{1})^{2j+1} \,\cU_2 (\tau;0) \psi\big\rangle_{\mathcal{F}} \\
    &\leqslant\;\frac{\,\mathsf{C}^{(\mathsf{q})}}{N}\Big\langle\, \cU_2 (\tau;0) \psi\,,\,\frac{\,\cL_2 (\tau)+ \mathsf{C}^{(\mathsf{q})}(\cN+\mathbbm{1})}\nu  \,\cU_2 (\tau;0) \psi\Big\rangle_{\!\mathcal{F}} \\
    &\qquad + N^4\big\langle\, \cU_2 (\tau;0) \psi\,,\,(\cN+\mathbbm{1})^{4j+6}\,{\bf 1} \big(\cN+\mathbbm{1} \geqslant N^{\frac{1}{4j+5}}\big)\,\cU_2 (\tau;0) \psi\big\rangle_{\mathcal{F}}\,.
   \end{split}
  \end{equation}
  Concerning the first summand in the r.h.s.~of \eqref{eq:H2+N}, one splits 
   \begin{equation*}
    \Big\langle\, \cU_2 (\tau;0) \psi\,,\,\frac{\,\cL_2 (\tau)+ \mathsf{C}^{(\mathsf{q})}(\cN+\mathbbm{1})}\nu  \,\cU_2 (\tau;0) \psi\Big\rangle_{\!\mathcal{F}}\;=\;(\mathrm{I})+(\mathrm{II})
   \end{equation*}
  with
  \[
   \begin{split}
    (\mathrm{I})\;&:=\; \nu ^{-1}\big\langle\, \cU_2 (\tau;0) \psi\,,\,\cL_2 (\tau) \,\cU_2 (\tau;0) \psi\big\rangle_{\!\mathcal{F}}\,, \\
    (\mathrm{II})\;&:=\; \nu ^{-1}\mathsf{C}^{(\mathsf{q})}\big\langle\, \cU_2 (\tau;0) \psi\,,\,(\cN+\mathbbm{1}) \,\cU_2 (\tau;0) \psi\big\rangle_{\!\mathcal{F}}\,.
   \end{split}
  \] 
  For $(\mathrm{I})$, \eqref{lm:7} and \eqref{eq:U2L2U2} give
  \begin{equation*}
  \begin{split}
  (\mathrm{I})&\;\leqslant\;
  e^{\frac{\mathsf{G}}\nu \,\tau} \langle \psi,  \big(\,\cK +  \,\cW + \nu ^{-1}\mathsf{C}^{(\mathsf{q})}(\cN+\mathbbm{1})\big) \psi \rangle_{\mathcal{F}}+\frac{\mathsf{G} }{\:\nu ^2} \frac{\,(\nu +\mathsf{C}^{(\mathsf{q})})\,\mathsf{K}_{\frac{1}{2}}^2\,}{\mathsf{K}_{\frac{1}{2}} - \mathsf{G}/\nu }\, e^{2\mathsf{K_{\frac{1}{2}}}\,\tau}
\big\langle \psi\,, (\cN + \mathbbm{1}) \, \psi \big\rangle_{\mathcal{F}}\,\\
&
\;\leqslant\;
\bigg(1+\frac{\mathsf{G} }{\:\nu ^2} \frac{\,(\nu +\mathsf{C}^{(\mathsf{q})})\,\mathsf{K}_{\frac{1}{2}}^2\,}{\mathsf{K}_{\frac{1}{2}} - \mathsf{G}/\nu }\bigg)\, e^{\mathsf{(\frac{G}\nu +2 K_{\frac{1}{2}}})\,\tau}
\big\langle \psi\,, 
\big(\,\cK +  \,\cW + \nu ^{-1}(\mathsf{C}^{(\mathsf{q})}+\nu )(\cN+\mathbbm{1})\big)
\, \psi \big\rangle_{\mathcal{F}}\,
.
  \end{split}
  \end{equation*}
For $(\mathrm{II})$, \eqref{eq:boundgrowth2} gives
\begin{equation*}
(\mathrm{II})\; 
\leqslant\;
\nu ^{-1}\mathsf{C}^{(\mathsf{q})}\, 
\mathsf{K}_{\frac12}^2\, e^{2\mathsf{K}_{\frac12} \tau}
\big\langle\,\psi\,,\,(\cN+\mathbbm{1}) \, \psi\big\rangle_{\!\mathcal{F}}\,.
\end{equation*}
 Thus,
   \begin{equation}\label{eq:firstRHStoplug}
   \begin{split}
     &\Big\langle\, \cU_2 (\tau;0) \psi\,,\,\frac{\,\cL_2 (\tau)+ \mathsf{C}^{(\mathsf{q})}(\cN+\mathbbm{1})}\nu  \,\cU_2 (\tau;0) \psi\Big\rangle_{\!\mathcal{F}}\;=\;(\mathrm{I})+(\mathrm{II}) \\
     &\quad\leqslant\;\bigg(1+\frac{\mathsf{G}}{\:\nu ^{2}}\frac{\,(\nu +\mathsf{C}^{(\mathsf{q})})\,\mathsf{K}_{\frac{1}{2}}^{2}\,}{\mathsf{K}_{\frac{1}{2}}-\mathsf{G}/\nu }+\frac{\,\mathsf{C}^{(\mathsf{q})}}{\nu}\bigg)\,e^{\mathsf{(\frac{G}\nu +2K_{\frac{1}{2}}})\,\tau}\,\big\langle\psi\,,\big(\,\cK+ \,\cW+2\nu ^{-1}(\mathsf{C}^{(\mathsf{q})}+\nu )(\cN+\mathbbm{1})\big)\,\psi\big\rangle_{\mathcal{F}}\,.
   \end{split}
   \end{equation}  
   On the other hand, the obvious inequality $\mathbf{1}(\lambda\geqslant 1)\leqslant \lambda^{5(4j+5)}$ and the bound \eqref{eq:boundgrowth2} yield
 \begin{equation}\label{eq:secondRHStoplug}
  N^4\big\langle\, \cU_2 (\tau;0) \psi\,,\,(\cN+\mathbbm{1})^{4j+6}\,{\bf 1} \big(\cN+\mathbbm{1} \geqslant N^{\frac{1}{4j+5}}\big)\,\cU_2 (\tau;0) \psi\big\rangle_{\mathcal{F}}\;\leqslant\;\frac{\,\mathsf{K}_{6(2j+3)}^2 e^{\tau\,\mathsf{K}_{6(2j+3)}}}{N}\,\big\| (\cN+\mathbbm{1})^{6(2j+3)}\psi\big\|^2_{\mathcal{F}}\,.
 \end{equation}
  Plugging \eqref{eq:firstRHStoplug} and \eqref{eq:secondRHStoplug} into \eqref{eq:H2+N} yields, in view of \eqref{eq:takeroot},  
  \begin{equation}\label{eq:int2toplug}
  \begin{split}
&\int_{0}^{t}\ud\tau\,e^{\mathsf{K}_{j}(t-\tau)}\big\|(\cN+\mathbbm{1})^{2j+1}\cL_{4}\,\cU_{2}(\tau;0)\psi\big\|_{\mathcal{F}}\\
&\;\leqslant\;e^{\mathsf{K}_{j}t}\sqrt{\frac{\mathsf{C}^{(\mathsf{q})}}{N}}\int_{0}^{t}\ud\tau\,\sqrt{(\mathrm{I})+(\mathrm{II})\,}+e^{\mathsf{K}_{j}t}\,\frac{\mathsf{K}_{6(2j+3)}}{\sqrt{N}}\,\int_{0}^{t}\ud\tau\,e^{\frac{1}{2}\tau\,\mathsf{K}_{6(2j+3)}}\big\|(\cN+\mathbbm{1})^{6(2j+3)}\psi\big\|_{\mathcal{F}}\\
&\;\leqslant\;\sqrt{\frac{\mathsf{C}^{(\mathsf{q})}}{N}}\bigg(1+\frac{\mathsf{G}}{\:\nu ^{2}}\frac{\,(\nu +\mathsf{C}^{(\mathsf{q})})\,\mathsf{K}_{\frac{1}{2}}^{2}\,}{\mathsf{K}_{\frac{1}{2}}-\mathsf{G}/\nu }+\frac{\,\mathsf{C}^{(\mathsf{q})}}{\nu}\bigg)^{\!\frac{1}{2}}\,\frac{\,e^{\mathsf{(\frac{G}{2\nu} +K_{\frac{1}{2}}}+\mathsf{K}_j)\,t}}{\mathsf{\,\frac{G}{2\nu} +K_{\frac{1}{2}}}}\big\langle\psi\,,\big(\,\cK+ \,\cW+2\nu ^{-1}(\mathsf{C}^{(\mathsf{q})}+\nu )(\cN+\mathbbm{1})\big)\,\psi\big\rangle_{\mathcal{F}}^{\frac{1}{2}}\\
&\quad+\frac{2}{\sqrt{N\,}\,}\,e^{\frac{1}{2}t\,(\mathsf{K}_{6(2j+3)}+\mathsf{K}_j)}\,\big\|(\cN+\mathbbm{1})^{6(2j+3)}\psi\big\|_{\mathcal{F}}\\
&\leqslant\;            \frac{1}{\sqrt{N\,}\,}\Bigg(\bigg(1+\frac{\mathsf{G}}{\:\nu ^{2}}\frac{\,(\nu +\mathsf{C}^{(\mathsf{q})})\,\mathsf{K}_{\frac{1}{2}}^{2}\,}{\mathsf{K}_{\frac{1}{2}}-\mathsf{G}/\nu }+\frac{\,\mathsf{C}^{(\mathsf{q})}}{\nu}\bigg)^{\!\frac{1}{2}}+2\Bigg)\,e^{(\frac{\mathsf{G}}{2\nu} +\mathsf{K}_{\frac{1}{2}}+\frac{1}{2}\mathsf{K}_{6(2j+3)}+\mathsf{K}_j)\,t}\,\times\\
&\qquad\qquad\times \Big(\big\langle\psi\,,\big(\,\cK+ \,\cW+2\nu ^{-1}(\mathsf{C}^{(\mathsf{q})}+\nu )(\cN+\mathbbm{1})\big)\,\psi\big\rangle_{\mathcal{F}}^{1/2}+\big\langle\psi\,,(\cN+\mathbbm{1})^{12(2j+3)}\,\psi\big\rangle_{\mathcal{F}}^{1/2}\Big)\,.
   \end{split}
  \end{equation}
  In turn, plugging \eqref{eq:int1toplug} and \eqref{eq:int2toplug} into \eqref{eq:L3+L4}, and setting
  \begin{equation}
   \mathsf{M}_j\;:=\;\Bigg(\bigg(1+\frac{\mathsf{G}}{\:\nu ^{2}}\frac{\,(\nu +\mathsf{C}^{(\mathsf{q})})\,\mathsf{K}_{\frac{1}{2}}^{2}\,}{\mathsf{K}_{\frac{1}{2}}-\mathsf{G}/\nu }+\frac{\,\mathsf{C}^{(\mathsf{q})}}{\nu}\bigg)^{\!\frac{1}{2}}+2\Bigg)+\Big(\frac{\mathsf{G}}{2\nu} +\mathsf{K}_{\frac{1}{2}}+\frac{1}{2}\mathsf{K}_{6(2j+3)}+\mathsf{K}_j\Big)
  \end{equation}
  finally produces \eqref{eq:comparisondynamics}. Observe that $\mathsf{M}_j$ has a finite limit as $\nu\to +\infty$ and therefore it is indeed uniformly bounded in $\nu$ for $\nu$ larger than any fixed threshold $\nu_0>0$.

   \medskip
   
   \textbf{Combining the preparations together for the bound \eqref{trace norm bound}.} One finally obtains \eqref{trace norm bound} following, with the obvious adaptations due to the modified estimates above, the scheme of \cite[Section 4]{Chen-Lee-Schlein-2011}. According to a customary line of reasoning in this context, one computes the trace norm in the l.h.s.~of \eqref{trace norm bound} by duality with compact operators. For concreteness when $k=1$ (the case of generic $k$ is completely analogous), and for any compact hermitian $J$ on $L^2(\mathbb{R}^3)$, acting with an integral kernel $J(x;y)$, one obtains 
    \begin{equation}\label{eq:diff1}
\begin{split}
\tr \, J \big( \wt\gamma^{(1)}_{N,\nu,t}  - |\wt \phi_t \rangle \langle \wt \phi_t| \big) \;&=\;\iint_{\mathbb{R}^3\times\mathbb{R}^3} \mathrm{d}x \,\mathrm{d}y \, J(x;y) \left(\wt \gamma^{(1)}_{N,\nu,t} (y;x) - \wt \phi_t (y) \, \overline{\wt \phi_t (x)}\, \right)  \\
&=\; \frac{d_N}{N} \bigg\langle \frac{a^* (\widetilde{\phi}_t)^N}{\sqrt{N!}}\, \Omega \,, W (\sqrt{N} \widetilde{\phi}_t) \,\cU^* (t;0) \,\,\ud\Gamma (J) \,\cU(t;0) \,\Omega\bigg\rangle_{\!\mathcal{F}} \\ 
&\qquad + \frac{\:d_N}{\sqrt{N\,}\,} \bigg\langle \frac{a^* (\widetilde{\phi}_t)^N}{\sqrt{N!}} \,\Omega \,, W (\sqrt{N} \widetilde{\phi}_t)\, \cU^* (t;0)\, \Phi (J \widetilde{\phi}_t) \,\cU(t;0)\, \Omega\bigg\rangle_{\!\mathcal{F}}
\end{split}
\end{equation}
 precisely as \cite[Eq.~(4.9)]{Chen-Lee-Schlein-2011}, with the usual meaning of symbols for the Fock space vacuum vector $\Omega$, the second quantisation mapping $\ud\Gamma(\cdot)$, the creation operator $a^*(\cdot)$, the Weyl operator $W(\cdot)$, the Segal operator $\Phi(\cdot)$ (with the normalisation convention $\Phi(f)=\overline{a^*(f)+a(f)}$\,,  $f\in L^2(\mathbb{R}^3)$), with $\mathcal{U}(t;0)$ now given by \eqref{eq:cL} and \eqref{eq:U}, and with 
 \begin{equation}
  d_N\;:=\;\frac{\sqrt{N!\,}}{\,(N/e)^{\frac{N}{2}}}\;\lesssim\;N^{\frac{1}{4}}\,.
 \end{equation}
 Then, as customary, the first summand of the r.h.s.~in \eqref{eq:diff1} is estimated, by inserting $\mathbbm{1}=(\mathcal{N}+\mathbbm{1})^{-\frac{1}{2}}(\mathcal{N}+\mathbbm{1})^{\frac{1}{2}}$ and by a Schwarz inequality, that leads to estimating the quantity
 \[
  \big\|(\mathcal{N}+\mathbbm{1})^{\frac{1}{2}}\cU^* (t;0) \,\,\ud\Gamma (J) \,\cU(t;0) \,\Omega\big\|_{\mathcal{F}}\,,
 \]
 which is now done by means of \eqref{eq:boundgrowth1}.
 The second summand of the r.h.s.~in \eqref{eq:diff1} is treated by splitting
 \[
  \cU(t;0)\;=\;\big(\,\cU(t;0)-\cU_2(t;0)\big)+\cU_2(t;0)\,.
 \]
 The terms involving the difference $\cU(t;0)-\cU_2(t;0)$ with the comparison dynamics are suitably dealt with by means of the bounds \eqref{eq:comparisondynamics} (in particular the analogue of \cite[Proposition 6.2]{Chen-Lee-Schlein-2011} is re-obtained with the very same arguments therein, except for using now the inequality \eqref{lm:7} in the form 
 \[
\cK \;=\; \frac{\nu \,\cK}\nu \; \leqslant\; \frac{\,\cL_2(t) - \nu \,\cW + \mathsf{C}^{(\mathsf{q})}(\cN+\mathbbm{1})\,}\nu
\]
 and adapting the subsequent arguments accordingly, in the same spirit of the reasoning that yielded   \eqref{eq:comparisondynamics} here. This way, one reproduces the whole scheme leading to \eqref{trace norm bound} -- we refer to \cite[Section 4]{Chen-Lee-Schlein-2011} for details. In particular, having tracked down the dependence of all relevant constants on the parameters of interest, it is eventually possible to cast them into final constants $\mathsf{D}_k\equiv\mathsf{D}_k(\mathsf{q},\|\phi^0\|_{\mathcal{Q}}, \nu)>0$ appearing in \eqref{trace norm bound}, each of which is uniformly bounded in $\nu$ for $\nu\geqslant\nu_0$, with arbitrary $\nu_0>0$.

 \medskip
 
 The proof of  Theorem \ref{thm:CLSmain} is thus completed.
 \end{proof}

\vspace{2em}
\section*{Acknowledgements}
This project is partially supported by the Deutsche Forschungsgemeinschaft under the Excellence Strategy no.~EXC-2111-390814868 (J.~L.), by the  Italian National Institute for Higher Mathematics -- INdAM (A.~M.), and by the von Humboldt Foundation, Bonn (A.M.). J.~L.~and A.~M.~gratefully acknowledge the kind hospitality, respectively, of the Center for Advanced Studies CAS-LMU Munich, and of the Mathematical Institute at the Silesian University Opava, where the final part of this project was completed. Both authors are indebted with D.~Dimonte for insightful discussions on the subject, and in particular on the article \cite{DimFalcOlg21-fragmented}.

\vspace{2em}

\def\cprime{$'$}

\end{document}